\definecolor{violet}{rgb}{0.5,0,0.5}
\definecolor{vertCitron}{rgb}{0.6,0.8,0.2}
\definecolor{fuchsia}{rgb}{0.82,0,0.37}
\definecolor{turquoise}{rgb}{0,0.40,0.55}
\definecolor{orange}{rgb}{0.93,0.43,0.06}
\newcommand{\oran}{\textcolor{orange}}
\definecolor{ciel}{rgb}{0,100,120}
\newcommand{\blan}{\textcolor{white}}
\newcommand{\sub}[1]{\ensuremath{Sub\left(#1\right)}}
\newcommand{\sem}[1]{\ensuremath{\left\llbracket#1\right\rrbracket}}
\newcommand{\dest}[1]{\ensuremath{Dest\left(#1\right)}}
\newcommand{\Tree}{tree-like\xspace}
\newcommand{\ATA}{OCATA\xspace}
\newcommand{\TATA}{TOCATA\xspace}
\newcommand{\nbclocks}[1]{\ensuremath{\left\|#1\right\|}}
\newcommand{\configs}[1]{\ensuremath{\mathrm{Config}\left(#1\right)}}
\newcommand{\Aa}{\ensuremath{\mathcal{A}}}
\newcommand{\Bb}{\ensuremath{\mathcal{B}}}
\newcommand{\Cc}{\ensuremath{\mathcal{C}}}
\newcommand{\Ee}{\ensuremath{\mathcal{E}}}
\newcommand{\Ii}{\ensuremath{\mathcal{I}}}
\newcommand{\Ss}{\ensuremath{\mathcal{S}}}
\newcommand{\Tt}{\ensuremath{\mathcal{T}}}
\newcommand{\Zz}{\ensuremath{\mathcal{Z}}}
\newcommand{\appf}[1]{\ensuremath{\mathit{APP}_{#1}}}
\newcommand{\timestep}[1]{\ensuremath{\overset{#1}{\rightsquigarrow}}}
\newcommand{\appfunc}[1]{\ensuremath{f_{#1}^\star}}
\newcommand{\mername}{\ensuremath{{\sf Merge}}\xspace}
\newcommand{\mer}[1]{\ensuremath{\mername\left(#1\right)}}
\newcommand{\re}[1]{\ensuremath{{\sf reset}\left(#1\right)}}
\newcommand{\gu}[1]{\ensuremath{{\sf guard}\left(#1\right)}}
\newcommand{\de}[1]{\ensuremath{{\sf loop}\left(#1\right)}}
\newcommand{\loc}{{\sf loc}}
\newcommand{\SubLTL}[1]{\ensuremath{Sub_{[0,+\infty)} \left(#1\right)}\xspace}
\newcommand{\R}{\mathbb{R}}
\newcommand{\N}{\mathbb{N}}
\def\ninf{\N^{+\infty}}
\def\dest{\ensuremath{{\sf dest}}}
\def\succ{\ensuremath{{\sf Succ}}}
\def\tts#1{\ensuremath{{\sf TTS}\left(#1\right)}}
\def\mhts#1{\ensuremath{{\sf MHTS}\left(#1\right)}}
\def\conf#1{\ensuremath{{\sf conf}\left(#1\right)}}
\def\cmax{\ensuremath{\text{c}_{\text{max}}}}
\title{On MITL and alternating timed automata over infinite
  words\vspace*{-.2cm}}
\author{Thomas Brihaye${}^1$ \and Morgane
  Estiévenart${}^1$\thanks{This author has been supported by a FRIA
    scholarship.}  \and Gilles Geeraerts${}^2$}
\institute{${}^1$ UMons, Belgium, ${}^2$ U.L.B., Belgium\vspace*{-0.6cm}}
\begin{document}

\maketitle

\thispagestyle{plain}

\begin{abstract}
  \emph{One clock alternating timed automata} (\ATA) have been
  introduced as natural extension of (one clock) timed automata to
  express the semantics of MTL \cite{OW05}. In this paper, we consider
  the application of \ATA to the problems of model-checking and
  satisfiability for MITL (a syntactic fragment of MTL), interpreted
  over infinite words.  Our approach is based on the \emph{interval
    semantics} (recently introduced in \cite{BEG13} in the case of
  finite words) extended to infinite words. We propose region-based
  and zone-based algorithms, based on this semantics, for MITL
  model-checking and satisfiability. We report on the performance of 
  a prototype tool implementing those algorithms.
\end{abstract}

\section{Introduction}
\emph{Model-checking} \cite{CGP01} is today one of the most prominent
and successful techniques to establish \emph{automatically} the
correctness of a computer system. The system designer provides a
\emph{model-checker} with a \emph{model} of the system and a
\emph{formal property} that the system must respect. The model-checker
either proofs that the system respects the property, or outputs an
error trace that can be used for debugging. For their implementation,
many model-checkers rely on the so-called \emph{automata-based
  approach}, where the behaviours of the system and the set of
\emph{bad behaviours} are represented by the languages $L(B)$ and
$L(A_{\neg\varphi})$ of Büchi automata $B$ and $A_{\neg \varphi}$
respectively. Then, the model-checker performs automata-based
manipulations to check whether $L(B)\cap
L(A_{\neg\varphi})=\emptyset$.  While Büchi automata are adequate for
modeling systems, properties are more easily expressed by means of
logical sentences. The linear temporal logic (LTL for short) is
arguably one of the most studied logic to express such
requirements. Algorithms to turn an LTL formula $\varphi$ into a Büchi
automaton $A_\varphi$ recognising the same language are well-known,
thereby enabling its use in model-checkers.

Yet, this classical theory is not adequate for reasoning about
\emph{real-time} properties of systems, because Büchi automata and LTL
can only express \emph{sequence of events}, but have no notion of
(time) \emph{distance} between those events. Introduced by Alur and
Dill in 1994 \cite{AD94}, \emph{timed automata} (an extension of Büchi
automata with clocks, i.e. real variables that evolve synchronously)
are today the best accepted model for those real-time
systems. Symmetrically, several logics have been introduced to specify
real-time properties of systems. Among them, MITL (a syntactic
fragment of MTL \cite{K90}) is particularly appealing, because it
combines expressiveness \cite{AFH96} and tractability (MTL is mostly
undecidable \cite{AFH96}, while model-checking and satisfiability are
\textsc{ExpSpace}-c in MITL).
A comprehensive and efficient automata-based framework to support MITL
model-checking (and other problems such as satisfiability) is thus
highly desirable. In a recent work \cite{BEG13} we made a first step
towards this goal in the restricted case of \emph{finite words
  semantics}. We rely on \emph{one-clock alternating timed automata}
(\ATA for short), in order to avoid the direct, yet involved,
translation from MITL to timed automata first introduced in
\cite{AFH96}. The translation from MITL to \ATA\ -- which has been
introduced by Ouaknine and Worrell in the general case of MTL
\cite{OW05} -- is straightforward. However, the main difficulty with
alternating timed automata is that they cannot, in general, be
converted into an equivalent timed automaton, even in the one-clock
case. Indeed, a run of an alternating automaton can be understood as
several copies of the same automaton running in parallel on the same
word. Unfortunately, the clock values of all the copies are not always
synchronised, and one cannot bound, a priori, the number of different
clock values that one must track along the run. Hence, contrary to the
untimed word case, subset construction techniques cannot be directly
applied to turn an \ATA into a timed automaton (with finitely many
clocks).

Our solution \cite{BEG13} amounts to considering an alternative
semantics for \ATA, that we call the \emph{interval semantics}, where
clock valuations are not punctual values but intervals with real
endpoints. One of the features of this semantic is that several clock
values can be grouped into intervals, thanks to a so-called
\emph{approximation function}. For instance, consider a configuration
of an \ATA with three copies of the automata currently in the same
location $\ell$, and with clock values $0.42$, $1.2$ and $5.7$
respectively. It can be approximated by a single interval $[0.42,
5.7]$, meaning: `there are two copies witch clock values $0.42$ and
$5.7$, and there are \emph{potentially} several copies with clock
values in the interval $[0.42, 5.7]$'. This technique allows to reduce
the number of variables needed to track the clock values of the
\ATA. While this grouping yields an under-approximation of the
accepted language, we have showed \cite{BEG13} that, in the case of
finite words, and for an \ATA $A_\varphi$ obtained from an MITL
formula $\varphi$, one can always define an approximation function
s.t. the language of $A_\varphi$ is preserved and the number of intervals
along all runs is bounded by a constant $M(\varphi)$ depending on the
formula.  Using classical subset construction and tracking the
endpoints of each interval by means of a pair of clocks, we can then
translate the \ATA into a Büchi TA accepting the same language.

In the present work, we continue this line of research and demonstrate
that our techniques carry on to the infinite words case. Achieving
this result is not straightforward because \ATA on infinite words have
not been studied as deeply as in the finite words case, probably
because infinite words language emptiness of \ATA is decidable only on
restricted subclasses \cite{OW05,PW12}. Hence, to reach our goal, we
make several technical contributions regarding infinite words \ATA,
that might be of interest outside this work. First, in
Section~\ref{sec:an-interv-semant} we adapt the interval semantics of
\cite{BEG13} to the infinite words case. Then, in
Section~\ref{sec:from-mitl-timed}, we introduce \emph{tree-like} \ATA
(\TATA for short), a subclass of \ATA that exhibit some structure akin
to a tree (in the same spirit as the Weak and Very Weak Alternating Automata
\cite{KV01, GO01}).  For every MTL formula
$\varphi$, we show that the \ATA $A_\varphi$ obtained by the Ouaknine
and Worrell construction \cite{OW05} recognises the language of
$\varphi$ (a property that had never been established in the case of
infinite words, as far as we know\footnote{Even in \cite{OW05} where
  the authors consider a fragment of MTL over infinite words, but
  consider only safety properties that are reduced to questions on
  finite words.}), is in fact a \TATA. This shows in particular that
\TATA are semantically different from the `weak \ATA' introduced
in~\cite{PW12} (where `weak' refers to weak accepting conditions), 
and whose emptiness problem is decidable. We prove
specific properties of \TATA that are important in our constructions
(for instance, \TATA on infinite words can be easily complemented),
and we adapt the classical Miyano and Hayashi construction \cite{MH84}
to obtain a procedure to translate any \TATA $A_\varphi$ obtained from
an MITL formula into an equivalent timed Büchi automaton
$\Bb_\varphi$.  Equipped with these theoretical results, we propose in
Section~\ref{sec:experimental-results} algorithms to solve the
\emph{satisfiability} and \emph{model-checking} problems of MITL. We
define region-based and zone-based \cite{ADOQW08} versions of our
algorithms. Our algorithms work \emph{on-the-fly} in the sense that
they work directly on the structure of the \ATA $\Aa_\varphi$ (whose
size is linear in the size of $\varphi$), and avoid building
$\Bb_\varphi$ beforehand (which is, in the worst case, exponential in
the size of $\varphi$).  Finally in
Section~\ref{sec:experimental-results}, we present prototype tools
implementing those algorithms. To the best of our knowledge, these are
the first tools solving those problems for the full MITL. We report on
and compare their performance against a benchmark of MITL formulas
whose sizes are parametrised. While still preliminary, the results are
encouraging.

\section{Preliminaries}
\paragraph{Basic notions.} Let $\R$, $\R^+$ and $\N$ denote the sets of
real, non-negative real and natural numbers respectively. We call
\textbf{\textit{interval}} a convex subset of $\R$. We rely on the
classical notation $\langle a,b\rangle$ for intervals, where $\langle$
is $($ or $[$, $\rangle$ is $)$ or $]$, $a\in\R$ and
$b\in\R\cup\{+\infty\}$. For an interval $I=\langle a,b\rangle$, we
let $\inf(I)=a$ be the \emph{infimum} of $I$, $\sup(I)=b$ be its
\emph{supremum} ($a$ and $b$ are called the \emph{endpoints} of $I$)
and $\vert I \vert = \sup(I) - \inf(I)$ be its \emph{length}.  We note
$\mathcal{I}(\R)$ the set of all intervals.  We note
$\mathcal{I}(\R^{+})$ (resp. $\mathcal{I}(\ninf)$) the set of all
intervals whose endpoints are in $\R^{+}$ (resp. in $\N \cup \lbrace
+\infty \rbrace$).  Let $I \in \mathcal{I}(\R)$ and $t \in \R$, we
note $I+t$ for $\lbrace i+t \in\R \mid i \in I \rbrace$.  Let $I$ and
$J$ be two intervals, we let $I < J$ iff $\forall i \in I, \forall j
\in J : i < j$.

Let $\Sigma$ be a finite alphabet. An \emph{infinite word} on a set $S$ is an
infinite sequence $s=s_1 s_2s_3 \ldots$ of elements in $S$.  An infinite time sequence $\bar{\tau} =
\tau_{1} \tau_{2} \tau_{3} \ldots$ is an infinite word on $\R^+$
s.t. $\forall i \in \N, \tau_{i} \leq \tau_{i+1}$. An \emph{infinite timed word} over $\Sigma$ is a pair $\theta = (\bar{\sigma},\bar{\tau})$ where $\bar{\sigma}$ is an infinite word over $\Sigma$, $\bar{\tau}$ an infinite time sequence. We also note $\theta$ as
$(\sigma_{1},\tau_{1}) (\sigma_{2},\tau_{2}) (\sigma_{3},\tau_{3})
\ldots$. We denote by $T\Sigma^\omega$ the set of all infinite timed words. A \emph{timed language} is a (possibly infinite) set of infinite timed words.

\paragraph{Metric Interval Time Logic.} Given a finite alphabet
$\Sigma$, the formulas of MITL are defined by the following grammar,
where $\sigma \in \Sigma$, $I \in \mathcal{I}(\ninf)$ is non-singular:
\begin{center}
$\varphi$ := $\; \top \quad \vert \quad \sigma \quad \vert \quad \varphi_{1} \wedge \varphi_{2} \quad \vert \quad \neg \varphi \quad \vert \quad \varphi_{1} U_{I} \varphi_{2}$.
\end{center}
We rely on the following usual shortcuts $\lozenge_{I} \varphi$ stands
for $\top U_{I} \varphi$, $\qed_{I} \varphi$ for $\neg \lozenge_{I}
\neg \varphi$, $\varphi_{1} \tilde{U}_{I} \varphi_{2}$ for $\neg (
\neg \varphi_{1} U_{I} \neg \varphi_{2})$, $\qed \varphi$ for
$\qed_{[0,\infty)} \varphi$ and $\lozenge \varphi$ for
$\lozenge_{[0,\infty)} \varphi$.

Given an MITL formula $\varphi$, we note $Sub(\varphi)$ the set of all
subformulas of $\varphi$, i.e.: $\sub{\varphi}=\{\varphi\}$ when
$\varphi\in\{\top\}\cup\Sigma$,
$\sub{\neg\varphi}=\{\neg\varphi\}\cup\sub{\varphi}$ and
$\sub{\varphi}=\{\varphi\}\cup\sub{\varphi_1}\cup\sub{\varphi_2}$ when
$\varphi=\varphi_1U_I\varphi_2$ or $\varphi=\varphi_1\wedge\varphi_2$. We
let $|\varphi|$ denote the \emph{size of $\varphi$}, defined as the
\emph{number} of $U$ or $\tilde{U}$ modalities it contains.

\begin{definition}[Semantics of MITL] Given an infinite timed word $\theta =
  (\bar{\sigma},\bar{\tau})$ over $\Sigma$, a position $i\in \N_{0}$ and an 
  MITL formula $\varphi$, we say that $\theta$
  satisfies $\varphi$ from position $i$, written $(\theta, i) \models
  \varphi$ iff the following holds:

  \noindent\begin{tabularx}{\textwidth}{ll}
    $(\theta,i) \models \top $ &
    $(\theta,i) \models \varphi_{1}
    \wedge \varphi_{2}$ iff  $(\theta,i) \models \varphi_{1}$ and
    $(\theta,i) \models \varphi_2$ \\
    $(\theta,i) \models \sigma$ iff $\sigma_{i} = \sigma$\phantom{aaaaaaaa} &
    $(\theta,i) \models \neg \varphi$ iff $(\theta,i) \nvDash \varphi$ \\
    \multicolumn{2}{X}{%
      $(\theta,i) \models \varphi _{1}U_{I} \varphi _{2}$ iff
      $\exists j \geq i$: $(\theta,i)
      \models \varphi _{2}$, $\tau_{j}-\tau_{i} \in I$ $\wedge$ $\forall
      i \leq k < j$: $(\theta,k) \models \varphi_{1}$
   }
  \end{tabularx}
  We say that \textbf{\textit{$\theta$ satisfies $\varphi$}}, written
  $\theta \models \varphi$, iff $(\theta,1) \models \varphi$.  We note
  $\sem{\varphi}$ the timed language $\{\theta \mid \theta \models \varphi\}$.
\end{definition}
Observe that, for every MITL formula $\varphi$, $\sem{\varphi}$ is a timed
language and that we can transform any MITL formula in an equivalent
MITL formula in \emph{negative normal form} (in which negation can
only be present on letters $\sigma \in \Sigma$) using the operators:
$\wedge, \vee, \neg, U_{I}$ and $\tilde{U}_{I}$.

\begin{example}
  We can express the fact that `\textit{every occurrence of $p$ is
    followed by an occurrence of $q$ between 2 and 3 time units
    later}' by: $\Box (p \Rightarrow \lozenge_{[2,3]} q)$. Its
  negation, $\neg \big(\Box (p \Rightarrow \lozenge_{[2,3]} q )
  \big)$, is equivalent to the following negative normal form formula:
  $\top U_{[0,+\infty)} ( p \wedge \perp \tilde{U}_{[2,3]} \neg q )$.
\end{example}

\paragraph{Alternating timed automata.} One-clock alternating timed
automata (\ATA for short) have been introduced by Ouaknine and Worrell
to define the language of MTL formulas \cite{OW07}. We will rely on
\ATA to build our automata based framework for MITL. Let $\Gamma(L)$
be a set of formulas of the form $\top$, or $\perp$, or $\gamma_{1}
\vee \gamma_{2}$ or $\gamma_{1}$ $\wedge$ $\gamma_{2}$ or $\ell$ or
$x \bowtie c$ or $x.\gamma$, with
$c \in \N$, ${\bowtie} \in \lbrace <, \leq, >, \geq \rbrace$, $\ell
\in L$.  We call $x \bowtie c$ a \emph{clock constraint}.  Then, a
\emph{one-clock alternating timed automaton} (\ATA) \cite{OW07} is a
tuple $\mathcal{A} = (\Sigma, L, \ell_{0}, F, \delta)$ where $\Sigma$
is a finite alphabet, $L$ is a finite set of locations, $\ell_{0}$ is
the initial location, $F \subseteq L$ is a set of accepting locations,
$\delta : L \times \Sigma \rightarrow \Gamma(L)$ is the transition
function. Intuitively, disjunctions in $\delta(\ell)$ model
non-determinism, conjunctions model the creation of several copies of
the automata running in parallel (that must all accept for the word to
be accepted) and $x.\gamma$ means that $x$ (the clock of the \ATA) is
reset when taking the transition.

We assume that, for all $\gamma_1$, $\gamma_2$ in $\Gamma(L)$:
$x.(\gamma_{1} \vee \gamma_{2}) = x.\gamma_{1} \vee x.\gamma_{2}$,
$x.(\gamma_{1} \wedge \gamma_{2}) = x.\gamma_{1} \wedge x.\gamma_{2}$,
$x.x.\gamma = x.\gamma$, $x.(x \bowtie c) = 0 \bowtie c$, $x.\top =
\top$ and $x.\perp = \perp$. Thus, we can write any formula of
$\Gamma(L)$ in disjunctive normal form, and, from now on, we assume
that $\delta (\ell,\sigma)$ is written in disjunctive normal
form. That is, for all $\ell$, $\sigma$, we have $\delta (\ell,\sigma)
= \underset{j}{\bigvee} \underset{k}{\bigwedge} A_{j,k}$, where each
term $A_{j,k}$ is of the form $\ell$, $x.\ell$, $x \bowtie c$ or $0
\bowtie c$, with $\ell \in L$ and $c \in \N$.  We call \emph{arc} of
the \ATA $\mathcal{A}$ a triple $(\ell, \sigma, \bigwedge_k A_{j,k})$
s.t. $\bigwedge_k A_{j,k}$ is a disjunct in $\delta (\ell,\sigma)$.
For an arc $a = (\ell, \sigma, \bigwedge_k A_{j,k})$, we note 
$Start(a) = \ell$, $Dest(a) = \{ \ell' \; \vert \; \ell' \text{ is 
a location present in } \bigwedge_k A_{j,k}\}$ and we say that $a$ is
labeled by $\sigma$.

\begin{example}
\label{ex1}
Fig.~\ref{ExGilles} (left) shows an \ATA $\Aa_{\varphi}=\big\{\Sigma,
\{\ell_\Box,\ell_\lozenge\}, \ell_\Box, \{\ell_\Box\}, \delta\big\}$,
over the alphabet $\Sigma = \lbrace a, b \rbrace$, and with transition
function: $\delta (\ell_\Box, a) = \ell_\Box \wedge x.\ell_\lozenge$,
$\delta (\ell_\Box, b) = \ell_\Box$, $\delta (\ell_\lozenge, a) =
\ell_\lozenge$ and $\delta (\ell_\lozenge, b) = \ell_\lozenge \vee ( x
\geq 1 \wedge x \leq 2 )$.
We depict a conjunctive transition such as $\delta (\ell_\Box, a) =
\ell_\Box \wedge x.\ell_\lozenge$ by an arrow splitting in two
branches connected to $\ell_\Box$ and $\ell_\lozenge$ (they might have
different resets: the reset of clock $x$ is depicted by $x:=0$).
Intuitively, when reading an $a$ from $\ell_\Box$ with clock value
$v$, the automaton starts \emph{two copies of itself}, the former in
location $\ell_\Box$, with clock value $v$, the latter in location
$\ell_\lozenge$ with clock value $0$.  Both copies should accept the
suffix for the word to be accepted. The edge labeled by $b,x\in[1,2]$
from $\ell_\lozenge$ has no target location: it depicts the fact that,
when the automaton has a copy in location $\ell_\lozenge$ with a clock
valuation in $[1,2]$, the copy accepts all further suffixes and can
thus be \emph{removed} from the automaton.
\end{example}

\section{The intervals semantics for \ATA on infinite words\label{sec:an-interv-semant}}

In this section, we adapt to infinite timed words the intervals
semantics introduced in \cite{BEG13}. In this semantics,
\emph{configurations} are sets of \emph{states} $(\ell, I)$, where
$\ell$ is a location of the \ATA and $I$ is an \emph{interval} (while
in the standard semantics states are pairs $(\ell, v)$, where $v$ is
the valuation of the clock). Intuitively, a state $(\ell, I)$ is an
abstraction of a set of states of the form $(\ell,v)$ with $v\in I$.

Formally, a \emph{state} of an \ATA $\mathcal{A} = (\Sigma, L,
\ell_{0}, F, \delta)$ is a pair $(\ell,I)$ where $\ell \in L$ and $I
\in \mathcal{I}(\R^{+})$.  We note $S = L \times \mathcal{I}(\R^{+})$
the state space of $\mathcal{A}$.
When $I = [v,v]$ (sometimes denoted $I=\{v\}$), we shorten $(\ell,I)$
by $(\ell,v)$. A \emph{configuration} of an \ATA $\mathcal{A}$ is a
(possibly empty) finite set of states of $\mathcal{A}$ in which all
intervals associated with a same location are disjoint.  In the rest
of the paper, we sometimes see a configuration $C$ as a function from
$L$ to $2^{\Ii(\R^+)}$ s.t. for all $\ell\in L$: $C(\ell)=\{I\mid
(\ell,I)\in C\}$. We note $\configs{\Aa}$ the set of all configurations of
$\mathcal{A}$.  The \emph{initial configuration} of $\mathcal{A}$ is
$\lbrace(\ell_{0},0)\rbrace$.  For a configuration $C$ and a delay $t
\in \R^{+}$, we note $C+t$ the configuration $\lbrace (\ell,I+t) \mid
(\ell,I) \in C \rbrace$. From now on, we assume that, for all
configurations $C$ and all locations $\ell$: when writing $C(\ell)$ as
$\{I_1,\ldots, I_m\}$ we have $ I_{i} < I_{i+1}$ for all $1 \leq i <
m$. Let $E$ be a finite set of intervals from
$\mathcal{I}(\R^{+})$. We let $\nbclocks{E}=|\{[a,a]\in E\}|+2\times
|\{I\in E\mid \inf(I)\neq\sup(I)\}|$ denote the number of
\emph{individual clocks} we need to encode all the information present
in $E$, using one clock to track singular intervals, and two clocks to
retain $\inf(I)$ and $\sup(I)$ respectively for non-singular intervals
$I$. For a configuration $C$, we let $\nbclocks{C}=\sum_{\ell\in L}
\nbclocks{C(\ell)}$.

\paragraph{Interval semantics \cite{BEG13}.} Let $M\in\configs{\Aa}$ be a configuration of an
\ATA \Aa, and $I \in \mathcal{I}(\R^{+})$. We define the satisfaction
relation "$\models_{I}$" on $\Gamma(L)$~as:
$$
\begin{array}{ll}
  M \models_{I} \top\\
  M \models_{I} \gamma_{1} \wedge \gamma_{2} &\text{ iff } M \models_{I} \gamma_{1} \text{ and } M \models_{I} \gamma_{2}\\
  M \models_{I} \gamma_{1} \vee \gamma_{2} &\text{ iff } M \models_{I} \gamma_{1} \text{ or } M \models_{I} \gamma_{2}
\end{array}
\qquad
\begin{array}{ll}
  M \models_{I} \ell  &\text{ iff } (\ell,I) \in M\\
  M \models_{I} x \bowtie c &\text{ iff } \forall x \in I, x \bowtie c\\
  M \models_{I} x.\gamma &\text{ iff } M \models_{[0,0]} \gamma
\end{array}
$$
We say that a configuration $M$ is a \textit{minimal model} of the formula $\gamma
\in \Gamma(L)$ wrt the interval $I \in
\mathcal{I}(\R^{+}) \text{ iff } M \models_{I} \gamma$ and there is no
$M' \subsetneq M$ such that ${M' \models_{I} \gamma}$.
Intuitively, for $\ell \in L, \sigma \in \Sigma$ and $I \in
\mathcal{I}(\R^{+})$, a minimal model of $\delta(\ell,\sigma)$ wrt
$I$ represents a configuration the automaton can reach from
state $(\ell,I)$ by reading $\sigma$. Observe that the definition of
$M \models_{I} x \bowtie c$ only allows to take a transition
$\delta(\ell,\sigma)$ from state $(\ell,I)$ if all the values in $I$
satisfy the clock constraint $x \bowtie c$ of $\delta(\ell,\sigma)$.
We denote $\succ((\ell,I),\sigma) = \{ M\mid M \text{ is a minimal
  model of } \delta(\ell,\sigma) \text{ wrt } I \}$. We lift the
definition of $\succ$ to configurations $C$ as follows:
$\succ(C,\sigma)$ is the set of all configurations $C'$ of the form
$\cup_{s\in C} M_{s}$, where, for all $s\in C$:
$M_{s}\in\succ(s,\sigma)$. That is, each $C'\in\succ(C,\sigma)$ is
obtained by chosing one minimal model $M_s$ in $\succ(s,\sigma)$ for
each $s\in C$, and taking the union of all those $M_s$.

\begin{example}
  Let us consider again the \ATA of Fig.~\ref{ExGilles} (left), and
  let us compute the minimal models of $\delta(\ell_\lozenge,
  b)=\ell_\lozenge \vee ( x \geq 1 \wedge x \leq 2 )$ wrt to
  $[1.5,2]$. A minimal model of $\ell_\lozenge $ wrt $[1.5,2]$ is
  $M_1=\{(\ell_\lozenge, [1.5,2]) \}$. A minimal model of $( x \geq 1
  \wedge x \leq 2 )$ is $M_2=\emptyset$ since all values in $[1.5,2]$
  satisfy $( x \geq 1 \wedge x \leq 2 )$. As $M_2 \subseteq M_1$,
  $M_2$ is the unique minimal model of $\delta(\ell_\lozenge, b)$ wrt $[1.5,2]$:
  $\succ((\ell_\lozenge,[1.5,2]),b)=\{ M_2 \}$.
\end{example}

\paragraph{Approximation functions} Let us now recall the notion of
\emph{approximation functions} that associate with each configuration
$C$, a set of configurations that \emph{approximates} $C$ \emph{and
  contains less states} than $C$.  Formally, for an \ATA $\Aa$, an
\emph{approximation function} is a function $f:\configs{\Aa}\mapsto
2^{\configs{\Aa}}$ s.t. for all configurations $C$, for all $C'\in
f(C)$, for all locations $\ell\in L$: $(i)$
\begin{inparaenum}[(i)]
\item   $\nbclocks{C'(\ell)}\leq\nbclocks{C(\ell)}$;
\item for all $I\in
  C(\ell)$, there exists $J\in C'(\ell)$ s.t. $I \subseteq J$; and
\item for all $J \in C'(\ell)$, there are $I_1, I_2 \in C(\ell)$
  s.t. $\inf(J) = \inf(I_1)$ and $\sup(J) = \sup(I_2)$.
\end{inparaenum}
We note $\appf{\Aa}$ the set of approximation functions for~$\Aa$. We
lift all approximation functions $f$ to sets $\Cc$ of configurations
in the usual way: $f(\Cc)=\cup_{C\in \Cc}f(C)$.
In the rest of the paper we will rely mainly on approximation
functions that enable to \emph{bound} the number of clock copies in
all configurations along all runs of an \ATA $\Aa$.  Let $k \in \N$,
we say that $f \in\appf{\Aa}$ is a \emph{$k$-bounded approximation
  function} iff for all $ C \in \configs{\Aa}$, for all $C' \in f(C)$:
$\nbclocks {C'} \leq k$.

\paragraph{$f$-Runs of \ATA} We can now define formally the notion of
\emph{run} of an \ATA in the interval semantics. This notion will be
parametrised by an approximation function $f$, that will be used to
reduce the number of states present in each configuration along the
run. Each new configuration in the run is thus obtained in three
steps: letting time elapse, performing a discrete step, and applying
the approximation function.  Formally, let $\mathcal{A}$ be an \ATA of
state space $S$, $f\in\appf{\Aa}$ be an approximation function and
$\theta = (\sigma_{1},\tau_{1}) (\sigma_{2},\tau_{2}) \ldots
(\sigma_{i},\tau_{i})\ldots$ be an infinite timed word. Let us note
$t_{i} = \tau_{i} - \tau_{i-1}$ for all $i \geq 1$, assuming $\tau_{0}
= 0$. An \emph{$f$-run of $\Aa$ on $\theta$} is an infinite sequence
$C_0,C_1,\ldots,C_i,\ldots$ of configurations s.t.:
$C_0=\{(\ell_0,0)\}$ and for all $i\geq 1$: $C_{i}\in
f(\succ(C_{i-1}+t_i,\sigma_i))$. Observe that for all pairs of
configurations $C$, $C'$ s.t. $C'\in f(\succ(C+t,\sigma))$ for some
$f$, $t$ and $\sigma$, each $s\in C$ can be associated with a unique
set $\dest(C,C',s)\subseteq C'$ containing all the `successors' of $s$
in $C'$ and obtained as follows. Let
$\overline{C}\in\succ(C+t,\sigma)$ be s.t. $C'\in
f(\overline{C})$. Thus, by definition,
$\overline{C}=\cup_{\overline{s}\in C} M_{\overline{s}}$, where each
$M_{\overline{s}}\in\succ(\overline{s},\sigma)$ is the minimal model
that has been chosen for $\overline{s}$ when computing
$\succ(C,\sigma)$. Then, $\dest(C,C',s)=\{(\ell',J)\in C'\mid
(\ell',I)\in M_s\textrm{ and } I\subseteq J\}$. Remark that
$\dest(C,C',s)$ is well-defined because intervals are assumed to be
disjoint in configurations. The function $\dest$ allows to define a
DAG representation of runs, as is usual with alternating automata. We
regard a run $\pi=C_0,C_1,\ldots,C_i,\ldots$ as a rooted DAG
$G_\pi=(V,\rightarrow)$, whose vertices $V$ correspond to the states
of the \ATA (vertices at depth $i$ correspond to $C_i$), and whose set
of edges $\rightarrow$ expresses the \ATA transitions. Formally,
$V=\cup_{i\geq 0} V_i$, where for all $i\geq 0$: $V_i=\{(s,i)\mid s\in
C_i\}$ is the set of all vertices of depth $i$. The root of $G_\pi$ is
$((\ell_0,0),0)$. Finally, $(s_1,i_1)\rightarrow (s_2,i_2)$ iff
$i_2=i_1+1$ and $s_2\in\dest(C_{i-1}, C_i,s_1)$. From now on, we will
mainly rely on the DAG characterisation of $f$-runs.

\begin{example}
  Fig.~\ref{reprun} displays three DAG representation of run prefixes
  of $\Aa_\varphi$ (Fig.~\ref{ExGilles}), on the word
  $(a,0.1)(a,0.2)(a,1.9)(b,2)(b,3)\ldots$ (grey boxes highlight the
  successive configurations). $\pi$ only is an $Id$-run and shows why
  the number of clock copies cannot be bounded in general: if
  $\Aa_{\varphi}$ reads $n$ $a$'s between instants $0$ and $1$, $n$
  copies of the clock are created in location $\ell_{\lozenge}$.
\end{example}

Please find below an equivalent definition of an $f$-run in which the time elapsing and discrete transition (i.e. the acting of reading a letter) are distinguished. The definition below is inspired from the classical definition of run of an \ATA \cite{OW07} and is more convenient to manipulate in the proves of our results.
\begin{definition}
\label{DefTS}
  Let $\mathcal{A}$ be an \ATA and let $f\in\appf{\Aa}$ be an
  approximation function.  The \emph{$f$-semantics of $\Aa$} is the
  transition system $\mathcal{T}_{\mathcal{A}, f} = (\configs{\Aa},
  \rightsquigarrow,\break \longrightarrow_{f})$ on configurations of
  $\mathcal{A}$ defined as follows:
  \begin{itemize}
  \item the transition relation $\rightsquigarrow$ takes care of the
    elapsing of time: $\forall t\in \R^{+}, C
    \overset{t}{\rightsquigarrow} C' \text{ iff } C' = C+t$. We let
    ${\rightsquigarrow} = \underset{t \in \R^{+}}{\bigcup}
    \overset{t}{\rightsquigarrow}$.
  \item the transition relation $\longrightarrow$ takes care of
    discrete transitions between locations and of the approximation:
    $C \overset{\sigma}{\longrightarrow} C'$ iff 
    $C' \in f(Succ(C,\sigma))$. 
    We let $\longrightarrow_{f} = \underset{\sigma \in
      \Sigma}{\bigcup} \overset{\sigma}{\longrightarrow_{f}}$.
  \end{itemize}
\end{definition}
Let $\theta = (\bar{\sigma},\bar{\tau})$
be an infinite timed word and let $f\in\appf{\Aa}$ be
an approximation function.  Let us note $t_{i} = \tau_{i} -
\tau_{i-1}$ for all $i \geq 1$, assuming
$\tau_{0} = 0$.  An \emph{$f$-run} of $\Aa$ (of state space $S$) on $\theta$ is also an infinite
sequence of discrete and continuous transitions in $\Tt_{\Aa,f}$ that
is labelled by $\theta$, i.e. a sequence of the form: $ C_{0}
\overset{t_{1}}{\rightsquigarrow} C_{1}
\overset{\sigma_{1}}{\longrightarrow_{f}} C_{2}
\overset{t_{2}}{\rightsquigarrow} C_{3}
\overset{\sigma_{2}}{\longrightarrow_{f}}
\dots \overset{t_{i}}{\rightsquigarrow} C_{2i-1}
\overset{\sigma_{i}}{\longrightarrow_{f}} C_{2i} \dots$.\\
In the rest of the paper,
we (sometimes) use the abbreviation $C_{i} \overset{t,
  \sigma}{\longrightarrow_{f}} C_{i+2}$ for $C_{i}
\overset{t}{\rightsquigarrow} C_{i+1} = C_{i}+t
\overset{\sigma}{\longrightarrow_{f}} C_{i+2}$: the sequence of configurations $C_0, C_2, C_4, \dots,$ $C_{2i}, \dots$ corresponds to the sequence of configurations of depth $0, 1, 2, \dots, i, \dots$ (respectively) in the DAG defined previously, i.e. $V_0, V_1, V_2, \dots, V_i, \dots$.  On Fig.~\ref{reprun}, the $Id$-run $\pi$ of the automaton $\Aa_\varphi$ of Fig.~\ref{ExGilles} is represented as a DAG and as a sequence of $\overset{\sigma}{\longrightarrow_{Id}}$ transitions (above) and as a DAG (below).  This example should convince the reader of the equivalence of these definitions.

\paragraph{$f$-language of \ATA} We can now define the accepted
language of an \ATA, parameterised by an approximation function $f$. A
\emph{branch} of an $f$-run $G$ is a (finite or) infinite path in
$G_\pi$. We note $Bran^\omega(G)$ the set of all \emph{infinite}
branches of $G_\pi$ and, for a branch $\beta$, we note $Infty(\beta)$
the set of locations occurring infinitely often along $\beta$.  An
$f$-run is \emph{accepting} iff $\forall \beta \in Bran^\omega(G)$,
$Infty(\beta) \cap F \neq \emptyset$ (i.e. we consider B\"uchi 
acceptance condition)\todo{T: Parenthese OK? On ne disait nulle part 
explicitement que notre condition d acceptation etait Buchi. Vu la 
footnote sur co-Buchi, ca me semblait bien de l'ajouter.}.
We say that an infinite timed
word $\theta$ is $f$-accepted by $\mathcal{A}$ iff there exists an
accepting $f$-run of $\mathcal{A}$ on $\theta$.  We note
$L^{\omega}_{f}(\mathcal{A})$ the language of all infinite timed words
$f$-accepted by $\mathcal{A}$.  We close the section by observing that
a standard semantics for \ATA (where clock valuations are punctual
values instead of intervals) is a particular case of the interval
semantics, obtained by using the approximation function $Id$
s.t. $Id(C) = \{C\}$ for all $C$. We denote by $L^\omega(\Aa)$ the
language $L^\omega_{Id}(\Aa)$.  Then, the following proposition shows
the impact of approximation functions on the accepted language of the
\ATA: they can only lead to \emph{under-approximations} of
$L^\omega(\Aa)$.

\begin{proposition}
\label{inclu}
For all \ATA $\Aa$, for all $f \in \appf{\Aa}$: $L^{\omega}_{f}(\Aa)
\subseteq L^{\omega}(\Aa)$.
\end{proposition}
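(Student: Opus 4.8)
The plan is to take any accepting $f$-run of $\Aa$ on a timed word $\theta$ and build from it an accepting $Id$-run of $\Aa$ on the same $\theta$; since $L^{\omega}(\Aa)=L^{\omega}_{Id}(\Aa)$ by definition, this gives $\theta\in L^{\omega}(\Aa)$ and hence the inclusion. The guiding intuition is that an approximation function $f$ only \emph{enlarges} intervals (this is exactly what properties (ii)--(iii) of $\appf{\Aa}$ say), and that enlarging an interval from $I$ to $J\supseteq I$ makes a clock constraint $x\bowtie c$ \emph{harder} to satisfy, because $\models_I$ requires $x\bowtie c$ to hold at \emph{every} point of the interval. Consequently an $Id$-run that works with smaller intervals can always mimic every discrete choice taken along the $f$-run.

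Formally, I would introduce a \emph{refinement} relation on configurations: $C'$ refines $C$ if every state $(\ell,I')\in C'$ satisfies $I'\subseteq J$ for some state $(\ell,J)\in C$, which is unique by disjointness of intervals within a location; this induces a location-preserving map $h$ from states of $C'$ to states of $C$. The key technical lemma is a monotonicity property of $\succ$ under refinement: if $I'\subseteq J$, then for every minimal model $M\in\succ((\ell,J),\sigma)$ there is a minimal model $M'\in\succ((\ell,I'),\sigma)$ that refines $M$. I would prove it by writing $\delta(\ell,\sigma)$ in disjunctive normal form, keeping the same satisfied disjunct $\bigwedge_k A_{j,k}$, and inspecting each literal: a literal $A_{j,k}=\ell''$ produces $(\ell'',J)$, which is replaced by $(\ell'',I')$; the reset literal $x.\ell''$ and the literal $0\bowtie c$ produce nothing depending on the interval; and for a clock constraint $A_{j,k}=x\bowtie c$, satisfaction is preserved precisely because $I'\subseteq J$ and $\models$ quantifies universally over the interval. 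Taking a minimal submodel of the resulting configuration yields $M'$, which still refines $M$ since refinement is inherited by subsets.

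Equipped with this lemma, I would build the $Id$-run together with its refinement witness by induction on the depth $i$. Starting from $C'_0=C_0=\{(\ell_0,0)\}$ and assuming $C'_{i-1}$ refines $C_{i-1}$, time elapsing preserves refinement (all endpoints shift by the same $t_i$); then for each state $s'$ of $C'_{i-1}+t_i$ I apply the lemma to the minimal model the $f$-run selected for the corresponding state of $C_{i-1}+t_i$, and collect the obtained minimal models into $\overline{C'_i}\in\succ(C'_{i-1}+t_i,\sigma_i)$, setting $C'_i=\overline{C'_i}$ since $Id(\overline{C'_i})=\{\overline{C'_i}\}$. A short check shows $\overline{C'_i}$ is a genuine configuration: per location, the only intervals it can contain are the pairwise disjoint intervals inherited from states of $C'_{i-1}+t_i$ together with the singular interval $[0,0]$ produced by resets, so these are disjoint or equal. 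The map $h$ extends to depth $i$ by composing the containment $K\subseteq L$ given by the lemma with the enlargement $L\subseteq J$ performed by $f$ when producing $C_i$; crucially this makes every edge of the DAG $G'$ of the $Id$-run map, via $\dest$, to an edge of the DAG $G$ of the $f$-run.

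Finally I would transfer acceptance. Since $h$ preserves depth, location and edges, every infinite branch $\beta'$ of $G'$ is mapped to an infinite branch $h(\beta')$ of $G$ carrying exactly the same sequence of locations, whence $Infty(\beta')=Infty(h(\beta'))$. As the $f$-run is accepting, $Infty(h(\beta'))\cap F\neq\emptyset$, so $Infty(\beta')\cap F\neq\emptyset$ for every $\beta'\in Bran^\omega(G')$, i.e. the $Id$-run is accepting. I expect the main obstacle to be the monotonicity lemma, together with the bookkeeping needed to confirm that the union $\overline{C'_i}$ is a well-formed configuration and that $h$ genuinely respects $\dest$; once these are settled, the transfer of the B\"uchi condition is immediate because $h$ never alters locations.
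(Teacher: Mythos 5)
Your proof is correct and takes essentially the same approach as the paper: the paper's own argument is only a one-paragraph proof \emph{idea} --- a guard $x \bowtie c$ satisfied by an enlarged interval is satisfied by every value it contains, so each discrete step of an $f$-run can be mimicked from smaller (punctual) clock values --- and your refinement relation, monotonicity lemma for $\succ$, and inductive DAG simulation with transfer of the B\"uchi condition constitute a faithful, fully detailed formalization of exactly that observation.
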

\begin{proof}[Idea]
  In $Id$-runs, all clock values are punctual, while in $f$-runs,
  clock values can be non-punctual intervals. Consider a set
  $(\ell,v_1),\ldots, (\ell,v_n)$ of states in location $\ell$ and
  with punctual values $v_1\leq\ldots \leq v_n$, and consider its
  approximation $s=(\ell [v_1,v_n])$. Then, if a $\sigma$-labeled
  transition is firable from $s$, it is also firable from all
  $(\ell,v_i)$. The converse is not true: there might be a set of
  $\sigma$-labeled transitions that are firable from each
  $(\ell,v_i)$, but no $\sigma$-labeled transition firable from $s$,
  because \emph{all clock values} in $I$ must satisfy the transition
  guard.\qed
\end{proof}

\section{\TATA: a class of \ATA for MITL\label{sec:from-mitl-timed}}
In this section, we introduce the class of \emph{tree-like} \ATA
(\TATA for short), and show that, when applying, to an MITL formula
$\varphi$, the construction defined by Ouaknine and Worrell
\cite{OW07} in the setting of MTL interpreted on \emph{finite words},
one obtains a \TATA that accepts the \emph{infinite words} language of
$\varphi$. To prove this result, we rely on the specific properties of
\TATA (in particular, we show that their acceptance condition can be
made simpler than in the general case). Then, we show that there is a
family of \emph{bounded approximation functions} $f^\star_\varphi$, s.t.,
for every MITL formula $\varphi$,
$L^{\omega}_{\appfunc{\varphi}}(\Aa_\varphi)=L^{\omega}(\Aa_\varphi)$. This
result will be crucial to the definition of our on-the-fly
model-checking algorithm in Section~\ref{sec:mitl-model-checking}. We
also exploit it to define a natural procedure that builds, for
all MITL formula $\varphi$, a \emph{Büchi timed automaton $\Bb_\varphi$}
accepting $\sem{\varphi}$.

\paragraph{From MITL to \ATA.} We begin by recalling the syntactic
translation from MTL (a superset of MITL) to \ATA, as defined by
Ouaknine and Worrell \cite{OW07}. Observe that it has been defined in
the setting of \emph{finite words}, hence we will need to prove that
it is still correct in the infinite words setting. Let $\varphi$ be an
MITL formula (in negative normal form). We let $\mathcal{A}_{\varphi}
= (\Sigma, L, \ell_{0}, F, \delta)$ where: $L$ is the set containing
the initial copy of $\varphi$, noted `$\varphi_{init}$', and all the
formulas of $Sub(\varphi)$ whose outermost connective is `$U$' or
`$\tilde{U}$'; $\ell_{0} = \varphi_{init}$; $F$ is the set of the
elements of $L$ of the form $\varphi_{1} \tilde{U}_{I}
\varphi_{2}$. Finally $\delta $ is defined by induction on the
structure of $\varphi$:
\begin{itemize}
\item $\delta(\varphi_{init}, \sigma) = x.\delta(\varphi, \sigma)$
\item $\delta(\varphi_{1} \vee \varphi_{2}, \sigma) =
  \delta(\varphi_{1}, \sigma) \vee \delta(\varphi_{2}, \sigma)$;  $\delta(\varphi_{1} \wedge \varphi_{2}, \sigma) =
  \delta(\varphi_{1}, \sigma) \wedge \delta(\varphi_{2}, \sigma)$
\item $\delta(\varphi_{1}U_{I} \varphi_{2}, \sigma) =
  (x.\delta(\varphi_{2},\sigma) \wedge x \in I) \vee
  (x.\delta(\varphi_{1},\sigma) \wedge \varphi_{1}U_{I} \varphi_{2}
  \wedge x\leq sup(I))$
\item $\delta(\varphi_{1} \tilde{U}_{I} \varphi_{2}, \sigma) =
  (x.\delta(\varphi_{2},\sigma) \vee x\notin I) \wedge
  (x.\delta(\varphi_{1},\sigma) \vee \varphi_{1} \tilde{U}_{I}
  \varphi_{2} \vee x > sup(I))$
\item $\forall \sigma_1,\sigma_2\in \Sigma$: $\delta(\sigma_{1},
  \sigma_{2}) = \left\{
    \begin{array}{ll}
      \text{true} &\text{if } \sigma_{1} \text{=} \sigma_{2} \\
      \text{false}&\text{if } \sigma_{1} \neq \sigma_{2}
    \end{array}
  \right.$ and $\delta(\neg \sigma_{1}, \sigma_{2}) = \left\{
    \begin{array}{ll}
      \text{false} &\text{if } \sigma_{1} \text{=} \sigma_{2} \\
      \text{true} &\text{if } \sigma_{1} \neq \sigma_{2}
    \end{array}
  \right.$
\item $\forall \sigma\in\Sigma$: $\delta(\top,\sigma)=\top$ and
  $\delta(\bot,\sigma)=\bot$.
\end{itemize}
\begin{example}
  As an example, consider again the \ATA $\Aa_\varphi$ in
  Fig.~\ref{ExGilles}. It accepts exactly $\sem{\qed ( a \Rightarrow
    \lozenge_{[1,2]} b)}$. It has been obtained by means of the above
  construction (trivial guards such as $x\leq +\infty$, and the state
  $\varphi_{init}$ have been omitted).
\end{example}

\paragraph{Tree-like \ATA} Let us now define a strict subclass of \ATA
that captures all the infinite words language of MTL formulas, but
whose acceptance condition can be made simpler. An \ATA $\mathcal{A} =
(\Sigma, L, \ell_{0}, F, \delta)$ is a \TATA iff there exists a
partition $L_1, L_2,\ldots, L_m$ of $L$ and a partial order
$\preccurlyeq$ on the sets $L_1, L_2, \dots, L_m$ s.t.:
\begin{inparaenum}[(i)]
\item each $L_i$ contains either only accepting states or no accepting
  states and
\item the partial order $\preccurlyeq$ is compatible with the
  transition relation and yields the `tree-like' structure of the
  automaton.
\end{inparaenum}
Here is the formal definition of a \TATA.

\begin{definition} An \ATA $\mathcal{A} =
(\Sigma, L, \ell_{0}, F, \delta)$ is a \TATA iff there exists a
partition $L_1, L_2,\ldots, L_m$ of $L$ satisfying:
\begin{itemize}
\item each $L_i$ contains either only accepting states or no accepting
  states: $\forall 1 \leq i \leq m$ either $L_i \subseteq F$ or $L_i \cap F = \emptyset$, and
\item there is a partial order $\preccurlyeq$ on the sets $L_1, L_2, \dots, L_m$ compatible with the
  transition relation and that yields the `tree-like' structure of the
  automaton in the following sense: $\preccurlyeq$ is s.t. $L_j \preccurlyeq L_i$ iff
 $\exists \sigma \in \Sigma$, $\ell \in L_i$ and $\ell' \in L_j$ such that $\ell'$ is present in $\delta(\ell,\sigma)$.
\end{itemize}
\end{definition}
In particular, \ATA built from \emph{MTL} formulas, such as
$\Aa_\varphi$ in Fig.~\ref{ExGilles}, are \TATA. Since MTL is a
superset of MITL, this proposition is true in particular for MITL
formulas:
\begin{proposition}
  For every {\em MTL} formula $\varphi$, $\Aa_\varphi$ is a \TATA.
\end{proposition}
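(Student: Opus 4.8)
The plan is to exhibit, for an arbitrary MTL formula $\varphi$, an explicit partition of the locations of $\Aa_\varphi$ together with a partial order witnessing the two conditions of the \TATA definition. Recall that the locations of $\Aa_\varphi$ are exactly $\varphi_{init}$ together with the subformulas of $\varphi$ whose outermost connective is $U$ or $\tilde U$. The natural candidate partition is the \emph{singleton} partition: put each location in its own block $L_i$. Then the first condition (each block is entirely accepting or entirely non-accepting) holds trivially, since a singleton is vacuously homogeneous. So the whole content of the proof lies in exhibiting a partial order $\preccurlyeq$ compatible with the transition relation, and the definition of \TATA forces this order to be precisely the one generated by the relation ``$\ell'$ occurs in $\delta(\ell,\sigma)$ for some $\sigma$''; what must be checked is that the reflexive-transitive closure of this relation is actually a \emph{partial order}, i.e.\ is antisymmetric.

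First I would define the relation $\to$ on locations by $\ell' \to \ell$ iff there is some $\sigma$ with $\ell'$ present in $\delta(\ell,\sigma)$, and take $\preccurlyeq$ to be its reflexive-transitive closure on the singleton blocks. The key structural fact I would establish, by inspection of the inductive definition of $\delta$, is that every location $\ell'$ appearing in $\delta(\ell,\sigma)$ is a \emph{syntactic subformula} of $\ell$ (viewing $\varphi_{init}$ as a top copy of $\varphi$). Indeed, in the clauses for $\delta(\varphi_1 U_I\varphi_2,\sigma)$ and $\delta(\varphi_1\tilde U_I\varphi_2,\sigma)$, the locations that appear are either the formula itself, or locations arising from $\delta(\varphi_1,\sigma)$ and $\delta(\varphi_2,\sigma)$, which by induction are subformulas of $\varphi_1$ or $\varphi_2$, hence of the whole formula. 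The only case introducing the formula itself back into its transition is the recurrence of $\varphi_1 U_I\varphi_2$ (resp.\ $\varphi_1\tilde U_I\varphi_2$), which is a self-loop and thus harmless for antisymmetry.

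From this ``subformula'' invariant, antisymmetry of $\preccurlyeq$ follows: if $\ell \preccurlyeq \ell'$ and $\ell' \preccurlyeq \ell$ with $\ell \neq \ell'$, then $\ell$ would be a strict subformula of $\ell'$ and simultaneously $\ell'$ a strict subformula of $\ell$, contradicting that strict-subformula is a strict (well-founded) order on formulas. Hence $\preccurlyeq$ is a genuine partial order, and by construction it satisfies the compatibility condition $L_j\preccurlyeq L_i$ iff some $\ell'\in L_j$ is present in some $\delta(\ell,\sigma)$ with $\ell\in L_i$. Both conditions of the definition are then met, so $\Aa_\varphi$ is a \TATA.

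I expect the main obstacle to be purely bookkeeping rather than conceptual: one must be careful that $\varphi_{init}$ is treated correctly (it transitions via $x.\delta(\varphi,\sigma)$ into subformulas of $\varphi$, but nothing transitions back into it, so it sits as a maximal element), and one must formalise ``subformula'' so that the self-loops on $U$/$\tilde U$ locations do not break antisymmetry while still allowing the order to be reflexive. A clean way to handle this is to work with the \emph{strict} subformula relation to prove antisymmetry and then add reflexivity separately. Everything else is a routine case analysis over the six clauses defining $\delta$.
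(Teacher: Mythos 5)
Your proof is correct and follows essentially the same route as the paper: the paper also takes the singleton partition of the locations and orders the blocks by the subformula relation, noting (without detail) that this satisfies the \TATA conditions. Your write-up merely makes explicit the key invariant the paper leaves implicit — that every location occurring in $\delta(\ell,\sigma)$ is a subformula of $\ell$ (with self-loops on $U$/$\tilde U$ locations), whence antisymmetry — so it is a fleshed-out version of the same argument.
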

\begin{proof}
  Let $L=\lbrace \ell_1, \ell_2, \dots, \ell_m \rbrace$ be the
  locations of $\Aa_\varphi$. We consider the partition $\lbrace
  \ell_1 \rbrace, \lbrace \ell_2 \rbrace, \dots, \lbrace \ell_m
  \rbrace$ of $L$ and the order $\preccurlyeq$ s.t.  $\{\ell_j\}
  \preccurlyeq \{\ell_i\}$ iff $\ell_j$ is a subformula of
  $\ell_i$. It is easy to check that they satisfy the definition of
  \TATA.\qed
\end{proof}

\paragraph{Properties of \TATA} Let us now discuss two peculiar
properties of \TATA that are not enjoyed by \ATA. The first one is
concerned with the acceptance condition. In the general case, a run of
an \ATA is accepting iff all its branches visit accepting states
infinitely often. Thanks to the partition characterising a \TATA, this
condition can be made simpler: a run is now accepting iff each branch
eventually visits accepting states \emph{only}, because it reaches a
partition of the locations that are all accepting.

\begin{proposition}
\label{simpleAC}
An $Id$-run $G_\pi$ of a \TATA $\Aa = (\Sigma, L, \ell_{0}, F, \delta)$ is \emph{accepting} iff $\forall \beta = \beta_{0} \beta_{1} \ldots \beta_{i} \ldots \in Bran^\omega(G_\pi)$, $\exists n_\beta \in
\N$ s.t. $\forall i > n_{\beta}$: $\beta_i=((\ell,v),i)$ implies $\ell\in F$.
\end{proposition}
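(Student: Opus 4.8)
The plan is to prove the two directions of the equivalence separately, exploiting the tree-like partition structure. Recall that the general B\"uchi acceptance condition states that $G_\pi$ is accepting iff every $\beta\in Bran^\omega(G_\pi)$ satisfies $Infty(\beta)\cap F\neq\emptyset$. I want to show this is equivalent to the stronger-looking condition that every infinite branch eventually stays in accepting locations forever.

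\medskip

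The backward direction is the easy one. Suppose every infinite branch $\beta$ eventually visits only accepting locations, i.e. there is $n_\beta$ such that for all $i>n_\beta$, if $\beta_i=((\ell,v),i)$ then $\ell\in F$. Since $\beta$ is infinite, it certainly visits accepting locations at infinitely many depths $i>n_\beta$, hence $Infty(\beta)\cap F\neq\emptyset$, and the run is accepting in the B\"uchi sense.

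\medskip

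The forward direction is where the tree-like structure is essential, and I expect it to be the main obstacle. Here I would argue by contraposition, or directly, using the partial order $\preccurlyeq$ on the blocks $L_1,\ldots,L_m$. The key observation is that along any branch $\beta$, the sequence of blocks visited is \emph{monotone non-increasing} with respect to $\preccurlyeq$: whenever $\beta_i$ lies in location $\ell\in L_i$ and $\beta_{i+1}$ lies in $\ell'\in L_j$, the definition of \TATA (via $\ell'$ being present in $\delta(\ell,\sigma)$) forces $L_j\preccurlyeq L_i$. Because $\preccurlyeq$ is a partial order on the \emph{finitely many} blocks, a non-increasing sequence must eventually stabilise: there is a depth $n_\beta$ beyond which $\beta$ stays within a single block $L_k$ forever (it can strictly descend only finitely often). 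I would make this precise by defining, for each depth $i$, the block $b(i)$ containing $\beta_i$, noting $b(i+1)\preccurlyeq b(i)$, and invoking the fact that any descending chain in a finite poset stabilises.

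\medskip

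Once the branch is confined to a single block $L_k$ for all $i>n_\beta$, I invoke the first defining property of a \TATA: each block is \emph{homogeneous}, i.e. either $L_k\subseteq F$ or $L_k\cap F=\emptyset$. Since the run is accepting by hypothesis, $\beta$ visits $F$ infinitely often, so some state of $L_k$ lies in $F$; homogeneity then forces $L_k\subseteq F$, and therefore \emph{every} location visited by $\beta$ after depth $n_\beta$ is accepting. This yields exactly the condition in the statement. The only subtlety to handle carefully is that the stabilising block must be the one containing the infinitely-often-visited accepting location — but since $\beta$ is confined to the single block $L_k$ past $n_\beta$, the infinitely many accepting visits guaranteed by the B\"uchi condition necessarily occur inside $L_k$, closing the argument.
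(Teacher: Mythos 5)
Your proof is correct and follows essentially the same route as the paper's: the easy direction is dismissed identically, and for the hard direction both arguments use the compatibility of $\preccurlyeq$ with transitions to get a weakly decreasing sequence of blocks along any branch, finiteness of the partition to force stabilisation in some block $L_k$, and then homogeneity together with the B\"uchi condition to conclude $L_k \subseteq F$. The only (immaterial) difference is that the paper starts counting block-switches from the first accepting state on the branch, whereas you observe monotonicity from the root.
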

\begin{proof}
First, remark that the "if" case is trivial.  In the following, we prove the "only if" case. As $\Aa$ is a \Tree \ATA, there exists a partition of $L$ into disjoint subsets $L_1, L_2, \dots, L_m$ satisfying:
\begin{enumerate}
\item $\forall 1 \leq i \leq m$ either $L_i \subseteq F$ or $L_i \cap F = \emptyset$, and
\item there is a partial order $\preccurlyeq$ on the sets $L_1, L_2, \dots, L_m$ such that $L_j \preccurlyeq L_i$ iff
 $\exists \sigma \in \Sigma$, $\ell \in L_i$ and $\ell' \in L_j$ such that $\ell'$ is present in $\delta(\ell,\sigma)$.
\end{enumerate}
Let $\pi$ be an accepting $Id$-run of $\mathcal{A}$ and $G_\pi$ its associated DAG.  Let $\beta = \beta_{0} \beta_{1} \ldots \beta_{i} \ldots$ be a (finite or infinite) branch of $G_\pi$, we must prove that either $\beta$ is finite, either $\exists n_\beta \in \N$ s.t. $\forall i > n_{\beta}$: $\beta_i=((\ell,v),i)$ implies $\ell\in F$.  As $\pi$ is accepting, either $\beta$ is finite, either there exists a smallest $n_0$ such that $\beta_{n_0}$ has an accepting location.  If $\beta$ is finite, we are done.  So, let us suppose that $\beta$ is infinite. In this case, there exists a smallest $n_0$ such that $\beta_{n_0}$ has an accepting location, say $\ell \in L_i$ for a certain $1 \leq i \leq m$. 1. implies that $L_i \subseteq F$. Thanks to 2., $\beta_{n_0 +1}$ is found:
\begin{enumerate}
\item[a.] taking an arc looping on $L_i$, or
\item[b.] taking an arc going to $L_j$, for a certain $1 \leq j \leq m$ with $L_j \preccurlyeq L_i$ and $L_i \neq L_j$.
\end{enumerate}
Remark that case a. can be repeated infinitely many times, while case b. can happen at most $m-1$ times.  So, there exists $n^\star \geq n_0$ and $1 \leq i^\star \leq m$ such that $\forall i > n^\star$: $\beta_i=((\ell,v),i)$ implies $\ell\in L_{i^\star}$.  As $\pi$ is an accepting $Id$-run, 1. implies that $L_{i^\star} \subseteq F$ and $n^\star$ is the research $n_\beta$: $\forall i > n^\star$, $\beta_i=((\ell,v),i)$ implies $\ell\in F$.\qed
\end{proof}

The second property of interest for us is that \TATA can be easily
complemented. One can simply swap accepting and non-accepting
locations, and `dualise' the transition relation, \emph{without
  changing the acceptance condition}\footnote{In general, applying
  this construction yields an \ATA \emph{with co-Büchi acceptance
    condition} for the complement of the language.} (as in the case of
\ATA on finite words \cite{OW07}).  Formally, the dual of a formula $\varphi \in \Gamma(L)$
is the formula $\overline{\varphi}$ defined inductively as
follows. $\forall \ell \in L$, $\overline{\ell} = \ell$ ;
$\overline{false} = true$ and $\overline{true} = false$ ;
$\overline{\varphi_1 \vee \varphi_2} = \overline{\varphi_1} \wedge \overline{\varphi_2}$ ;
$\overline{\varphi_1 \wedge \varphi_2} = \overline{\varphi_1} \vee \overline{\varphi_2}$ ;
$\overline{x.\varphi} = x.\overline{\varphi}$ ; the dual of a clock
constraint is its negation (for example: $\overline{x \leq c } = x >
c$). Then, for all \TATA $\Aa = (\Sigma, L, \ell_0, F, \delta)$, we
let $\Aa^C = (\Sigma, L, \ell_0, L \setminus F, \overline{\delta})$
where $\overline{\delta}(\ell, \sigma) =
\overline{\delta(\ell,\sigma)}$. Thanks to Proposition~\ref{simpleAC},
we will prove that $\Aa^C$ accepts the complement of $\Aa$'s language.

Before proving this, we make several useful
observations about the transition relation of an \ATA. Let $\delta$ be
the transition function of some \ATA, let $\ell$ be a location, let
$\sigma$ be a letter, and assume $\delta(\ell,\sigma)=\bigvee_k a_k$,
where each $a_k$ is an \emph{arc}, i.e. a conjunction of atoms of the
form: $\ell'$, $x.\ell'$, $x\bowtie c$, $0\bowtie c$, $\top$ or
$\bot$. Then, we observe that \emph{each minimal model} of
$\delta(\ell,\sigma)$ wrt some interval $I$ corresponds to
\emph{firing one of the arcs $a_k$ from $(\ell,I)$}. That is, each
minimal model can be obtained by choosing an arc $a_k$ from
$\delta(\ell,\sigma)$, and applying the following procedure. Assume
$a_k = \ell_1\wedge\cdots\wedge \ell_n\wedge
x.(\ell_{n+1}\wedge\cdots\wedge \ell_{m})\wedge \varphi$, where
$\varphi$ is a conjunction of clock constraints. Then, $a_k$ is
firable from a minimal model $(\ell,\sigma)$ iff $I\models \varphi$
(otherwise, no minimal model can be obtained from $a_k$). In this
case, the minimal model is $\{(\ell_i,I) \vert 1\leq i\leq n\} \cup
\{(\ell_i,[0,0])\vert n+1\leq i\leq m\}$.\\
From now on, we consider that $\delta(\ell,\sigma)$ is always written in disjunctive normal form, i.e. $\delta(\ell,\sigma) = \underset{k \in K}{\bigvee} \bigwedge A_{k}$, where the terms $A_{k}$ might be $\ell, x.\ell, x \bowtie c, 0 \bowtie c, \top$ or $\bot$, for $\ell \in L$ and $c \in \N$.  Then, each minimal model M of $\delta(\ell,\sigma)$ wrt $I$ has the form $A_{k}[I]$, for some $k \in K$, where $A_{k}[I] = \lbrace (\ell,J) \; \vert \; \ell \in A_{k}\rbrace \cup \lbrace (\ell,\lbrace 0 \rbrace) \; \vert \; x.\ell \in A_{k}\rbrace$ and $I$ satisfies all the clocks constraints of $A_{k}$.\\

\begin{proposition}
\label{Prop2}
For all \TATA $\Aa$, $L^\omega(\Aa^C) = T\Sigma^{\omega} \setminus
L^\omega(\Aa)$.
\end{proposition}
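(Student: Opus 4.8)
The plan is to read acceptance as a two–player game and to conclude by a (here elementary) determinacy argument, using Proposition~\ref{simpleAC} to make the game for $\Aa$ and the game for $\Aa^C$ exact duals. Since $L^\omega(\Aa)=L^\omega_{Id}(\Aa)$, all clock values along the runs are punctual, so successors of a state are computed exactly and no approximation interferes. I would introduce the acceptance game $\mathcal{G}_\Aa(\theta)$ on the fixed word $\theta=(\bar\sigma,\bar\tau)$ whose positions are states $(\ell,v)$ tagged with a depth $i$ (the letter $\sigma_{i+1}$ and delay $t_{i+1}$ at each depth are then determined). At a position $(\ell,v,i)$, with $v'=v+t_{i+1}$, player Eve resolves the disjunction in $\delta(\ell,\sigma_{i+1})$ by choosing a firable arc, i.e. a minimal model $M$ of $\delta(\ell,\sigma_{i+1})$ wrt $v'$, and player Adam resolves the conjunction by choosing one successor state of $M$. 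A dead branch (a firable arc whose minimal model is empty) ends the play in Eve's favour, while a \emph{stuck} position (no firable arc) is a loss for Eve; note that stuck positions never occur along an \emph{infinite} run, since $\succ(C+t,\sigma)=\emptyset$ would prevent the run from being extended. By the standard correspondence between accepting runs of an alternating automaton and winning Eve strategies, $\theta\in L^\omega(\Aa)$ iff Eve wins $\mathcal{G}_\Aa(\theta)$; thanks to Proposition~\ref{simpleAC}, Eve's winning condition on infinite plays is the \emph{weak} condition ``the branch eventually visits locations of $F$ only''.

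Next I would show that $\mathcal{G}_{\Aa^C}(\theta)$ is exactly $\mathcal{G}_\Aa(\theta)$ with the two players' roles swapped and the winning condition complemented. This rests on an atom–wise duality lemma. Writing $\delta(\ell,\sigma)=\bigvee_{k}\bigwedge_j A_{k,j}$ in disjunctive normal form, the dual is $\overline{\delta}(\ell,\sigma)=\bigwedge_k\bigvee_j \overline{A_{k,j}}$, so choosing a minimal model of $\overline{\delta}$ amounts to selecting, for each $k$, one disjunct $\overline{A_{k,j_k}}$, with firability requiring $v'$ to satisfy all selected guards. Fix the arc $k_0$ used by an $\Aa$-model $M$ and compare it with the index $j_{k_0}$ selected by an $\Aa^C$-model $M'$: if $A_{k_0,j_{k_0}}$ were a (satisfied) clock constraint or $\top$, then $\overline{A_{k_0,j_{k_0}}}$ would be violated by $v'$ and $M'$ could not be firable; hence $A_{k_0,j_{k_0}}$ is a location atom, and since the dual of $\ell'$ and of $x.\ell'$ is itself, the \emph{same} successor state (with the same reset, hence the same value) lies both in $M$ and in $M'$. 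Conversely, if $M=\emptyset$ then every conjunct of arc $k_0$ is a satisfied guard or $\top$, so no disjunct of $\overline{\delta}$ is firable and the dual position is stuck. Thus minimal-model choices on the two sides are precisely Eve's and Adam's moves exchanged, dead branches and stuck positions are exchanged, and — crucially — because $\Aa$ is a \Tree automaton, Proposition~\ref{simpleAC} guarantees that on any infinite branch the complement of ``eventually only $F$'' is ``eventually only $L\setminus F$'', which is exactly the (weak) accepting condition of $\Aa^C$ whose accepting set is $L\setminus F$. This matching is the heart of the result and is what fails for general \ATA, where complementing Büchi would force a co-Büchi condition.

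With these two ingredients the conclusion is immediate: the acceptance game is determined, so exactly one of Eve and Adam wins $\mathcal{G}_\Aa(\theta)$; Eve wins iff $\theta\in L^\omega(\Aa)$, and Adam wins iff Eve wins the swapped-and-complemented game $\mathcal{G}_{\Aa^C}(\theta)$, i.e. iff $\theta\in L^\omega(\Aa^C)$. Hence $\theta\in L^\omega(\Aa^C)$ iff $\theta\notin L^\omega(\Aa)$, which is the claim $L^\omega(\Aa^C)=T\Sigma^\omega\setminus L^\omega(\Aa)$. The ``at most one'' half ($L^\omega(\Aa)\cap L^\omega(\Aa^C)=\emptyset$) can in fact be obtained \emph{without} invoking determinacy, by a direct construction: given accepting $Id$-runs $G$ of $\Aa$ and $G'$ of $\Aa^C$ on the same $\theta$, the duality lemma lets one grow, step by step from the common root $(\ell_0,0)$, a single infinite branch present in both DAGs; applying Proposition~\ref{simpleAC} to each run forces that branch to eventually lie in $F$ and eventually lie in $L\setminus F$, a contradiction.

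The main obstacle is the ``at least one'' half (inclusion $T\Sigma^\omega\setminus L^\omega(\Aa)\subseteq L^\omega(\Aa^C)$), which genuinely requires turning the \emph{failure} of every run of $\Aa$ into an \emph{accepting} run of $\Aa^C$; this is where determinacy enters, together with the delicate bookkeeping of the run-versus-strategy correspondence and of the boundary cases (dead branch versus stuck position). The arena is infinite because clock values range over $\R^+$, so a naive finite-game determinacy does not apply directly; however, I would argue that the required determinacy is elementary rather than a call to Borel determinacy, precisely because the \Tree structure makes the objective \emph{weak}: along the finite partial order $\preccurlyeq$ on the blocks $L_1,\dots,L_m$ a branch can change block at most $m-1$ times, so every infinite play stabilises in a single block whose membership in $F$ decides the winner, reducing the relevant game to a safety/reachability game on the finite block structure. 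Making this reduction and the strategy-to-run translation fully rigorous, while respecting the exact treatment of finite branches, is the technical core I expect to spend most effort on.
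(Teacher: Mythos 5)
Your proposal is correct in its essentials, but it reaches the result by a genuinely different route than the paper, so let me compare the two. The ingredient you share with the paper is the atom-wise duality analysis: your observation that a firable arc of $\delta(\ell,\sigma)$ and any minimal model of $\overline{\delta}(\ell,\sigma)$ must share a location atom (with empty models and stuck positions exchanged) is exactly the engine of the paper's Claim that every $Id$-run of $\Aa$ and every $Id$-run of $\Aa^C$ on the same word share a common branch, and your ``direct construction'' for the half $L^\omega(\Aa)\cap L^\omega(\Aa^C)=\emptyset$ is precisely the paper's argument for that half. The divergence is on the hard inclusion $T\Sigma^{\omega}\setminus L^\omega(\Aa)\subseteq L^\omega(\Aa^C)$: the paper builds an accepting $Id$-run of $\Aa^C$ explicitly, by induction along the word, maintaining for every branch of the prefix built so far the invariant that it is either blocking for $\Aa$ or the beginning of a branch, in some complete run of $\Aa$, visiting $F$ only finitely often --- in effect a hand-rolled construction of what you call Adam's winning strategy, in which the existence of a ``bad'' successor per arc is the backward-propagation step of a fixpoint/determinacy argument. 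You instead observe that tree-likeness (via Proposition~\ref{simpleAC}, applied both to $\Aa$ and to $\Aa^C$, which is itself a \TATA with the same partition and accepting set $L\setminus F$) turns both acceptance games into \emph{weak} games whose objectives on infinite plays are exact complements, and then appeal to determinacy. This is sound: the reachable arena is countable and finitely branching (at depth $i$ only finitely many clock values can occur), the objective is a weak/parity condition, and such games are determined with \emph{positional} strategies on arbitrary arenas; positionality is moreover exactly what you need to convert a winning strategy back into a run DAG, where plays reaching the same state at the same depth must be merged. What your route buys is modularity and a transparent explanation of why the \TATA hypothesis matters (for a general \ATA the dual would carry a co-B\"uchi condition and the matching of objectives breaks); what it costs is the obligation to set up and verify the game machinery (acceptance-game correspondence in the $Id$ semantics, positional determinacy on infinite arenas) that the paper's self-contained induction avoids --- though, as you rightly suspect, the paper's construction quietly relies on the same facts and is itself rather terse at exactly that step.
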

\begin{proof}
~\\ \textbf{Claim.} Let $\theta$ be an infinite timed word.  Let $\pi$ be an 
$Id$-run of $\Aa$ on $\theta$ and $\pi'$ be a 
 $Id$-run of $\Aa^C$ on $\theta$. Then, $\pi$ and $\pi'$ have a common branch.\\
\textbf{Proof of the claim.} Let $\pi$ be a 
 $Id$-run of $\Aa$, denoted $C_{0} \overset{t_{1}}{\rightsquigarrow} C_{1} \overset{\sigma_{1}}{\longrightarrow} C_{2} \overset{t_{2}}{\rightsquigarrow} C_{3} \overset{\sigma_{2}}{\longrightarrow} \dots \overset{t_{n}}{\rightsquigarrow} C_{2n-1} \overset{\sigma_{n}}{\longrightarrow} C_{2n} \dots$ and $\pi'$ be a 
  $Id$-run of $\Aa^C$, denoted $ C'_{0} \overset{t_{1}}{\rightsquigarrow} C'_{1}
\overset{\sigma_{1}}{\longrightarrow} C'_{2} \overset{t_{2}}{\rightsquigarrow} C'_{3} \overset{\sigma_{2}}{\longrightarrow} \dots \overset{t_{n}}{\rightsquigarrow} C'_{2n-1} \overset{\sigma_{n}}{\longrightarrow} C'_{2n} \dots$.  We will recursively construct a common branch of $\pi$ and $\pi'$.  We will denote its elements by $e_0 e_1 e_2 ...$ (so, each $e_i \in C_i \cap C'_i$ and is a certain state $(\ell, v)$).\\
\underline{Basis:} $C_{0} = C'_{0} = \lbrace (\ell_0, 0) \rbrace$, so, we take $e_0 = (\ell_0, 0)$.\\
\underline{Induction:} Suppose we constructed a common beginning of branch of $\pi$ and $\pi'$: $e_0 e_1 e_2 ... e_{2n-3} e_{2n-2}$.  Let us show that we can extend this common beginning of branch, constructing $e_{2n-1}$ and $e_{2n}$.\\
We know that $e_{2n-2} \in C_{2n-2} \cap C'_{2n-2}$.  Suppose $e_{2n-2} = (\ell, v)$.  Then, transitions $C_{2n-2} \overset{t_{n}}{\rightsquigarrow} C_{2n-1}$ and $C'_{2n-2} \overset{t_{n}}{\rightsquigarrow} C'_{2n-1}$ both send $e_{2n-2} = (\ell, v)$ on $(\ell, v + t_n)$.  So we can construct $e_{2n-1} = (\ell, v + t_n) \in C_{2n-1} \cap C'_{2n-1}$.\\
Now, suppose that $C_{2n-1} = \lbrace (\ell_i, v_i)_{i \in I} \rbrace$ and let $i^\star \in I$ such that $e_{2n-1} = (\ell_{i^\star}, v_{i^\star})$. $C_{2n}$ is obtained thanks to $\delta(\ell_i, \sigma_{n})$ (for all $i \in I$), which can be written as (in disjunctive normal form) $\underset{j \in J}{\bigvee}$ $\underset{k \in K_j}{\bigwedge} A_{jk}$ where each $A_{jk}$ is a term of type $\ell$, $x.\ell$ or $x \bowtie c$.  For each $j \in J$, we can then define $A_{j}[v] = \lbrace (\ell,v) \; \vert \; \exists k \in K_j : \ell \in A_{jk} \rbrace \cup \lbrace (\ell,0) \; \vert \; \exists k \in K_j : x.\ell \in A_{jk} \rbrace$: each $A_j[v_{i}]$, for a $v_{i}$ satisfying the clock constraints present in $\underset{k \in K_j}{\bigcup} A_{jk}$, is a minimal model of $\delta(\ell_i, \sigma_{n})$ wrt $v_{i}$ (and they are the only ones !).  On one hand, there exists a $j^\star \in J$ such that $A_{j^\star}[v_{i^\star}] \subseteq C_{2n-1}$ and $v_{i}$ satisfies the clock constraints present in $\underset{k \in K_{j^\star}}{\bigcup} A_{j^\star k}$: for each element $(\ell,v)$ of $A_{j^\star}[v_{i^\star}]$, there is a transition in $\pi$ linking $e_{2n-1} = (\ell_{i^\star}, v_{i^\star})$ to $(\ell,v)$.  On the other hand, $\overline{\delta}(\ell_i, \sigma_{n}) = \underset{j \in J}{\bigwedge}$ $\underset{k \in K_j}{\bigvee} \overline{A_{jk}}$: for an $A_{jk}$ of type $\ell$ or $x.\ell$, $\overline{A_{jk}} = A_{jk}$, and for $A_{jk} = x \bowtie c$, $\overline{A_{jk}}$ is the negation of $x \bowtie c$.  As $v_{i^\star}$ satisfies the clock constraints present in $\underset{k \in K_{j^\star}}{\bigcup} A_{j^\star k}$, it does not satisfy the clock constraints present in $\underset{k \in K_{j^\star}}{\bigcup} \overline{A_{j^\star k}}$. Letting $C'_{2n-1} = \lbrace (\ell_{i'}, v_{i'})_{i' \in I'} \rbrace$, each minimal model of $\overline{\delta}(\ell_i, \sigma_{n})$ wrt $v_{i'}$ must be such that, for each $j \in J$:
\begin{itemize}
\item either $v_{i'}$ satisfies a certain clock constraint present in $\underset{k \in K_j}{\bigcup} A_{jk}$,
\item or this minimal model contains an element of $A_{j}[v_{i'}]$.
\end{itemize}
In particular, the minimal model of $\overline{\delta}(\ell_{i^\star}, \sigma_{n})$ wrt $v_{i^\star}$ that is used in $\pi'$ to construct $C_{2n}$ contains an element of $A_{j^\star}[v_{i^\star}]$ (because it does not satisfy any clock constraint in $\underset{k \in K_{j^\star}}{\bigcup} A_{j^\star k}$
) and there is a transition in $\pi'$ linking $e_{2n-1} = (\ell_{i^\star}, v_{i^\star})$ to this element.  It is this element we choose to be $e_{2n}$. \qed
~\\
\noindent We will prove the proposition thanks to this claim.  We first prove that $L^\omega(\Aa) \cap L^\omega(\Aa^C) = \emptyset$, and then that $L^\omega(\Aa) \cup L^\omega(\Aa^C) = T\Sigma^\omega$.\\
\underline{$\mathbf{L^\omega(\Aa) \cap L^\omega(\Aa^C) = \emptyset.}$} Suppose by contradiction that there exists a timed word $\theta$ such that $\theta \in L^\omega(\Aa)$ and $\theta \in L^\omega(\Aa^C)$.  Then, there exists 
 accepting $Id$-runs $\pi$ and $\pi'$ (resp.) of $\Aa$ and $\Aa^C$ on $\theta$.  By the previous claim, $\pi$ and $\pi'$ have a common branch $e_0 e_1 e_2 ...$.  As $\pi$ is an accepting $Id$-run of $\Aa$, $\exists n \in \N$, $\forall m > n$, the location of $e_{m}$ is in $F$. But as $\pi'$ is an accepting $Id$-run of $\Aa^C$, $\exists n' \in \N$, $\forall m > n$, the location of $e_{m}$ is in $L \setminus F$: this is impossible and so $\theta \notin L^\omega(\Aa)$ or $\theta \notin L^\omega(\Aa^C)$.\\
\underline{$\mathbf{L^\omega(\Aa) \cup L^\omega(\Aa^C) = T\Sigma^\omega.}$} Let $\theta$ be a timed word such that $\theta \notin L^\omega(\Aa)$, we must prove that $\theta \in L^\omega(\Aa^C)$.  We will inductively construct an accepting $Id$-run of $\Aa^C$ on $\theta$.  To actually construct an \textit{accepting} $Id$-run, we will maintain an additional property of the beginning of $Id$-run we extend at each inductive step, say $\pi_{2n}$ denoted $C_{0} \overset{t_{1}}{\rightsquigarrow} C_{1} \overset{\sigma_{1}}{\longrightarrow} C_{2} \overset{t_{2}}{\rightsquigarrow} C_{3} \overset{\sigma_{2}}{\longrightarrow} \dots \overset{t_{n}}{\rightsquigarrow} C_{2n-1} \overset{\sigma_{n}}{\longrightarrow} C_{2n}$: "for each branch $\beta$ of $\pi_{2n}$, there exists a beginning of $Id$-run of $\Aa$ on $\theta$ such that:
\begin{itemize}
\item either $\beta$ is a branch of $\pi'$ and $\pi'$ can not be prolonged into a complete $Id$-run of $\Aa$ (it is "blocking"), 
\item or $\pi'$ can be prolonged into a complete $Id$-run $\pi_c$ of $\Aa$ such that $\beta$ is the beginning of a branch of $\pi_c$ on which each location of $F$ only occurs a finite number of times."
\end{itemize}
We note $\mathbf{(^\star 2n)}$ this property.\\
\underline{Basis.} We construct $C_0 = \lbrace (\ell_0, 0) \rbrace$.  As there is no accepting $Id$-run of $\Aa$ on $\theta$, property $\mathbf{(^\star 0)}$ is trivially verified.\\
\underline{Induction.} Suppose we constructed a beginning of $Id$-run $\pi_{2n}$ of $\Aa^C$ on $\theta$: $C_{0} \overset{t_{1}}{\rightsquigarrow} C_{1} \overset{\sigma_{1}}{\longrightarrow} C_{2} \overset{t_{2}}{\rightsquigarrow} C_{3} \overset{\sigma_{2}}{\longrightarrow} \dots \overset{t_{n}}{\rightsquigarrow} C_{2n-1} \overset{\sigma_{n}}{\longrightarrow} C_{2n}$ such that property $\mathbf{(^\star 2n)}$ is verified.  We will extend $\pi_{2n}$ to obtain a beginning of $Id$-run $C_{0} \overset{t_{1}}{\rightsquigarrow} C_{1} \overset{\sigma_{1}}{\longrightarrow} C_{2} \overset{t_{2}}{\rightsquigarrow} C_{3} \overset{\sigma_{2}}{\longrightarrow} \dots \overset{t_{n}}{\rightsquigarrow} C_{2n-1} \overset{\sigma_{n}}{\longrightarrow} C_{2n} \overset{t_{n+1}}{\rightsquigarrow} C_{2n+1} \overset{\sigma_{n+1}}{\longrightarrow} C_{2n+2}$ such that property $\mathbf{(^\star 2n+2)}$ is verified.\\
First, we construct $C_{2n+1} = C_{2n} + t_{n+1}$, so that $\pi_{2n+1}$ is $C_{0} \overset{t_{1}}{\rightsquigarrow} C_{1} \overset{\sigma_{1}}{\longrightarrow} C_{2} \overset{t_{2}}{\rightsquigarrow} C_{3} \overset{\sigma_{2}}{\longrightarrow} \dots \overset{t_{n}}{\rightsquigarrow} C_{2n-1} \overset{\sigma_{n}}{\longrightarrow} C_{2n} \overset{t_{n+1}}{\rightsquigarrow} C_{2n+1}$.  $\mathbf{(^\star 2n+1)}$ is trivially verified thanks to $\mathbf{(^\star 2n)}$.\\
Now, suppose that $C_{2n+1} = \lbrace (\ell_i,v_i)_{i \in I} \rbrace$.  We know that ( $\mathbf{(^\star 2n+1)}$ ) for each branch $\beta$ of $\pi_{2n}$, there exists a beginning of $Id$-run $\pi'$ of $\Aa$ on $\theta$ such that:
\begin{itemize}
\item either $\beta$ is a branch of $\pi'$ and $\pi'$ can not be prolonged into a complete $Id$-run of $\Aa$, 
\item or $\pi'$ can be prolonged into a complete $Id$-run $\pi_c$ of $\Aa$ such that $\beta$ is the beginning of a branch of $\pi_c$ on which each location of $F$ only occurs a finite number of times.
\end{itemize}
In particular, for each $(\ell_i,v_i) \in C_{2n+1}$, there exists a beginning of $Id$-run $\pi_{i}$ of $\Aa$ on $\theta$ such that $(\ell_i,v_i)$ is the (2n+1)th element of a branch of $\pi_{i}$. $\pi_{i}$ must then evolve towards a minimal model of $\delta(\ell_i,\sigma_{n+1})$ wrt $v_i$. Let us note $\delta(\ell_i,\sigma_{n+1})$ as $\underset{j \in J}{\bigvee}$ $\underset{k \in K_j}{\bigwedge} A_{jk}$, as previously.  As there is no accepting $Id$-run of $\Aa$ on $\theta$, whatever is the minimal model $A_j[v_i]$ we decide to take as successor of $(\ell_i,v_i)$ in a beginning of $Id$-run of $\Aa$ on $\theta$, and no matter how it is then prolonged, it will contain a branch on which each location of $F$ only occurs a finite number of times or will be "blocking".   Let us note $(\ell_{i}^{j}, v_{i}^{j})_{i \in I, j \in J}$ the successors of $(\ell_i,v_i)$ on the branches on which each location of $F$ only occurs a finite number of timed or are "blocking"\footnote{Remark that a branch of a $Id$-run of $\Aa$ is blocking reading $\sigma_{n+1}$ iff $\underset{k \in K_j}{\bigcup} A_{jk}$ contains a clock constraint $x \bowtie c$ not satisfied in $v_i$ or there is no transition from $\ell_i$ reading $\sigma_{n+1}$.  In the first case $\overline{A_{jk}}$ contains the negation of $x \bowtie c$, which is verified by $v_i$, and in the second case $\delta(\ell_i,\sigma_{n+1})$ = false and so $\overline{\delta}(\ell_i,\sigma_{n+1})$ = true. In both cases (in particular) $\underset{k \in K_j}{\bigvee} \overline{A_{jk}}$ will be satisfied in $\overline{\delta}(\ell_i,\sigma_{n+1})$ even if no successor is attributed to $(\ell_i,v_i)$ (the branch ending in $(\ell_i,v_i)$ in $pi_{2n}$ is finite but not blocking).}, considering a certain $Id$-run of $\Aa$ on $\theta$ going from $(\ell_i,v_i)$ to the minimal model $A_j[v_i]$. We construct $C_{2n+2} = \lbrace (\ell_{i}^{j},v_{i}^{j})_{i \in I, j \in J} \rbrace$ (it is actually the union, for $i \in I$, of minimal models of $\overline{\delta}(\ell_i,\sigma_{n+1}) = \underset{j \in J}{\bigwedge}$ $\underset{k \in K_j}{\bigvee} \overline{A_{jk}}$ wrt $v_i$).  By construction, $\mathbf{(^\star 2n+2)}$ is verified.\\
Thanks to this induction, we can construct an infinite $Id$-run $\pi$, $C_{0} \overset{t_{1}}{\rightsquigarrow} C_{1} \overset{\sigma_{1}}{\longrightarrow} C_{2} \overset{t_{2}}{\rightsquigarrow} C_{3} \overset{\sigma_{2}}{\longrightarrow} \dots \overset{t_{n}}{\rightsquigarrow} C_{2n-1} \overset{\sigma_{n}}{\longrightarrow} C_{2n} \dots$, such that for each branch $\beta$ of $\pi$, there exists a $Id$-run or a beginning of $Id$-run $\pi'$ of $\Aa$ on $\theta$ such that:
\begin{itemize}
\item either $\beta$ is a branch of $\pi'$ and $\pi'$ can not be prolonged into a complete $Id$-run of $\Aa$ (it is "blocking"): it means that $\beta$ is a finite and non-blocking branch of $\pi$, 
\item or $\pi'$ is a $Id$-run of $\Aa$ such that $\beta$ is a branch of this $Id$-run on which each location of $F$ only occurs a finite number of times.
\end{itemize}
So, all the infinite branches of $\pi$ visit $L \setminus F$ infinitely often and $\pi$ is an accepting $Id$-run of $\Aa^C$ on $\theta$.\qed
\end{proof}

\paragraph{\TATA and MITL} Equipped with those results we can now
expose the two main results of this section. First, the translation
from MTL to \ATA introduced in \cite{OW07} carries on to infinite
words (to the best of our knowledge this had not been proved before
and does not seem completely trivial since our proof requires the
machinery of \TATA developed in this paper). Second, for every MITL
formula $\varphi$, we can devise an
$M(\varphi)$-bounded\footnote{$M(\varphi) \; \leq \; \vert \varphi \vert \times
  \underset{I \in \mathcal{I}_\varphi}{\max} \left( 4 \times
  \left\lceil \frac{\inf(I)}{\vert I \vert} \right\rceil +2, 2 \times
  \left\lceil \frac{\sup(I)}{\vert I \vert} \right\rceil +2 \right)$,
  where $\mathcal{I}_\varphi$ is the set of all the intervals that occur
  in $\varphi$.} approximation function $\appfunc{\varphi}$ to bound
the number of intervals needed along all runs of the intervals
semantics of the \TATA $\Aa_\varphi$, while retaining the semantics of
$\varphi$. Notice that this second property fails when applied to
formulae $\varphi$ of MTL. \todo{T: ai ajouté derniere phrase. OK?}

\begin{theorem}
\label{equalSem}
For every MTL formula $\varphi$: $L^{\omega}(\Aa_\varphi)=\sem{\varphi}$.
\end{theorem}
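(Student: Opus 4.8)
The plan is to prove the two inclusions $L^{\omega}(\Aa_\varphi)\subseteq\sem{\varphi}$ and $\sem{\varphi}\subseteq L^{\omega}(\Aa_\varphi)$ separately, by induction on the structure of $\varphi$ (in negative normal form). Since $\Aa_\varphi$ is built syntactically from the subformulas of $\varphi$, each location of $\Aa_\varphi$ is itself an MTL formula, and the key invariant I would maintain is a \emph{local} correspondence: for every subformula $\psi$ of $\varphi$ whose outermost connective is $U$ or $\tilde U$ (hence a location of $\Aa_\varphi$), and for every position $i$ and every clock value recording the time elapsed appropriately, the copy of the automaton sitting in location $\psi$ with that clock value accepts the suffix of $\theta$ from position $i$ if and only if $(\theta,i)\models\psi$. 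The statement about $\varphi$ itself then follows from the initial transition $\delta(\varphi_{init},\sigma)=x.\delta(\varphi,\sigma)$, which simply resets the clock and starts reading $\varphi$ from position $1$ with clock $0$.

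First I would set up the precise semantic bridge between runs and satisfaction. For the forward direction, given an accepting $Id$-run, I would read off from the DAG structure a witness for each temporal modality: a branch staying in location $\varphi_1 U_I\varphi_2$ corresponds to choosing the second disjunct of $\delta(\varphi_1 U_I\varphi_2,\sigma)$, which carries the obligation $\varphi_1$ and the guard $x\le\sup(I)$, while leaving the location via the first disjunct (guarded by $x\in I$) witnesses the eventual satisfaction of $\varphi_2$ within $I$. The crucial point is that acceptance forbids a branch from remaining forever in the non-accepting location $\varphi_1 U_I\varphi_2$, so the until-obligation must eventually be discharged; dually, the accepting location $\varphi_1\tilde U_I\varphi_2$ may be visited forever, matching the release semantics. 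This is exactly where Proposition~\ref{simpleAC} does the work: I would invoke it to replace the general B\"uchi condition by the simpler requirement that each branch \emph{eventually stays} in accepting locations, which directly matches the semantics of $\tilde U$ and rules out unfulfilled $U$-obligations. For the converse inclusion I would, given $\theta\models\varphi$, construct an accepting run top-down, at each step choosing disjuncts of $\delta$ according to which subformulas hold at the current position and clock value, using the inductive hypothesis to populate the spawned copies consistently.

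The step I expect to be the main obstacle is handling the \emph{infinite} acceptance condition correctly for the until modality, precisely the point that fails for finite words and that the excerpt flags as nontrivial. In the finite-word setting one only checks that the run terminates in accepting configurations, but over infinite words one must rule out a branch that stays in $\varphi_1 U_I\varphi_2$ forever while the guard $x\le\sup(I)$ keeps being satisfiable because the clock is reset — one needs the bounded-interval structure of $I$ (non-singular, with $\sup(I)<\infty$ when relevant, or the liveness forced by the disjunct guarded by $x\in I$) to guarantee that the obligation cannot be postponed indefinitely without violating B\"uchi acceptance. I would argue that any branch trapped in a $U$-location would contradict acceptance via Proposition~\ref{simpleAC}, and conversely that satisfaction of $\varphi_1 U_I\varphi_2$ provides a finite witness position $j$ that lets the constructed run leave the $U$-location, so that no infinite branch gets stuck in a non-accepting set. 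The \TATA structure, ensuring branches descend through the partial order $\preccurlyeq$ and can only loop within a single $L_i$, is what makes this branch-by-branch analysis tractable, so I would lean on it throughout.
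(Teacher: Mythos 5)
Your proposal is correct in outline but follows a genuinely different route from the paper. The paper's proof is a short reduction: it cites the finite-word correctness result \cite[Prop.~6.4]{OW07}, notes that that argument hinges on complementation of \ATA (available over finite words but not for general \ATA over infinite words), and supplies Proposition~\ref{Prop2} --- dualising a \TATA complements its infinite-word language --- as the one new ingredient needed to replay that argument. Complementation is precisely what spares the paper the work you propose to do: since the automaton built from the negative normal form of $\neg\varphi$ is (up to renaming) the dual $(\Aa_\varphi)^C$, Proposition~\ref{Prop2} converts the constructive inclusion $\sem{\neg\varphi}\subseteq L^{\omega}(\Aa_{\neg\varphi})$ into the hard inclusion $L^{\omega}(\Aa_\varphi)\subseteq\sem{\varphi}$, so only the satisfaction-to-run direction ever has to be proved by induction. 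You instead prove both inclusions by direct structural induction, never invoking complementation, and rely on Proposition~\ref{simpleAC} together with the B\"uchi condition and the tree-like order for the branch analysis (no infinite branch may stay trapped in a non-accepting $U$-location, while branches may remain forever in an accepting $\tilde{U}$-location). This is sound and self-contained --- it does not need Proposition~\ref{Prop2} at all --- at the price of carrying out by hand the run-to-satisfaction direction and the $\tilde{U}$ cases that the paper gets for free by duality. Two details need tightening in a write-up: (i) your invariant cannot literally read ``the copy $(\psi,v)$ accepts the suffix from position $i$ iff $(\theta,i)\models\psi$''; it must be the clock-anchored version, e.g.\ for $\psi=\varphi_1 U_I\varphi_2$: there exists $j\geq i$ with $(\theta,j)\models\varphi_2$, $\tau_j-\tau_i+v\in I$, and $(\theta,k)\models\varphi_1$ for all $i\leq k<j$; and (ii) the copy that stays in $\varphi_1 U_I\varphi_2$ does \emph{not} have its clock reset (only the spawned $\varphi_1$-copies do), and no boundedness of $I$ is needed or available (consider $I=[0,+\infty)$) --- as you then correctly conclude, it is B\"uchi acceptance alone, via Proposition~\ref{simpleAC}, that rules out the trapped branch.
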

\begin{proof}
This has been proved in the finite words case in
  \cite[Prop.~6.4]{OW07}. This proof relies crucially on the
  fact that \ATA can be complemented in this case. Thanks to
  Proposition~\ref{Prop2}, we can adapt the proof of~\cite{OW07}.\qed
\end{proof}

We will now show there exists a family of bounded approximation functions $\appfunc{\varphi}$ s.t. for every MITL formula $\varphi$, $L^{\omega}_{\appfunc{\varphi}}(\mathcal{A}_{\varphi}) = L^{\omega}(\mathcal{A}_{\varphi})$.  We first recall from \cite{BEG13} the exact value of the upper bound $M(\varphi)$ on the number of clock copies (intervals) we need to consider in the configurations to recognise an MITL formula $\varphi$. Then, we recall the definition, for an MITL formula $\varphi$, of the approximation function $\appfunc{\varphi}$.\\
Let $\varphi$ be an MITL formula
in negative normal form.  We define $M(\varphi)$, thanks to
$M^{\infty}(\varphi)$ and $M^{1}(\varphi)$ defined as follows
\begin{itemize}
\item if $\varphi = \sigma$ or $\varphi=\neg \sigma$ (with $\sigma\in
  \Sigma$), then $M(\varphi) = 2$ and $M^{\infty}(\varphi) = M^{1}(\varphi) =
  0$.
\item if $\varphi = \varphi_{1} \wedge \varphi_{2}$, then $M(\varphi) = \max \{2,
  M^{1}(\varphi_{1}) + M^{1}(\varphi_{2})\}$, $M^{\infty}(\varphi) =
  M^{\infty}(\varphi_{1}) + M^{\infty}(\varphi_{2})$ and $M^{1}(\varphi)
  = M^{1}(\varphi_{1}) + M^{1}(\varphi_{2})$.
\item if $\varphi = \varphi_{1} \vee \varphi_{2}$, then $M(\varphi) = \max
  \{2, M^{1}(\varphi_{1}), M^{1}(\varphi_{2})\}$, $M^{\infty}(\varphi) =\break
  \max \{ M^{\infty}(\varphi_{1}), M^{\infty}(\varphi_{2})\}$ and
  $M^{1}(\varphi) = \max \{ M^{1}(\varphi_{1}), M^{1}(\varphi_{2})\}$.
\item if $\varphi = \varphi_{1}U_{I} \varphi_{2}$, then $M(\varphi) = \max
  \{ 2, M^{\infty}(\varphi_{1}) +M^{1}(\varphi_{2}) +1 \}$,
  $M^{\infty}(\varphi) = \left(4\times\left\lceil \frac{inf(I)}{\vert I
        \vert} \right\rceil +2\right) +M^{\infty}(\varphi_{1})
  +M^{\infty}(\varphi_{2})$ and $M^{1}(\varphi) = M^{\infty}(\varphi_{1})
  +M^{1}(\varphi_{2}) +1$.
\item if $\varphi = \varphi_{1} \tilde{U}_{I} \varphi_{2}$, then $M(\varphi)
  = \max \{2, M_{1}(\varphi_{1}) +M_{\infty}(\varphi_{2}) +1\}$,
  $M_{\infty}(\varphi) = \left(2\times \left\lceil \frac{sup(I)}{\vert I
        \vert} \right\rceil +2\right) +M_{\infty}(\varphi_{1})
  +M_{\infty}(\varphi_{2})$ and $M_{1}(\varphi) = M_{1}(\varphi_{1})
  +M_{\infty}(\varphi_{2}) +1$.
\end{itemize}

Throughout this description, we assume an \ATA $\Aa$ with
set of locations $L$. Let
$S=\{(\ell,I_0),(\ell,I_1),\ldots,(\ell,I_m)\}$ be a set of
states of $\Aa$, all in the same location $\ell$, with $I_0
< I_1 < \cdots < I_m$. Then, we let $\mer{S}=\{(\ell, [0,sup(I_1)]),
(\ell, I_2),\ldots, (\ell,I_m)\}$ if $I_0=[0,0]$ and $\mer{S}=S$
otherwise, i.e., $\mer{S}$ is obtained from $S$ by \emph{grouping}
$I_0$ and $I_1$ iff $I_0=[0,0]$, otherwise $\mer{S}$ does not modify
$S$. Observe that, in the former case, if $I_{1}$ is not a singleton,
then $\nbclocks{\mer{S}}=\nbclocks{S}-1$. Now, we can lift the
definition of \mername to configurations. Let $C$ be a configuration
of $\Aa$ and let $k\in \N$. We let:
\begin{align*}
  \mer{C,k} &= \big\{C'\mid \nbclocks{C'}\leq k\text{ and } \forall
  \ell\in L: C'(\ell)\in\{\mer{C(\ell)}, C(\ell)\}\big\}
\end{align*}
Observe that $\mer{C,k}$ is a (possibly empty) \emph{set of
  configurations}, where each configuration $(i)$ has at most $k$
clock copies, and $(ii)$ can be obtained by applying (if possible) or
not the $\mername$ function to each $C(\ell)$. Let us now define a
family of $k$-bounded approximation functions, based on
$\mername$. Let $k\geq 2\times |L|$ be a bound and let $C$ be a
configuration, assuming that $C(\ell)=\{I^\ell_1,\ldots,
I^\ell_{m_\ell}\}$ for all $\ell\in L$. Then:
\begin{align*}
  F^k(C) &=
  \left\{
    \begin{array}{ll}
     \mer{C,k} &\text{If } \mer{C,k}\neq\emptyset\\
     \big\{(\ell, [inf(I^\ell_1),sup(I^\ell_{m_\ell})])\mid \ell\in L\big\} &\text{otherwise}
    \end{array}
  \right.
\end{align*}
Roughly speaking, the $F^k(C)$ function tries to obtain configurations
$C'$ that approximate $C$ and s.t. $\nbclocks{C'}\leq k$, using the
$\mername$ function. If it fails to, i.e., when $\mer{C,k}=\emptyset$,
$F^k(C)$ returns a single configuration, obtained from $C$ be grouping
all the intervals in each location. The latter case occurs in the
definition of $F^k$ for the sake of completeness. When the \ATA $\Aa$
has been obtained from an MITL formula $\varphi$, and for $k$ big enough
(see hereunder) each $\theta\in\sem{\varphi}$ will be recognised by at
least one $F^k$-run of $\Aa$ that traverses only configurations
obtained thanks to $\mername$.  We can now finally define
$\appfunc{\varphi}$ for every MITL formula $\varphi$, by letting
$\appfunc{\varphi}=F^K$, where $K=\max\{2\times |L|,M(\varphi)\}$.\\
We insist on the importance of Proposition \ref{simpleAC} and Theorem \ref{equalSem} that enable to adapt with a large facility the proof of Theorem 13 from \cite{BEG13} to infinite words, providing the following theorem:

\begin{theorem}
  \label{ThmBorne}
  For every MITL formula $\varphi$, there exists an $M(\varphi)$-bounded
    approximation function $\appfunc{\varphi}$ such that $L^{\omega}_{\appfunc{\varphi}}(\mathcal{A}_{\varphi}) =
  L^{\omega}(\mathcal{A}_{\varphi})=\sem{\varphi}$.
\end{theorem}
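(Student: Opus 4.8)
The plan is to split the double equality into the three relations $L^{\omega}(\Aa_\varphi)=\sem{\varphi}$, $L^{\omega}_{\appfunc{\varphi}}(\Aa_\varphi)\subseteq L^{\omega}(\Aa_\varphi)$ and $L^{\omega}(\Aa_\varphi)\subseteq L^{\omega}_{\appfunc{\varphi}}(\Aa_\varphi)$, and to treat the $M(\varphi)$-boundedness of $\appfunc{\varphi}$ together with the last one. The first relation is exactly Theorem~\ref{equalSem} applied to $\varphi$ (every MITL formula is an MTL formula), so nothing new is needed there. The second relation is immediate from Proposition~\ref{inclu}, since $\appfunc{\varphi}=F^K\in\appf{\Aa_\varphi}$ is an approximation function. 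Hence all the work lies in the third inclusion together with the boundedness claim.

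For that inclusion I would start from an accepting $Id$-run $\pi$ of $\Aa_\varphi$ on some $\theta\in L^{\omega}(\Aa_\varphi)$ and build, step by step along the DAG of $\pi$, an $\appfunc{\varphi}$-run $\pi'$ on $\theta$ whose $i$-th configuration is obtained from that of $\pi$ by repeatedly applying $\mername$ inside each location. Concretely, at every discrete step I would feed the current configuration to $F^K$ and select the image configuration that shadows $\pi$, i.e.\ the one grouping, in each location $\ell$, the freshly reset copy $(\ell,[0,0])$ with the next interval, exactly as in the finite-words construction of \cite{BEG13}. Two invariants must be maintained: (a) the merged configuration still admits a continuation shadowing $\pi$, so that $\pi'$ is a genuine run; and (b) the number of clock copies stays $\leq M(\varphi)$ along $\pi'$, so that the fall-back branch of $F^K$ is never triggered and every configuration of $\pi'$ is reached through $\mername$ only. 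Invariant (b) is proved by structural induction on $\varphi$: the quantities $M^{\infty}$ and $M^{1}$ bound the number of surviving intervals in a location according to whether its clock may be reset repeatedly or not, and the term $4\lceil \inf(I)/|I|\rceil+2$ (resp.\ $2\lceil \sup(I)/|I|\rceil+2$) counts how many pairwise $|I|$-separated intervals can coexist in a $U_I$ (resp.\ $\tilde U_I$) location before their clocks pass the threshold that makes the guard $x\in I$ (resp.\ $x\leq\sup(I)$) permanently decided. This is precisely the merging invariant that makes $\mername$ safe for MITL and is inherited from the finite-words analysis.

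The genuinely new point with respect to \cite{BEG13} is the transfer of the \emph{Büchi acceptance} of $\pi$ to $\pi'$, and this is where Proposition~\ref{simpleAC} is essential. Since $\Aa_\varphi$ is a \TATA, Proposition~\ref{simpleAC} lets me replace the Büchi condition by the simpler requirement that every infinite branch eventually visits accepting \emph{locations only}. Because $\mername$ groups intervals only within a single location, every branch of $\pi'$ follows, step for step, the same sequence of \emph{locations} as some branch of $\pi$; hence an infinite branch of $\pi'$ eventually stays in accepting locations exactly because the corresponding branch of $\pi$ does. Applying Proposition~\ref{simpleAC} to both runs, $\pi'$ is therefore accepting, which gives $\theta\in L^{\omega}_{\appfunc{\varphi}}(\Aa_\varphi)$ and completes the inclusion.

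The main obstacle I anticipate is establishing invariants (a) and (b) \emph{simultaneously}: one must guarantee that the greedy grouping never discards a clock value needed to witness acceptance of $\pi$ while still forcing the count down to $M(\varphi)$. In the finite-words setting this is the heart of \cite[Thm.~13]{BEG13}; lifting it requires checking that no part of the argument used finiteness of the run and that the limit of the shadowing construction is still a well-defined infinite $\appfunc{\varphi}$-run. The tree-like structure is what makes this safe: since each block of the partition of $L$ is entirely accepting or entirely non-accepting and $\mername$ acts within a single location (hence within a single block), grouping cannot merge an accepting branch with a non-accepting one in a way that would spoil the eventual-location characterisation of acceptance provided by Proposition~\ref{simpleAC}.
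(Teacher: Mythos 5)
Your proposal is correct and takes essentially the same route as the paper, whose own ``proof'' consists of the remark immediately preceding the theorem: Proposition~\ref{simpleAC} and Theorem~\ref{equalSem} make it possible to adapt the proof of Theorem~13 of \cite{BEG13} to infinite words, and your decomposition (Theorem~\ref{equalSem} for $L^{\omega}(\Aa_\varphi)=\sem{\varphi}$, Proposition~\ref{inclu} for the easy inclusion, a \mername-based shadowing of an accepting $Id$-run bounded by $M(\varphi)$ for the hard inclusion, with Proposition~\ref{simpleAC} used to transfer the B\"uchi condition to the approximated run) is exactly that plan in expanded form. One caution on wording: the grouping cannot literally be ``always merge the fresh copy $(\ell,[0,0])$ with the next interval'' --- the paper's own Fig.~\ref{reprun} ($\pi'$ versus $\pi''$) shows that greedy merging destroys acceptance --- but since you require the merge to preserve a continuation shadowing $\pi$ (your invariant (a)) and defer that choice to the finite-word analysis of \cite{BEG13}, exactly as the paper does, this is an imprecision of phrasing rather than a gap.
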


\begin{figure}[t]
\centering
\begin{tikzpicture}[scale=.9]

\begin{scope}
\draw (0,0) node [circle, double, draw, inner sep=3pt] (A) {$\ell_{\Box}$};
\draw (3,0) node [circle, draw, inner sep=3pt] (B) {$\ell_\lozenge$};

\draw [-latex'] (-1,0) -- (A) ;
\draw [-latex'] (A) .. controls +(230:1.2cm) and +(310:1.2cm) .. (A) node [pos=0.5,below] {$b$};
\draw [-latex'] (B) .. controls +(230:1.2cm) and +(310:1.2cm) .. (B) node [pos=0.5,below] {$a,b$};
\draw [-latex'] (B) -- (5.2,0) node [pos=.5,above] {$\begin{array}{c}b\\ x \in [1,2]\end{array}$};
\draw[-*] (A) -- (1.5,0) node [pos=.3,below] {$a$};
\draw[-latex'] (1.5,0) -- (B) node [pos=.5,below] {$x := 0$};
\draw [-latex'] (1.4,0) .. controls +(-220:0.8cm) .. (A);
\end{scope}

\begin{scope}[shift={(5.7,0)},scale=.9]
\fill (0,0) circle (0.05cm);
\draw (0,-0.3) node {\scriptsize{0}};
\fill (2,0) circle (0.05cm);
\draw (2,-0.3) node {\scriptsize{1}};
\draw (4,-0.3) node {\scriptsize{2}};
\draw (6,-0.3) node {\scriptsize{3}};

\draw[-latex'] (0,0) -- (7,0);
\draw (7,-0.3) node {\tiny{time}};

\fill (0.6,0) circle (0.03cm);
\fill (0.8,0) circle (0.03cm);
\fill (1,0) circle (0.03cm);
\fill (1.2,0) circle (0.03cm);
\fill (1.4,0) circle (0.03cm);
\fill (1.6,0) circle (0.03cm);
\fill (1.8,0) circle (0.03cm);

\fill (2.2,0) circle (0.03cm);
\fill (2.4,0) circle (0.03cm);
\fill (2.6,0) circle (0.03cm);
\fill (2.8,0) circle (0.03cm);
\fill (3,0) circle (0.03cm);
\fill (3.2,0) circle (0.03cm);
\fill (3.4,0) circle (0.03cm);
\fill (3.6,0) circle (0.03cm);

\fill (4.2,0) circle (0.03cm);
\fill (4.4,0) circle (0.03cm);
\fill (4.6,0) circle (0.03cm);
\fill (4.8,0) circle (0.03cm);
\fill (5,0) circle (0.03cm);
\fill (5.2,0) circle (0.03cm);
\fill (5.4,0) circle (0.03cm);
\fill (5.6,0) circle (0.03cm);
\fill (5.8,0) circle (0.03cm);

\fill (6.2,0) circle (0.03cm);
\fill (6.4,0) circle (0.03cm);
\fill (6.6,0) circle (0.03cm);
\fill (6.8,0) circle (0.03cm);

\fill (0.2,0) circle (0.05cm);
\draw (0.2,0.3) node {$a$};
\fill (0.4,0) circle (0.05cm);
\draw (0.4,0.3) node {$a$};
\fill (3.8,0) circle (0.05cm);
\draw (3.8,0.3) node {$a$};
\fill (4,0) circle (0.05cm);
\draw (4,0.3) node {$b$};
\fill (6,0) circle (0.05cm);
\draw (6,0.3) node {$b$};

\draw[-] (0.05,0.5) -- (0.53,0.5);
\draw[-] (0.05,0.3) -- (0.05,0.5);
\draw[-] (0.53,0.3) -- (0.53,0.5);
\draw[-latex'] (0.3,0.5) .. controls +(80:0.9cm) and +(130:0.9cm) .. (4,0.5);

\draw[-] (3.65,0.5) -- (3.91,0.5);
\draw[-] (3.65,0.3) -- (3.65,0.5);
\draw[-] (3.91,0.3) -- (3.91,0.5);
\draw[-latex'] (3.8,0.5) .. controls +(80:0.9cm) and +(130:0.9cm) .. (6,0.5);
\end{scope}
\end{tikzpicture}
\caption{(left) \ATA $\mathcal{A}_{\varphi}$ with $\varphi=\qed ( a
  \Rightarrow \lozenge_{[1,2]} b)$. \ (right) The grouping of clocks. \label{ExGilles}\label{Graph1}}
\end{figure}

\begin{example}
 Let us illustrate the idea behind the approximation function
  $\appfunc{\varphi}$ by considering again the run prefixes on $\theta
  = (a,0.1) (a,0.2) (a,1.9) (b,2) (b,3) (b,4)\ldots$ in
  Fig.~\ref{reprun}. The two first positions (with
  $\sigma_1=\sigma_2=a$) of $\theta$ satisfy $\lozenge_{[1,2]}b$,
  \emph{thanks to the $b$ in position $4$} (with $\tau_4=2$), while
  position 3 (with $\sigma_3=a$) satisfies $\lozenge_{[1,2]}b$
  \emph{thanks to the $b$ in position $5$} (with $\tau_5=3$), see
  Fig.~\ref{Graph1} (right). Hence, $\appfunc{\varphi}$ groups the two
  clock copies created in $\ell_\lozenge$ when reading the two first
  $a$'s, but keeps the third one apart. This yields the
  $\appfunc{\varphi}$-run $\pi''$ in Fig.~\ref{reprun}. On the other
  hand, the strategy of grouping all the clock copies present in each
  location, which yields $\pi'$, is not a good solution. This prefix
  cannot be extended to an accepting run because of the copy in state
  $(\ell_\lozenge, [2.1, 2.9])$ in the rightmost configuration, that
  will never be able to visit an accepting location.

\begin{figure}[t]
\centering
\begin{tikzpicture}[xscale=.82,yscale=.9]

\everymath{\scriptstyle}

\draw (0,0) node {$\pi$};
\draw (1,0.7) node (Z) {$C_0$};
\draw (1, 0) node [rectangle,fill=black!10!white,inner sep=1pt,rounded corners=1mm] (A) {$\begin{array}{l}(\ell_\Box,0)\\ \end{array}$};
\draw[white] (1,0.5) -- (1,1) node [pos=0.35, sloped, left, black] {$=$};

\draw (3,1.4) node (Y) {$C_2$};
\draw[-latex'] (Z) -- (2.6,1.4) node [pos=.4,sloped, above] {$0.1, a$} node [pos=1, below] {$_{Id}$};
\draw (3,0.9) node [rectangle,fill=black!10!white,inner sep=1pt,rounded corners=2mm,below] {$\begin{array}{l} (\ell_\Box,0.1)\\ \\ \\ (\ell_\lozenge,0) \end{array}$};
\draw[-latex'] (A) -- (2.3,0.6);
\draw[-latex'] (A) -- (2.3,-0.6);
\draw[white] (3,1.2) -- (3,1.7) node [pos=0.3, sloped, left, black] {$=$};

\draw (5.5,1.4) node (X) {$C_4$};
\draw[-latex'] (Y) -- (X) node [pos=.5,above] {$0.1, a$} node [pos=1, below] {$_{Id}$};
\draw (5.5,0.9) node [rectangle,fill=black!10!white,inner sep=1pt,rounded corners=2mm,below] {$\begin{array}{l} (\ell_\Box,0.2)\\ \\ (\ell_\lozenge,0)\\ (\ell_\lozenge,0.1) \end{array}$};
\draw[-latex'] (3.7,0.6) -- (4.8,0.6);
\draw[-latex'] (3.7,0.5) -- (4.8,-0.2);
\draw[-latex'] (3.7,-0.6) -- (4.8,-0.6);
\draw[white] (5.5,1.2) -- (5.5,1.7) node [pos=0.3, sloped, left, black] {$=$};

\draw (8.1,1.4) node (W) {$C_6$};
\draw[-latex'] (X) -- (W) node [pos=.5,above] {$1.7, a$} node [pos=1, below] {$_{Id}$};
\draw (8.1,0.9) node [rectangle,fill=black!10!white,inner sep=1pt,rounded corners=2mm,below] {$\begin{array}{l} (\ell_\Box,1.9)\\ (\ell_\lozenge,0)\\ (\ell_\lozenge,1.7)\\ (\ell_\lozenge,1.8) \end{array}$};
\draw[-latex'] (6.2,0.6) -- (7.4,0.6);
\draw[-latex'] (6.2,0.5) -- (7.4,0.2);
\draw[-latex'] (6.2,-0.2) -- (7.4,-0.2);
\draw[-latex'] (6.2,-0.6) -- (7.4,-0.6);
\draw[white] (8.1,1.2) -- (8.1,1.7) node [pos=0.3, sloped, left, black] {$=$};

\draw (10.7,1.4) node (V) {$C_8$};
\draw[-latex'] (W) -- (V) node [pos=.5,above] {$0.1, b$} node [pos=1, below] {$_{Id}$};
\draw (10.7,0.9) node [rectangle,fill=black!10!white,inner sep=1pt,rounded corners=2mm,below] {$\begin{array}{l} (\ell_\Box,2)\\ (\ell_\lozenge,0.1)\\ \end{array}$};
\draw[-latex'] (8.8,0.6) -- (10,0.6);
\draw[-latex'] (8.8,0.2) -- (10,0.2);
\draw[white] (10.7,1.2) -- (10.7,1.7) node [pos=0.3, sloped, left, black] {$=$};

\draw (13.3,1.4) node (U) {$C_{10}$};
\draw[-latex'] (V) -- (U) node [pos=.5,above] {$1, b$} node [pos=1, below] {$_{Id}$};
\draw (13.3,0.9) node [rectangle,fill=black!10!white,inner sep=1pt,rounded corners=1mm,below] {$\begin{array}{l} (\ell_\Box,3)\\ \end{array}$};
\draw[-latex'] (11.4,0.6) -- (12.7,0.6);
\draw[white] (13.3,1.2) -- (13.3,1.7) node [pos=0.3, sloped, left, black] {$=$};

\begin{scope}[yshift=-1.6cm]

\draw (0,0) node {$\pi'$};
\draw (1, 0) node [rectangle,fill=black!10!white,inner sep=1pt,rounded corners=1mm] (A) {$\begin{array}{l}(\ell_\Box,0)\\ \end{array}$};

\draw (3,0.47) node [rectangle,fill=black!10!white,inner sep=1pt,rounded corners=2mm,below] {$\begin{array}{l} (\ell_\Box,0.1)\\ (\ell_\lozenge,0) \end{array}$};
\draw[-latex'] (A) -- (2.3,0.22);
\draw[-latex'] (A) -- (2.3,-0.22);

\draw (5.5,0.47) node [rectangle,fill=black!10!white,inner sep=1pt,rounded corners=2mm,below] {$\begin{array}{l} (\ell_\Box,0.2)\\ (\ell_\lozenge,[0,0.1]) \end{array}$};
\draw[-latex'] (3.7,0.22) -- (4.6,0.22);
\draw[-latex'] (3.7,0.18) -- (4.6,-0.22);
\draw[-latex'] (3.7,-0.22) -- (4.6,-0.22);

\draw (8.1,0.47) node [rectangle,fill=black!10!white,inner sep=1pt,rounded corners=2mm,below] {$\begin{array}{l} (\ell_\Box,1.9)\\ (\ell_\lozenge,[0,1.8]) \end{array}$};
\draw[-latex'] (6.4,0.22) -- (7.2,0.22);
\draw[-latex'] (6.4,0.18) -- (7.2,-0.22);
\draw[-latex'] (6.4,-0.22) -- (7.2,-0.22);

\draw (10.7,0.47) node [rectangle,fill=black!10!white,inner sep=1pt,rounded corners=2mm,below] {$\begin{array}{l} (\ell_\Box,2)\\ (\ell_\lozenge,[0.1,1.9]) \end{array}$};
\draw[-latex'] (9,0.22) -- (9.65,0.22);
\draw[-latex'] (9,-0.22) -- (9.65,-0.22);

\draw (13.3,0.47) node [rectangle,fill=black!10!white,inner sep=1pt,rounded corners=2mm,below] {$\begin{array}{l} (\ell_\Box,3)\\ (\ell_\lozenge,[2.1,2.9]) \end{array}$};
\draw[-latex'] (11.75,0.22) -- (12.25,0.22);
\draw[-latex'] (11.75,-0.22) -- (12.25,-0.22);
\end{scope}

\begin{scope}[yshift=-3cm]
\draw (0,0) node {$\pi''$};
\draw (1, 0) node [rectangle,fill=black!10!white,inner sep=1pt,rounded corners=1mm] (A) {$\begin{array}{l}(\ell_\Box,0)\\ \end{array}$};

\draw (3,0.7) node [rectangle,fill=black!10!white,inner sep=1pt,rounded corners=2mm,below] {$\begin{array}{l} (\ell_\Box,0.1)\\ \\ (\ell_\lozenge,0) \end{array}$};
\draw[-latex'] (A) -- (2.3,0.4);
\draw[-latex'] (A) -- (2.3,-0.4);

\draw (5.5,0.7) node [rectangle,fill=black!10!white,inner sep=1pt,rounded corners=2mm,below] {$\begin{array}{l} (\ell_\Box,0.2)\\ \\ (\ell_\lozenge,[0,0.1]) \end{array}$};
\draw[-latex'] (3.7,0.4) -- (4.6,0.4);
\draw[-latex'] (3.7,0.35) -- (4.6,-0.4);
\draw[-latex'] (3.7,-0.4) -- (4.6,-0.4);

\draw (8.1,0.7) node [rectangle,fill=black!10!white,inner sep=1pt,rounded corners=2mm,below] {$\begin{array}{l} (\ell_\Box,1.9)\\ (\ell_\lozenge,0)\\ (\ell_\lozenge,[1.7,1.8]) \end{array}$};
\draw[-latex'] (6.4,0.4) -- (7.1,0.4);
\draw[-latex'] (6.4,0.35) -- (7.1,0);
\draw[-latex'] (6.4,-0.4) -- (7.1,-0.4);

\draw (10.7,0.7) node [rectangle,fill=black!10!white,inner sep=1pt,rounded corners=2mm,below] {$\begin{array}{l} (\ell_\Box,2)\\ (\ell_\lozenge,0.1)\\ \end{array}$};
\draw[-latex'] (9.15,0.4) -- (10,0.4);
\draw[-latex'] (9.15,0) -- (10,0);

\draw (13.3,0.7) node [rectangle,fill=black!10!white,inner sep=1pt,rounded corners=1mm,below] {$\begin{array}{l} (\ell_\Box,3)\\ \end{array}$};
\draw[-latex'] (11.4,0.4) -- (12.7,0.4);
\end{scope}

\end{tikzpicture}
\caption{Several \ATA runs.}
\label{reprun}
\end{figure}
\end{example}

\paragraph{A Büchi transition system for each \ATA.} Thanks to the bound $M(\varphi)$ on the number of necessary clock copies and the approximation function $\appfunc{\varphi}$, given by Theorem \ref{ThmBorne}, we can use the method of Miyano Hayashi \cite{MH84} to construct a Büchi timed transition system from $A_\varphi$: $\mhts{\Aa_\varphi,\appfunc{\varphi}}$. Each state of a configuration of $\Aa_\varphi$ will be marked by $\top$ or $\bot$. Intuitively, a state is
marked by $\top$ iff all the branches it belongs to have visited a
final location of $\Aa_\varphi$ since the last accepting state of
$\mhts{\Aa_\varphi,\appfunc{\varphi}}$ (i.e., a state where all markers are $\top$).
Formally, for an $\ATA$ $\Aa$ and an approximation function $f$, we define the transition system $\mhts{\Aa,f} = ( \Sigma, S^{\sf MH}, s^{\sf MH}_{0}, \rightsquigarrow^{\sf MH}, \rightarrow^{\sf MH}, \alpha )$ where: $S^{\sf MH} = \lbrace (\ell_k, I_k, m_k)_{k \in K} \; \vert \; \lbrace(\ell_k, I_k)_{k \in K}\rbrace \text{ is a configuration of } \Aa \text{ and } \forall k \in K, m_k \in \lbrace \top,\bot \rbrace$, $s^{\sf MH}_{0} = \lbrace (\ell_0, [0,0], \bot) \rbrace$ (because $\ell_0 \notin F$), $\alpha = \lbrace (\ell_k, I_k, m_k)_{k \in K} \in S^{\sf MH} \; \vert \; \forall k \in K, m_k = \top \rbrace$. For $t \in \R$ and $s, s' \in S$, supposing $s = \lbrace (\ell_k, I_k, m_k)_{k \in K} \rbrace$, we have $s \overset{t}{\rightsquigarrow}^{\sf MH} s'$ iff  $s' = \lbrace (\ell_k, I_k + t, m_k)_{k \in K} \rbrace$. $\rightsquigarrow^{\sf MH} = \underset{t \in \R}{\bigcup} \overset{t}{\rightsquigarrow}^{\sf MH}$.\\
"$\rightarrow^{\sf MH}$", representing the discrete transitions, is defined in way that (1) a transition with $\rightarrow^{\sf MH}$ between two states of $S^{\sf MH}$ corresponds to a discrete transition between the configurations of $\Aa$ they represent and (2) the markers of the third component of states of $S^{\sf MH}$ are kept updated.  Actually, if $s \overset{\sigma}{\rightarrow}^{\sf MH} s'$, we want the last component of a trio of $s'$ to be $\bot$ iff its first component is not in $F$ and, either it comes from the grouping of trios emanating from trios such that at least one of them had $\bot$ as last component.  Moreover, if $s \in \alpha$, we want to start again the marking: we put all the third components of the trios of $s$ to $\bot$ before proceeding as previously. 
Formally:
\begin{enumerate}
\item[(i)] For $s \in S^{\sf MH} \setminus \alpha$ and $s' \in S^{\sf MH}$, supposing $s = \lbrace (\ell_k, I_k, m_k)_{k \in K} \rbrace$ and $s' = \lbrace (\ell_{k'}, I_{k'}, m_{k'})_{k' \in K'} \rbrace$, $s \overset{\sigma}{\rightarrow}^{\sf MH} s'$ iff
\begin{enumerate}
\item $\lbrace (\ell_k, I_k)_{k \in K} \rbrace \overset{\sigma}{\longrightarrow}_{f} \lbrace (\ell_{k'}, I_{k'})_{k' \in K'} \rbrace$, i.e 
$\break \lbrace (\ell_{k'}, I_{k'})_{k' \in K'} \rbrace \in f( \succ(\lbrace (\ell_k, I_k)_{k \in K} \rbrace, \sigma))$;
\item $\forall k' \in K'$: $\left( \ell_{k'} \in F \; \Rightarrow \; m_{k'} = \top \right)$ ;
\item $\forall k' \in K'$ with $\ell_{k'} \notin F$: if $\exists k \in K$ s.t. $(\ell_k,I_k,\bot)\in s$ and  $(\ell_{k'},I_{k'}) \in\dest(\conf{s},\conf{s'},(\ell_k,I_k,\bot))$, we have $m_{k'}=\bot$ ; otherwise, $m_{k'} = \top$.
\end{enumerate}
\item[(ii)] For $s \in \alpha$ and $s' \in S^{\sf MH}$, supposing $s = \lbrace (\ell_k, I_k, \top)_{k \in K} \rbrace$, $s \overset{\sigma}{\rightarrow}^{\sf MH} s'$ iff $\lbrace (\ell_k, I_k, \bot)_{k \in K} \rbrace \overset{\sigma}{\rightarrow}^{\sf MH} s'$ according to the rules in (i).
\end{enumerate}

\begin{proposition}
\label{Prop3}
Let $\Aa$ be an \ATA and $f$ be an approximation function: $L^\omega (\mhts{\Aa,f}) = L^\omega_{f}(\Aa)$.
\end{proposition}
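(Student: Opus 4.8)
The plan is to reduce the equality of languages to a statement about single runs by exhibiting a tight correspondence between the two models. Once the underlying sequence of configurations $C_0, C_2, C_4,\dots$ of an $f$-run $\pi$ of $\Aa$ on $\theta$ is fixed, rules (i) and (ii) determine the marker of every state \emph{uniquely}: the third component of a state of $\mhts{\Aa,f}$ is a deterministic function of the $\dest$ relation between consecutive configurations and of whether the source configuration lies in $\alpha$. Hence projecting away the markers turns every run $\rho$ of $\mhts{\Aa,f}$ on $\theta$ into an $f$-run $\pi$ of $\Aa$ on $\theta$, and conversely every $f$-run $\pi$ lifts to exactly one run $\rho$ of $\mhts{\Aa,f}$ on the same word. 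It therefore suffices to show, for corresponding $\pi$ and $\rho$, that \emph{$\pi$ is an accepting $f$-run of $\Aa$} (every $\beta\in Bran^\omega(G_\pi)$ satisfies $Infty(\beta)\cap F\neq\emptyset$) \emph{iff $\rho$ visits $\alpha$ infinitely often}. The whole argument rests on the following marking invariant, which I would first establish by induction on the depth and which formalises the intuition given before the statement: for a vertex $v$ of depth $m$ of $G_\pi$, let $N$ be the largest depth strictly below $m$ at which $\rho$ lies in $\alpha$ (with $N=0$ if there is none, since $s^{\sf MH}_0\notin\alpha$); then $v$ carries the marker $\top$ iff \emph{every} path of $G_\pi$ from depth $N$ to $v$ visits a location of $F$ at some depth in $(N,m]$. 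The base case and the propagation are exactly rules (i)(b)--(i)(c) (a state in $F$ is marked $\top$; a state outside $F$ is marked $\bot$ as soon as one of its $\dest$-predecessors is marked $\bot$, i.e. as soon as one incoming path has not yet seen $F$), while rule (ii) implements the reset that restarts the count at each $\alpha$-configuration.

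For the direction ``$\rho$ visits $\alpha$ infinitely often $\Rightarrow$ $\pi$ is accepting'', I would let $n_0 < n_1 < \cdots$ be the depths at which $\rho\in\alpha$ and fix an arbitrary $\beta\in Bran^\omega(G_\pi)$. For every $j$, the vertex $\beta_{n_{j+1}}$ belongs to an $\alpha$-configuration and is thus marked $\top$, while the largest $\alpha$-depth strictly below $n_{j+1}$ is $n_j$; by the invariant, every path from depth $n_j$ to $\beta_{n_{j+1}}$, and in particular the segment of $\beta$ on $[n_j,n_{j+1}]$, visits $F$. Since there are infinitely many $n_j$, the branch $\beta$ visits $F$ infinitely often, so $Infty(\beta)\cap F\neq\emptyset$; as $\beta$ was arbitrary, $\pi$ is accepting.

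For the converse, assume $\pi$ is accepting and suppose, towards a contradiction, that $\rho$ visits $\alpha$ only finitely often; let $N$ be the depth of the last such visit. Then for every depth $m>N$ the configuration at depth $m$ is not in $\alpha$, hence contains at least one state marked $\bot$; and by rule (i)(c) every $\bot$-marked state of depth $m>N+1$ (which, being $\bot$, lies outside $F$) has at least one $\dest$-predecessor that is again marked $\bot$. Consider the forest whose nodes are the $\bot$-marked states at depths $>N$ and whose edges link each such state of depth $m>N+1$ to one chosen $\bot$-predecessor at depth $m-1$; its roots sit at depth $N+1$. This forest is finitely branching (configurations are finite), has finitely many roots, and contains at least one node per depth, hence infinitely many nodes, so K\"onig's Lemma yields an infinite path. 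Read forward and prefixed by any path from the root of $G_\pi$ to its starting vertex, this path is an infinite branch $\beta$ of $G_\pi$ all of whose vertices beyond depth $N$ are marked $\bot$ and therefore lie outside $F$. Thus $\beta$ visits $F$ only finitely often, contradicting the acceptance of $\pi$; consequently $\rho$ must visit $\alpha$ infinitely often, which closes the equivalence.

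I expect the main obstacle to be the precise statement and inductive proof of the marking invariant across the reset steps governed by rule (ii), together with the bookkeeping forced by the approximation function $f$: under $f$ a single successor state may receive incoming $\dest$-edges from several predecessors, so one must argue that $\bot$ propagates whenever \emph{some} predecessor is $\bot$. This is exactly what the existential clause of rule (i)(c) encodes, and it is also the property that makes the backward $\bot$-chain, and hence the K\"onig's Lemma extraction in the converse direction, go through; getting this merging behaviour to match the invariant is the delicate point, whereas the two acceptance arguments are then routine.
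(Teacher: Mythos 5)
Your proof is correct, and its core is the same as the paper's (both rest on the semantic reading of the Miyano--Hayashi markers), but it is organised along a genuinely different route. The paper proves the two inclusions by two separate inductive constructions: from an accepting $f$-run of $\Aa$ it builds, step by step, a run of $\mhts{\Aa,f}$ (and vice versa), carrying configuration-level invariants $(\star 2j)$ (configurations agree up to markers) and $(\ast 2j)$ (membership in $\alpha$ corresponds to ``all branches have seen $F$ since the last $\alpha$-state''). You instead factor everything through one observation --- the markers are a deterministic function of the $\dest$-structure, so runs correspond one-to-one --- plus a vertex-level invariant characterising exactly when a single state is marked $\top$, and then derive both inclusions as acceptance arguments about one corresponding pair $(\pi,\rho)$. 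The substantive divergence is in the direction ``$\pi$ accepting $\Rightarrow$ $\rho$ visits $\alpha$ infinitely often'' (the paper's $\supseteq$): the paper argues it directly, asserting that since every infinite branch sees $F$ infinitely often, the antecedent of $(\ast 2j)$ holds for infinitely many $j$, and supports the required uniformity only informally (``there is only a finite number of branches leading to a state of $C_{2(i+1)}$\dots''). That uniformisation --- from ``each branch sees $F$ infinitely often'' to ``there are depths by which \emph{all} path segments since the last reset have seen $F$'' --- is a genuine compactness step, and your contrapositive argument, following $\bot$-predecessors backwards and extracting an infinite all-$\bot$ (hence $F$-avoiding) branch via K\"onig's Lemma on a finitely branching forest, is precisely the rigorous version of what the paper leaves implicit. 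What the paper's formulation buys is a fully finitary induction that never manipulates infinite objects beyond the given runs; what yours buys is a cleaner separation between the marker combinatorics (one invariant, proved once) and the two acceptance arguments, plus an explicit proof of the step the paper glosses over. One point to keep straight in a full write-up, which your phrasing already anticipates: the correspondence $\pi\leftrightarrow\rho$ lives at the level of run DAGs (the $\dest$ relation, i.e.\ the chosen minimal models), not of bare configuration sequences, since the markers are unique only relative to that choice.
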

\begin{proof}
$(\supseteq)$ Let $\theta = (\bar{\sigma}, \bar{\tau}) \in L^\omega_{f}(\Aa)$, with $\bar{\sigma}=\sigma_1\sigma_2\cdots\sigma_n \dots$ and
$\bar{\tau}=\tau_1\tau_2\cdots\tau_n \dots$.  We will prove that $\theta \in L^\omega (\mhts{\Aa,f})$.
Let us note $t_{i} = \tau_{i} - \tau_{i-1}$ for all $1 \leq i \leq \vert \theta \vert$, assuming $\tau_{0} = 0$.  We have an accepting $f$-run of $\Aa$ on $\theta$, say $\pi: C_{0} \overset{t_1}{\rightsquigarrow} C_{1} \overset{\sigma_{1}}{\longrightarrow}_{f} C_{2} \overset{t_2}{\rightsquigarrow} C_3 \overset{\sigma_{2}}{\longrightarrow}_{f} \dots \overset{t_i}{\rightsquigarrow} C_{2i-1} \overset{\sigma_{i}}{\longrightarrow}_{f} C_{2i} \dots$.  We must prove that there is an accepting run of $\mhts{\Aa,f}$ on $\theta$, say $\pi'$: $s_{0} \overset{t_{1}}{\rightsquigarrow}^{\sf MH} s_{1} \overset{\sigma_{1}}{\rightarrow}^{\sf MH} s_{2} \overset{t_{2}}{\rightsquigarrow}^{\sf MH} s_3 \overset{\sigma_{2}}{\rightarrow}^{\sf MH} \dots \overset{t_{i}}{\rightsquigarrow}^{\sf MH} s_{2i-1} \overset{\sigma_{i}}{\rightarrow}^{\sf MH} s_{i} \dots$.  We construct $\pi'$ by induction, proving additionally that the two following properties hold for $j \geq 0$:
\begin{enumerate}
\item[$(\star 2j)$] $C_{2j} = \lbrace (\ell_k, I_k)_{k \in K} \rbrace \;$ iff $\; s_{2j} = \lbrace (\ell_k, I_k, m_k)_{k \in K} \; \vert \; m_k \in \lbrace \top, \bot \rbrace \rbrace$ ;
\item[$(\ast 2j)$] if a location of $F$ occurs on all the branches of $\pi$ between the last configuration $C_{2j'}$ such that $s_{2j'} \in \alpha$ (or, failing that, between $C_{0}$) and $C_{2j}$, then, $s_{2j} \in \alpha$.
\end{enumerate}
\underline{Basis:} We know that $C_{0} = (\ell_{0}, [0,0])$ and $s_0 = \lbrace (\ell_0, [0,0], \bot) \rbrace$.  It is clear that $(\star 0)$ and $(\ast 0)$ are verified because only $\ell_0 \notin F$ occurs on the (unique) branch of $\pi$.\\
\underline{Induction:} Suppose that we constructed $\pi'$ until $s_{2i}$ and that $(\star 2j)$ and $(\ast 2j)$ are verified $\forall 0 \leq j \leq i$.  We will construct $\pi'$ until $s_{2(i+1)}$ in way $(\star 2(i+1))$ and $(\ast 2(i+1))$ will still be verified.\\
First, we must construct $s_{2i+1}$ such that $s_{2i} \overset{t_{i+1}}{\rightsquigarrow}^{\sf MH} s_{2i+1}$.  Suppose $s_{2i} = 
\break \lbrace (\ell_k, I_k, m_k)_{k \in K} \rbrace$, as $(\star 2i)$ is verified by hypothesis, it means that $C_{2i}$ can also be written as $\lbrace (\ell_k, I_k)_{k \in K} \rbrace$.  We must choose $s_{2i+1}$ to be $\lbrace (\ell_k, I_k + t_{i+1}, m_k)_{k \in K} \rbrace$.  As $C_{2i} \overset{t_{i+1}}{\rightsquigarrow} C_{2i+1}$, $C_{2i+1} = C_{2i} + t_{i+1} = \lbrace (\ell_k, I_k + t_{i+1})_{k \in K} \rbrace$.  Remark that the following property holds:
\begin{enumerate}
\item[$(\star 2i+1)$] $C_{2i+1} = \lbrace (\ell_k, I_k)_{k \in K} \rbrace \;$ iff $\; s_{2i+1} = \lbrace (\ell_k, I_k, m_k)_{k \in K} \; \vert \; m_k \in \lbrace \top, \bot \rbrace \rbrace$.
\end{enumerate}
Secondly, we must construct $s_{2(i+1)}$ such that $s_{2i+1} \overset{\sigma_{i+1}}{\rightarrow} s_{2(i+1)}^{\sf MH}$.  We know that $C_{2i+1} \overset{\sigma_{i+1}}{\longrightarrow}_{f} C_{2(i+1)}$. Suppose $s_{2i+1} = \lbrace (\ell_k, I_k, m_k)_{k \in K} \rbrace$, as $(\star 2i+1)$ is verified, $C_{2i+1}$ can be written as $\lbrace (\ell_k, I_k)_{k \in K} \rbrace$.  Let us further suppose that $C_{2(i+1)} = \lbrace (\ell_{k'}, I_{k'})_{k' \in K'} \rbrace$.  We construct $s_{2(i+1)} = \lbrace (\ell_{k'}, I_{k'}, m_{k'})_{k' \in K'} \rbrace$ to be the unique state\footnote{once the minimal models of the definition of $\overset{\sigma_{i+1}}{\rightarrow}$ are chosen, it is easy to see there exists a unique possible choice for the values of the $m_k$ of $s_{2(i+1)}$.} of $S^{\sf MH}$ such that $\lbrace (\ell_k, I_k)_{k \in K} \rbrace \overset{\sigma_{i+1}}{\longrightarrow}_{f} \lbrace (\ell_{k'}, I_{k'})_{k' \in K'} \rbrace$. $(\star 2(i+1))$ is trivially verified.  It stays to prove that $(\ast 2(i+1))$ is satisfied.  Suppose that a location of $F$ occurs on all the branches of $\pi$ between the last configuration $C_{2j'}$ such that $s_{2j'} \in \alpha$ (or, failing that, between $C_{0}$) and $C_{2(i+1)}$. We must prove that $s_{2(i+1)} \in \alpha$, i.e. all the trios of $s_{2(i+1)}$ has $\top$ as last component.  Let $\beta = \beta_0 \beta_1 \beta_2 \dots \beta_{2j'} \dots \beta_{2j} \dots \beta_{2(i+1)} \dots$ be a branch of $\pi$.  The hypothesis implies there exists a transition $\overset{\sigma_{j}}{\longrightarrow}_{f}$, for $j' \leq j \leq i+1$, such that $\beta_{2j} = (\ell, I)$ for a certain $\ell \in F$. By $(\ast 2j)$, $(\ell, I, m_k) \in s_{2j}$ for a certain $m_k$ in $\lbrace \top, \bot \rbrace$, but point (i) (b) of the definition of $\rightarrow$ obliges $m_k$ to be $\top$.  So, the third components associated in $\pi'$ to the different states of the branches of $\pi$ will gradually (between $s_{2j'}$ and $s_{2(i+1)}$) become $\top$.  We must still ensure they will eventually never become $\bot$ again.  In fact, a pair $(\ell, I)$ of a certain state $s_j$ (for $2j' \leq j \leq 2(i+1)$) of $\pi$, corresponding to $(\ell, I, \top)$ in $\pi'$, can have a successor $(\ell', I')$ in $\pi$ such that this corresponds to $(\ell', I',\bot)$ in $\pi'$ iff $s_j$ is accepting and $\ell' \notin F$ (which is not possible under the present hypothesis) or $(\ell', I', \bot)$ comes from the grouping of trios (thanks to $f$) emanating from trios such that at least one of them had $\bot$ as last component (case (i) (b) of the definition of $\rightarrow$).  It means that no location of $F$ occurs on one of the branches of $\pi$ leading to $(\ell', I')$, say $\beta' = \beta'_0 \beta'_1 \beta'_2 \dots \beta_{2j'} \dots \beta'_{2k'} \dots \beta'_{2(i+1)} \dots$, with $\beta'_{2k} = (\ell',I')$, since $\beta'_{2j'}$ (else, we contradict case (i) (c)). But, by hypothesis, a location of $F$ occurs on all the branches of $\pi$ between steps $2j'$ and $2(i+1)$, so there exists a transition $\overset{\sigma_{2\tilde{j}}}{\longrightarrow}_{f}$, for $k' < \tilde{j} \leq i+1$, such that $\beta'_{2\tilde{j}}$ has its location in $F$: once again, point (i) (b) of the definition of $\rightarrow$ obliges $m_{2\tilde{j}}$ to be $\top$.\\
As a location of $F$ occurs on all branches of $\pi$ between steps $2j'$ and $2(i+1)$ and there is only a finite number of branches leading to a state of $C_{2(i+1)}$, we conclude that we can only encounter this last case a finite number of time and so $s_{2(i+1)} \in \alpha$: $(\ast 2(i+1))$ is satisfied.\\
To end this part of the proof, we must show that $\pi'$ is accepting.  The previous induction proves that $(\ast 2j)$ is verified for all $j \geq 0$.  As $\pi$ is accepting, a location of $F$ occurs on all the branches of $\pi$ infinitely often, and so there is an infinite number of j and j' such that the antecedent of $(\ast 2j)$ is true.  So, in this same infinite number of times, we know that $s_{2j} \in \alpha$, what proves that $\pi'$ is accepting.\\
$(\subseteq)$ Let $\theta = (\bar{\sigma}, \bar{\tau}) \in L^\omega(\mhts{\Aa,f})$, with $\bar{\sigma}=\sigma_1\sigma_2\cdots\sigma_n \dots$ and $\bar{\tau}=\tau_1\tau_2\cdots\tau_n \dots$.  We will prove that $\theta \in L^\omega_{f}(\Aa)$.  Let us note $t_{i} = \tau_{i} - \tau_{i-1}$ for all $1 \leq i \leq \vert \theta \vert$, assuming $\tau_{0} = 0$.  We have an accepting run of $\mhts{\Aa,f}$ on $\theta$, say $\pi$: $s_{0} \overset{t_{1}}{\rightsquigarrow}^{\sf MH} s_{1} \overset{\sigma_{1}}{\rightarrow}^{\sf MH} s_{2} \overset{t_{2}}{\rightsquigarrow}^{\sf MH} s_3 \overset{\sigma_{2}}{\rightarrow}^{\sf MH} \dots \overset{t_{i}}{\rightsquigarrow}^{\sf MH} s_{2i-1} \overset{\sigma_{i}}{\rightarrow}^{\sf MH} s_{i} \dots$.  We must prove that there is an accepting $f$-run of $\Aa$ on $\theta$, say $\pi'$: $C_{0} \overset{t_1}{\rightsquigarrow} C_{1} \overset{\sigma_{1}}{\longrightarrow}_{f} C_{2} \overset{t_2}{\rightsquigarrow} C_3 \overset{\sigma_{2}}{\longrightarrow}_{f} \dots \overset{t_i}{\rightsquigarrow} C_{2i-1} \overset{\sigma_{i}}{\longrightarrow}_{f} C_{2i} \dots$.  We construct $\pi'$ by induction, proving additionally that the two following properties hold for $j \geq 0$:
\begin{enumerate}
\item[($\star 2j$)] $C_{2j} = \lbrace (\ell_k, I_k)_{k \in K} \rbrace \;$ iff $\; s_{2j} = \lbrace (\ell_k, I_k, m_k)_{k \in K} \; \vert \; m_k \in \lbrace \top, \bot\rbrace \rbrace$ ;
\item[($\ast 2j$)] if $s_{2j} \in \alpha$, then, a location of $F$ occurs on all the branches of $\pi'$ between the last configuration $C_{2j'}$ such that $s_{2j'} \in \alpha$ (or, failing that, between $C_{0}$) and $C_{2j}$.
\end{enumerate}
\underline{Basis:} We know that $s_0 = \lbrace (\ell_0, [0,0], \bot) \rbrace$ and $C_{0} = (\ell_{0}, [0,0])$. $(\star 0)$ and $(\ast 0)$ are trivially verified.\\
\underline{Induction:} Suppose that we constructed $\pi'$ until $C_{2i}$ and that $(\star 2j)$ and $(\ast 2j)$ are verified $\forall 0 \leq j \leq i$.  We will construct $\pi'$ until $C_{2(i+1)}$ in way $(\star 2(i+1))$ and $(\ast 2(i+1))$ will still hold.\\
First, we must construct $C_{2i+1}$ such that $C_{2i} \overset{t_{i+1}}{\rightsquigarrow} C_{2i+1}$.  We know that $s_{2i} \overset{t_{i+1}}{\rightsquigarrow}^{\sf MH} s_{2i+1}$. Suppose $s_{2i} = \lbrace (\ell_k, I_k, m_k)_{k \in K} \rbrace$, as $(\star 2i)$ is verified by hypothesis, $C_{2i} = \lbrace (\ell_k, I_k)_{k \in K} \rbrace$.  We must choose $C_{2i+1}$ to be $C_{2i+1} = \lbrace (\ell_k, I_k + t_{i+1})_{k \in K} \rbrace$.  As $s_{2i} \overset{t_{i+1}}{\rightsquigarrow}^{\sf MH} s_{2i+1}$, $s_{2i+1} = \lbrace (\ell_k, I_k + t_{i+1}, m_k)_{k \in K} \rbrace$.  Remark that the following property holds:
\begin{enumerate}
\item[$(\star 2i+1)$] $C_{2i+1} = \lbrace (\ell_k, I_k)_{k \in K} \rbrace \;$ iff $\; s_{2i+1} = \lbrace (\ell_k, I_k, m_k)_{k \in K} \; \vert \; m_k \in \lbrace \top,\bot \rbrace \rbrace$.
\end{enumerate}
Secondly, we must construct $C_{2(i+1)}$ such that $C_{2i+1} \overset{\sigma_{i+1}}{\longrightarrow}_{f} C_{2(i+1)}$.  We know that $s_{2i+1} \overset{\sigma_{i+1}}{\rightarrow}^{\sf MH} s_{2(i+1)}$. Suppose $s_{2i+1} = \lbrace (\ell_k, I_k, m_k)_{k \in K} \rbrace$, as $(\star 2i+1)$ is verified, $C_i = \lbrace (\ell_k, I_k)_{k \in K} \rbrace$.  Let us suppose further suppose that $s_{i+1} = \lbrace (\ell_{k'}, I_{k'}, m_{k'})_{k' \in K'} \rbrace$.  We construct $C_{2(i+1)} = \lbrace (\ell_{k'}, I_{k'})_{k' \in K'} \rbrace$ so that $(\star 2(i+1))$ holds.  Remark that, by definition of $\rightarrow^{\sf MH}$, as $s_{2i+1} \overset{\sigma_{i+1}}{\rightarrow}^{\sf MH} s_{2(i+1)}$, $C_{2i+1} \overset{\sigma_{i+1}}{\longrightarrow}_{f} C_{2(i+1)}$, what we wanted. It stays to prove that $(\ast 2(i+1))$ is satisfied. Suppose that $s_{2(i+1)} \in \alpha$ (i.e. $\forall k' \in K'$, $m_{k'} = 1$).  We must prove that, between the last configuration $C_{2j'}$ such that $s_{2j'} \in \alpha$ (or, failing that, between $C_{0}$) and $C_{2(i+1)}$, a location of $F$ occurs on all the branches of $\pi'$.  Let $\beta = \beta_0 \beta_1 \beta_2 \dots \beta_{2j'} \dots \beta_{2j} \dots \beta_{2(i+1)} \dots$ be a branch of $\pi'$ and suppose by contradiction that $\forall j' \leq j \leq i+1$, $\beta_{2j}$ has not its location in $F$.  As $s_{j2'} \in \alpha$, all the third components of $s_{2j'}$ are replaced by $\bot$ (likewise, by definition of $s_{0}$, its unique trio has $\bot$ as last component) before evolving reading $\sigma_{j'+1}, \sigma_{j'+2}, \dots, \sigma_{i+1}$ thanks to the rules in (i) in the definition of $\rightarrow^{\sf MH}$.  But, observing rules in (i), when a trio has $\bot$ as last component, it can only evolve to a trio with $\top$ as last component if case (2) it satisfied, what is impossible along $\beta$.  This contradicts the fact that $s_{2(i+1)} \in \alpha$.\\
To end the proof, we must show that $\pi'$ is accepting. The previous induction proves that $(\ast 2j)$ holds for all $j \geq 0$.  As $\pi$ is accepting, we know that $s_{2j} \in \alpha$ for infinitely many $j$ and so, between any two successive such $j$'s all the branches of $\pi'$ saw $F$.  We conclude that all the branches of $\pi'$ saw $F$ infinitely often, what proves that $\pi'$ is accepting.\qed
\end{proof}

\paragraph{Büchi timed automata for MITL formulas.} Building on this formalisation, one can define a timed automaton with Büchi acceptance condition $\Bb_\varphi 
\break =(\Sigma, B, b_0, X, F^\Bb,
\delta^\Bb)$ that simulates $\mhts{\Aa_\varphi,\appfunc{\varphi}}$,
and thus accepts $\sem{\varphi}$, for every MITL formula
$\varphi$.\\
A Büchi timed automaton (TA) is a tuple $\Bb =(\Sigma, B, b_0, X, F^\Bb,
\delta^\Bb)$, where $\Sigma$ is a finite \emph{alphabet}, $B$ is a finite
set of \emph{locations}, $b_0\in B$ is the \emph{initial location},
$X$ is a finite set of \emph{clocks}, $F^\Bb \subseteq B$ is the set of
\emph{accepting locations}, and $\delta^\Bb \subseteq
B \times \Sigma \times \mathcal{G}(X)\times 2^X\times B$ is a finite set
of \emph{transitions}, where $\mathcal{G}(X)$ denotes the set of
\emph{guards on $X$}, i.e. the set of all finite conjunctions of
\emph{clock constraints} on clocks from $X$. For a transition $(b,
\sigma, g, r, b')$, we say that $g$ is its \emph{guard}, and $r$
its \emph{reset}. A \emph{configuration} of a TA is a pair $(b,
v)$, where $v:X\mapsto\R^+$ is a \emph{valuation} of the clocks in
$X$. We denote by $\configs{\Bb}$ the set of all configurations of
$\Bb$. For all $t\in \R^+$, we have (time successor)
$(b,v)\timestep{t}(b',v')$ iff $b=b'$ and $v'=v+t$ where
$v+t$ is the valuation s.t. for all $x\in X$: $(v+t)(x)=v(x)+t$. For
all $\sigma\in \Sigma$, we have (discrete successor)
$(b,v)\xrightarrow{\sigma}(b',v')$ iff there is $(b, \sigma,
g, r, b')\in \delta$ s.t. $v\models g$, for all $x\in r$: $v'(x)=0$
and for all $x\in X\setminus r$: $v'(x)=v(x)$. We write
$(b,v)\xrightarrow{t,\sigma}(b',v')$ iff there is
$(b'',v'')\in \configs{\Bb}$
s.t. $(b,v)\timestep{t}(b'',v'')\xrightarrow{\sigma}(b',v')$.
Let $\theta=(\bar{\sigma},\bar{\tau})$ be a timed word with
$\bar{\sigma}=\sigma_1\sigma_2\cdots\sigma_n \dots$ and
$\bar{\tau}=\tau_1\tau_2\cdots\tau_n \dots$.  Let us note $t_{i} = \tau_{i} -
\tau_{i-1}$ for all $i \geq 1$, assuming
$\tau_{0} = 0$.  A \emph{run} of $\Bb$ on $\theta$ is an infinite
sequence of successor steps that
is labelled by $\theta$, i.e. a sequence of the form: $(b_0, v_0)
\xrightarrow{t_1, \sigma_1} (b_1, v_1)
\xrightarrow{t_2, \sigma_2} (b_2, v_2)
\xrightarrow{t_3, \sigma_3} (b_3, v_3)
\dots$, 
where $v_0$ assigns $0$ to all clocks.  Such a run of $\Bb$ is accepting iff there is infinitely many $(b_i,v_i)$ with an accepting $b_i$.  We say that a timed word $\theta$ is accepted by $\Bb$ iff there exists an accepting run of $\Bb$ on $\theta$. We denote by $L^{\omega}(\Bb)$ the \emph{language} of $\Bb$, i.e. the set of infinite timed words accepted by $\Bb$.\\
For an MITL formula $\varphi$, we construct $\Bb_\varphi =(\Sigma, B, b_0, X, F^\Bb, \delta^\Bb)$ as follows. 
Locations of $\Bb_\varphi$ associate with each location
$\ell$ of $\Aa_\varphi$ a sequence of triples $(x,y,m)$, where $x$ and
$y$ are clocks that store the infimum and supremum of an 
interval respectively, and $m$ is a Miyano-Hayashi marker. 
Formally, for a set of clocks $X$, we let
$\loc(X)$ be the set of functions $S$ that associate with each $\ell
\in L$ a finite sequence $(x_1,y_1,m_1),\ldots,(x_n,y_n,m_n)$ where,
for $1\leq i \leq n$, $m_i \in \{ \top, \bot \}$ and $(x_i,y_i)$ is a
pair of clocks from $X$ s.t. each clock only occurs once in
$S(L)$. Then:
\begin{itemize}
\item $X$ is the set of clocks of $\Bb_\varphi$ s.t. $|X|=M(\varphi)$;
\item $B = \lbrace S \in \loc(X) \rbrace$ is the set of locations of
  $\Bb_\varphi$. Thus, a configuration $(S, v)$ of $\Bb_\varphi$
  (where $S$ is the location and $v$ the valuation of the clocks $X$)
  encodes the labeled configuration $C=\{(\ell, [v(x), v(y)], m)\mid
  (x,y,m)\in S(\ell)\}$;
\item $b_0 = S_0$ is the initial location of $\Bb_\varphi$ and is
  s.t. $\forall \ell \in L \setminus \{ \ell_0 \}$, $S_0(\ell) =
  \emptyset$, and $S_0(\ell_0)=(x,y,\bot)$, where $x$ and $y$ are two
  clocks arbitrarily chosen from $X$;
\item $F^\Bb = \{ S \in B \: \vert \: (x,y,m) \in S(L) \Rightarrow m =
  \top \}$ is the set of final locations of $\Bb_\varphi$.
\end{itemize}
Finally, we must define the set of transitions $\delta^\Bb$ to let
$\Bb_\varphi$ simulate the executions of $\Aa_\varphi$. First, we observe
that, for each location $\ell\in L$, for each $\sigma\in\Sigma$,
all \emph{arcs} in $\delta$ are either of the form
$(\ell,\sigma,\mathit{true})$ or $(\ell,\sigma,\mathit{false})$ or of
the form $\big(\ell,\sigma,\ell\wedge
x.(\ell_1\wedge\cdots\wedge\ell_k)\wedge g\big)$ or of the form
$\big(\ell,\sigma,x.(\ell_1\wedge\cdots\wedge\ell_k)\wedge g\big)$,
where $g$ is \emph{guard} on $x$, i.e. a finite conjunction of clock
constraints on $x$. Let $S \in B$ be a location of
$\Bb_\varphi$, $\ell \in L$, $\sigma\in\Sigma$ be a
letter and $(x,y,m)$ be a 3-tuple occurring in $S(\ell)$. Let
us associate to this 3-tuple an arc $a$ of $\delta$ of the form
$(\ell,\sigma,\gamma)$. Then, we associate to $a$ a \emph{guard}
$\gu{a}$, and two sets $\re{a}$ and $\de{a}$, defined as follows:
\begin{itemize}
\item if $\gamma\in\{\mathit{true},\mathit{false}\}$, then, $\gu{a}= \gamma$
  and $\re{a}=\de{a}=\emptyset$.
\item if $\gamma$ is of the form
  $x.(\ell_1\wedge\cdots\wedge\ell_k)\wedge g$, then $\gu{a}=g$,
  $\re{a}=\{\ell_1,\ldots,\ell_k\}$ and $\de{a}=\emptyset$.
\item  if $\gamma$ is of the form
  $\ell\wedge x.(\ell_1\wedge\cdots\wedge\ell_k)\wedge g$, then $\gu{a}=g$,
  $\re{a}=\{\ell_1,\ldots,\ell_k\}$ and $\de{a}=\{(x,y)\}$.
\end{itemize}
Thanks to those definitions, we can now define $\delta^\Bb$. Let $S^\vartriangle$ be a
location in $B$.  We want that to encounter an accepting location of $\Bb_\varphi$ in a run corresponds to the occurrence of a location of $F$ on all branches of the corresponding run of $\Aa_\varphi$.  When such an accepting location of $\Bb_\varphi$ is encountered, we need to start again the marking: the following definition of $S$ reflects this need. If $S^\vartriangle \in F^\Bb$, we let $S$ to be such that: $\forall \ell \in L$, $S(\ell) = \{ (x,y,\bot) \: \vert \: (x,y,\top) \in S^\vartriangle(\ell) \}$.  If $S^\vartriangle \notin F^\Bb$, we let $S = S^\vartriangle$.  We assume:
\begin{align*}
  \{(\ell_1,x_1,y_1,m_1),\ldots, (\ell_k,x_k,y_k,m_k)\} &= \{(\ell,x,y,m)\mid
  (x,y,m)\in S(\ell) \cup \bar{S} (\ell) \}.
\end{align*}
\noindent Then, $(S^\vartriangle, \sigma, g, r, S') \in \delta^\Bb$ iff there is a set $A = \lbrace (a_i)_{i=1}^{k} \rbrace$:
\begin{itemize}
\item For all $i \in \{ 1, \dots, k \}$: $a_i$ is an arc of $\delta$ of the form $(\ell,\sigma,\gamma_i)$ associated to $(x_i, y_i, m_i) \in S(\ell_i)$.
\item For each $\ell\in L \setminus F$, we let $S^\star(\ell)=(x^{\star}_1,y^{\star}_1,m^{\star}_1) (x^{\star}_2,y^{\star}_2,m^{\star}_2) \cdots (x^{\star}_n,y^{\star}_n,m^{\star}_n)$ be obtained from $S(\ell)$ by deleting all the trios $(x,y,m)$ such that $(x,y)\notin \bigcup_{i=1}^k \de{a_i}$. 
For each $\ell\in L \cap F$, we let $S^\star(\ell) = (x^{\star}_1,y^{\star}_1,\top) (x^{\star}_2,y^{\star}_2,\top) \cdots (x^{\star}_n,y^{\star}_n,\top)$ be obtained from $S(\ell)$ by deleting all the trios $(x,y,m)$ such that $(x,y)\notin\bigcup_{i=1}^k \de{a_i}$.
Then, for all $\ell \in L$:\medskip
\begin{enumerate}
\item \textit{if $\ell\notin\bigcup_{i=1}^k \re{a_i}$:} \medskip\\
$S'(\ell) = S^{\star}(\ell)$\medskip
\item \textit{else, if $\ell \in F$ (in particular, for $1 \leq i \leq n$, $m^{\star}_i = \top$):}\medskip\\
$S'(\ell) \in \big\{ (x,y,\top) \cdot S^\star(\ell) \; , \; (x,y^{\star}_1,m^{\star}_1) (x^{\star}_2,y^{\star}_2,m^{\star}_2) \cdots (x^{\star}_{n},y^{\star}_n,m^{\star}_n) \big\}$\medskip
\item \textit{else, if $\ell \in \underset{\underset{m_i = 0}{i\in \{1, \dots, k\}}}{\bigcup} \re{a_i}$:}\medskip\\
$S'(\ell) \in \big\{ (x,y,\bot) \cdot S^\star(\ell) \; , \; (x,y^{\star}_1,\bot) (x^{\star}_2,y^{\star}_2,m^{\star}_2) \cdots (x^{\star}_{n},y^{\star}_n,m^{\star}_n) \big\}$\medskip
\item \textit{else} (i.e. $\ell \notin \underset{\underset{m_i = 0}{i\in \{1, \dots, k\}}}{\bigcup} \re{a_i}$ and $\ell \in \underset{\underset{m_i = 1}{i\in \{1, \dots, k\}}}{\bigcup} \re{a_i}$)\medskip\\
$S'(\ell) \in \big\{ (x,y,\top) \cdot S^\star(\ell) \; , \; (x,y^{\star}_1,m^{\star}_1) (x^{\star}_2,y^{\star}_2,m^{\star}_2) \cdots (x^{\star}_{n},y^{\star}_n,m^{\star}_n) \big\}$\medskip
\end{enumerate}
When $S'(\ell)=(x,y,\bot) \cdot S^\star(\ell)$ or $(x,y,\top)\cdot \bar{S}^\star(\ell)$, we let $R_\ell=\{x,y\}$;
when $S'(\ell)=(x,y^{\star}_1,\bot)(x^{\star}_2,y^{\star}_2,m^{\star}_2)\cdots(x^{\star}_n,y^{\star}_n,m^{\star}_n)$ or $(x,y^{\star}_1,m^{\star}_1)(x^{\star}_2,y^{\star}_2,m^{\star}_2)\cdots(x^{\star}_n,y^{\star}_n,m^{\star}_n)$, we let $R_\ell=\{x\}$; and we let $R_\ell=\emptyset$ otherwise.\medskip
\item $g=\bigwedge_{1\leq i\leq k}(\gu{a_i}[x/x_i]\wedge
  \gu{a_i}[x/y_i])$.
\item $r=\cup_{\ell\in L}R_\ell$.
\end{itemize}

\begin{theorem}
$L^\omega(\Bb_\varphi)=L^\omega_{\appfunc{\varphi}}(\Aa_\varphi)$.
\end{theorem}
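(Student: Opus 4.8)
The plan is to prove the equality by routing through the intermediate Miyano--Hayashi transition system and Proposition~\ref{Prop3}. Since Proposition~\ref{Prop3} already gives $L^\omega(\mhts{\Aa_\varphi,\appfunc{\varphi}}) = L^\omega_{\appfunc{\varphi}}(\Aa_\varphi)$, it suffices to establish $L^\omega(\Bb_\varphi) = L^\omega(\mhts{\Aa_\varphi,\appfunc{\varphi}})$. I would do so by exhibiting a time-abstract bisimulation between the configurations of $\Bb_\varphi$ and the states of $\mhts{\Aa_\varphi,\appfunc{\varphi}}$, using the encoding already fixed by the construction: a configuration $(S,v)$ of $\Bb_\varphi$ is related to the marked configuration $\conf{(S,v)} = \{(\ell,[v(x),v(y)],m)\mid (x,y,m)\in S(\ell)\}$. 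The proof then reduces to four points: (i) the initial configurations correspond; (ii) time-elapsing transitions correspond; (iii) discrete transitions correspond in both directions; and (iv) the B\"uchi acceptance sets correspond, i.e. $S\in F^\Bb$ iff $\conf{(S,v)}\in\alpha$.

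Points (i), (ii) and (iv) are immediate from the definitions. The initial location $S_0$ encodes $s^{\sf MH}_0 = \{(\ell_0,[0,0],\bot)\}$; a delay $t$ adds $t$ to every clock, hence to both endpoints of every encoded interval, which is exactly $\overset{t}{\rightsquigarrow}^{\sf MH}$; and $F^\Bb$ was defined precisely as the set of locations all of whose markers are $\top$, matching the definition of $\alpha$.

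The heart of the argument --- and the main obstacle --- is point (iii). I would prove the two inclusions by showing that each transition $(S^\vartriangle,\sigma,g,r,S')\in\delta^\Bb$, fired from $(S^\vartriangle,v)$ with $v\models g$, induces a transition $\conf{(S^\vartriangle,v)} \overset{\sigma}{\rightarrow}^{\sf MH} \conf{(S',v')}$, and conversely. The key facts to verify are the following.
\begin{itemize}
\item The guard $g = \bigwedge_{i}(\gu{a_i}[x/x_i]\wedge \gu{a_i}[x/y_i])$ is satisfied by $v$ iff, for each chosen arc $a_i$, the whole interval $[v(x_i),v(y_i)]$ satisfies the clock constraints of $a_i$. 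Since constraints have the form $x\bowtie c$ and intervals are convex, checking the two endpoints is equivalent to checking all of $[v(x_i),v(y_i)]$; this is exactly the semantics of $\models_I$ on clock constraints, hence captures firability of the corresponding minimal model of $\delta(\ell,\sigma)$ appearing in $\succ$.
\item Choosing an arc $a_i$ for each 3-tuple of $S^\vartriangle$ realizes the choice of one minimal model per state in the definition of $\succ$, while $\re{a_i}$ and $\de{a_i}$ encode, respectively, the freshly reset copies (new $[0,0]$ intervals) and the surviving ``loop'' copies retained in $S^\star$.
\item The alternative forms allowed for $S'(\ell)$ realize the approximation $\appfunc{\varphi}$: prepending a fresh pair $(x,y,\cdot)$ with both clocks in $R_\ell$ reset corresponds to keeping the new $[0,0]$ interval apart, whereas resetting only the infimum clock $x$ and reusing $y^\star_1$ corresponds to the \mername\ operation grouping $[0,0]$ with its successor interval.
\item The marker bookkeeping matches the Miyano--Hayashi rules: cases~2--4 assign $\top$ or $\bot$ to a new copy exactly as rules (i)(b)--(i)(c) of $\rightarrow^{\sf MH}$ prescribe, and the preliminary replacement of $S^\vartriangle$ by its marker-reset version (turning $\top$ into $\bot$) when $S^\vartriangle\in F^\Bb$ mirrors rule (ii), which restarts the marking on accepting states.
\end{itemize}

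Finally, I would lift this step-by-step correspondence to entire runs: a run of $\Bb_\varphi$ on $\theta$ and the induced run of $\mhts{\Aa_\varphi,\appfunc{\varphi}}$ on $\theta$ visit, at matching depths, related configurations, so by point (iv) one visits $F^\Bb$ infinitely often iff the other visits $\alpha$ infinitely often, whence one is accepting iff the other is. Both language inclusions follow, and combined with Proposition~\ref{Prop3} this yields $L^\omega(\Bb_\varphi) = L^\omega_{\appfunc{\varphi}}(\Aa_\varphi)$. I expect the delicate part to be entirely in matching the several admissible forms of $S'(\ell)$ with the merge/non-merge options of $\appfunc{\varphi}$ while simultaneously tracking the markers and the endpoint-wise reset of the clock pairs; this is where essentially all the care is required.
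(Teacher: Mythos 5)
Your proposal is correct and follows essentially the same route as the paper: the paper's proof also establishes that the transition system induced by $\Bb_\varphi$ coincides with $\mhts{\Aa_\varphi,\appfunc{\varphi}}$ under the correspondence $(S,v)\mapsto\{(\ell,[v(x),v(y)],m)\mid (x,y,m)\in S(\ell)\}$, checking initial configurations, timed transitions, and discrete transitions in both directions (including the convexity argument for endpoint guards, the arc-choice/minimal-model matching, the merge-versus-fresh-pair realization of $\appfunc{\varphi}$, and the marker bookkeeping), and then concludes via Proposition~\ref{Prop3}. The only cosmetic difference is that you call the correspondence a time-abstract bisimulation, whereas the matching of delays is in fact exact.
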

\begin{proof}
To prove this, we will show that the transition system $S_{\Bb_\varphi} = (\configs{\Bb_\varphi}, 
\break \rightsquigarrow, \longrightarrow)$ induced by $\Bb_\varphi$ in the classical semantics is $\mhts{\Aa_\varphi, \appfunc{\varphi}}$ in which $(S,v) \in \configs{\Bb_\varphi}$ corresponds to $\{ (\ell, [v(x),v(y)],m) \: \vert \: (x,y,m) \in S(\ell) \}$.  It is easy to see that the initial configuration of $\Bb_\varphi$, $(S_0,v_0)$, where for all $x \in X$, $v_0(x) = 0$ corresponds to the initial state $s_0$ of $\mhts{\Aa_\varphi, \appfunc{\varphi}}$.  Now, suppose that we reached a configuration $(S,v) \in \configs{\Bb_\varphi}$ corresponding to the state $T$ of $\mhts{\Aa_\varphi, \appfunc{\varphi}}$, i.e. $T = \{ (\ell, [v(x),v(y)],m) \: \vert \: (x,y,m) \in S(\ell) \}$.\\
\textbf{\underline{Timed transition:}} let $t \in \R$, $(S,v) \overset{t}{\rightsquigarrow} (S,v+t)$ while $T = \bigcup_{\ell \in L} \{ (\ell, [v(x),v(y)],m) \: \vert \: \break
(x,y,m) \in S(\ell) \} \overset{t}{\rightsquigarrow}^{\sf MH} \bigcup_{\ell \in L} \{ (\ell, [v(x),v(y)]+t, m) \: \vert \: (x,y,m) \in S(\ell) \}$: these two images correspond.\\
\textbf{\underline{Discrete transition:}}\\
\textbf{\underline{From $S_{\Bb_\varphi}$ to $\Ss_{\Aa, \appfunc{\varphi}}$:}} Suppose that $(S^\vartriangle,v) \overset{\sigma}{\longrightarrow} (S',v')$ and that $T^\vartriangle$ corresponds to $(S^\vartriangle,v)$, i.e. $T^\vartriangle = \bigcup_{\ell \in L} \{ (\ell, [v(x),v(y)],m) \: \vert \: (x,y,m) \in S^\vartriangle(\ell) \}$.  We will show that there exists $T'$ such that $T^\vartriangle \overset{\sigma}{\longrightarrow} T'$ in $\Ss_{\Aa, \appfunc{\varphi}}$ and $(S',v')$ and $T'$ correspond.  As, $(S^\vartriangle,v) \overset{\sigma}{\longrightarrow} (S',v')$, if $S^\vartriangle \in F^\Bb$, $S^\vartriangle$ is turned to $S = \{ (x,y,\bot) \: \vert \: (x,y,\top) \in S(L) \}$.  In this case, $T^\vartriangle \in \alpha$ and it is turned to $T = \bigcup_{\ell \in L} \{ (\ell, [v(x),v(y)], \bot) \: \vert \: \break
(x,y,\top) \in S(\ell) \}$ which corresponds to $(S,v)$.  If $S^\vartriangle \notin F^\Bb$, $T^\vartriangle \notin \alpha$ and we let $S = S^\vartriangle$ and $T = T^\vartriangle$.  Then, for all $\ell \in L$, an arc $a_{(x,y)}$ must have been associated to each $(x,y,m) \in S(\ell)$.  $S^\star$ is then defined in way each $(x,y,m)$ associated to an arc that loop is still associated to the same location while the others disappear.  In the last 4 cases (1., 2., 3. and 4.) all the locations to which an arc goes without looping (characterized by the fact that there is a reset through this location in our automata for MITL formula $\Aa_\varphi$) are considered: either two new reset (!) clocks are associated to these locations, or a new reset (!) clock replace the clock $x^{\star}_1$ of the first pair of clocks associated to this location (and so, to the smallest represented interval).  Whatever is the case thanks to which $(S',v')$ has been formed, taking the corresponding arc of $\Bb_\varphi$, each of the intervals $(x,y)$ of $S$ must have satisfied the clock constraints $\gu{a_{(x,y)}}[x/x_i]\wedge \gu{a_{(x,y)}}[x/y_i]$, what corresponds to the fact that the whole interval $[v(x),v(y)]$ satisfies $\gu{a_{(x,y)}}[x/x_i]$ (because this guard is an interval and is so convex).  Moreover, $v'$ is obtained from $v$ by associating $\bot$ to each clock reset in the case 1., 2., 3. or 4. used to create $(S',v')$, and by associating $v(x)$ to all other clock x.  This treatment of $(S,v)$ to obtain $(S',v')$ exactly correspond to the fact that $\bigcup_{\ell \in L} \{ (\ell,[v(x),v(y)]) \: \vert \: (x,y,m) \in S(\ell) \} 
\overset{\sigma}{\longrightarrow}_{\appfunc{\varphi}} \bigcup_{\ell \in L} \{ (\ell,[v'(x),v'(y)]) \: \vert \: (x,y,m) \in S'(\ell) \}$ thanks to the minimal models obtained following arc $a_{(x,y)}$ from each $(\ell, [v(x),v(y)])$ ($\appfunc{\varphi}$ is applied to the result in way a merging is effectuated iff clock $x^{\star}_1$ is replaced by a new reset clock from $S$ to $S'$). We so take $T' = \bigcup_{\ell \in L} \{ (\ell,[v'(x),v'(y)], m) \: \vert \: (x,y,m) \in S'(\ell) \}$.  It is not difficult to see that, whatever is the case thanks to which $(S',v')$ has been formed, the marking of pairs of clock of $S'$ enables $T'$ to satisfy conditions (b) and (c) of the definition of the transition $\longrightarrow$ of  $\Ss_{\Aa, \appfunc{\varphi}}$.  We so have $T^\vartriangle \overset{\sigma}{\longrightarrow} T'$ in $\Ss_{\Aa, \appfunc{\varphi}}$ and $(S',v')$ and $T'$ correspond.\\
\textbf{\underline{From $\Ss_{\Aa, \appfunc{\varphi}}$ to $S_{\Bb_\varphi}$:}} Suppose that $T^\vartriangle \overset{\sigma}{\longrightarrow} T'$ in $\Ss_{\Aa, \appfunc{\varphi}}$ and that $(S^\vartriangle,v)$ and $T^\vartriangle$ correspond, i.e. $T^\vartriangle = \bigcup_{\ell \in L} \{ (\ell, [v(x),v(y)],m) \: \vert \: (x,y,m) \in S^\vartriangle(\ell) \}$. We will show that there exists $(S',v')$ such that $(S^\vartriangle,v) \overset{\sigma}{\longrightarrow} (S',v')$.  As $T^\vartriangle \overset{\sigma}{\longrightarrow} T'$, if $T^\vartriangle \in \alpha$, $T^\vartriangle$ is turned to $T = \bigcup_{\ell \in L} \{ (\ell, [v(x),v(y)], \bot) \: \vert \: (x,y,\top) \in S(\ell) \}$.  In this case, $S^\vartriangle \in F^\Bb$ and is turned to $S = \{ (x,y,\bot) \: \vert \: (x,y,\top) \in S(L) \}$, so that $(S,v)$ corresponds to $T$.  If $T^\vartriangle \notin \alpha$, $S^\vartriangle \notin F^\Bb$ and we let $T = T^\vartriangle$ and $S = S^\vartriangle$.  Then, $T \overset{\sigma}{\longrightarrow} T'$ and the fact that $T$ and $(S,v)$ correspond imply that $\bigcup_{\ell \in L} \{ (\ell,[v(x),v(y)]) \: \vert \: (x,y,m) \in S(\ell) \} 
\overset{\sigma}{\longrightarrow}_{\appfunc{\varphi}} \{ (\ell,I) \: \vert \: (\ell,I,m) \in T' \}$.  It means an arc $a_{(x,y)}$ has been chosen for each $(x,y)$ to create a minimal model of $\delta(\ell, \sigma)$ wrt $[v(x),v(y)]$.  We take $(S',v')$ to be the \textbf{unique} successor of $(S,v)$ in $S_{\Bb_\varphi}$ obtained associating to each $(x,y, m)$ the arc $a_{(x,y)}$ and such that $R_\ell$ is a singleton iff the application of $\appfunc{\varphi}$ merged the new interval created in location $\ell$ with the previous smallest one (it is not difficult to see that we well obtain a \textbf{unique} successor this way observing the definition of $\delta^\Bb$).  As detailed in the section "From $S_{\Bb_\varphi}$ to $\Ss_{\Aa, \appfunc{\varphi}}$" of this proof, $(S',v')$ will be such that $\bigcup_{\ell \in L} \{ (\ell,[v(x),v(y)]) \: \vert \: (x,y,m) \in S(\ell) \} 
\overset{\sigma}{\longrightarrow}_{\appfunc{\varphi}} \bigcup_{\ell \in L} \{ (\ell,[v'(x),v'(y)]) \: \vert \: (x,y,m) \in S'(\ell) \}$ thanks to the minimal models obtained following arc $a_{(x,y)}$ from each $(\ell, [v(x),v(y)])$ ($\appfunc{\varphi}$ is applied to the result in way a merging is effectuated iff clock $x^{\star}_1$ is replaced by a new reset clock from $S$ to $S'$).  In other words, $T'$ and $(S',v')$ correspond, thanks to the fact that the unique possible marking present on the third components of elements of $T'$ will be the same than the obtained marking of elements of $(S',v')$ (it is easy to see, observing all the cases thanks to which $T'$ has been constructed).\qed
\end{proof}

\section{MITL model-checking and satisfiability with
  TOCATA\label{sec:mitl-model-checking}}
In this section, we fix an MITL formula $\varphi$ and a TA $\Bb=
(\Sigma, B, b_0, X, \delta^\Bb, F^\Bb)$, and we consider the two
following problems:
\begin{inparaenum}[(i)]
\item the \emph{model-checking problem} asks whether $L(\Bb)\subseteq
  \sem{\varphi}$;
\item the \emph{satisfiability problem} asks whether
  $\sem{\varphi}\neq\emptyset$.
\end{inparaenum}
The construction of the TA $\Bb_{\neg\varphi}$ from $\varphi$ of the
previous section allows to solve those problems using classical
algorithms \cite{AD94}. Unfortunately, building $\Bb_{\neg\varphi}$
can be prohibitive in practice. To mitigate this difficulty, we
present an efficient \emph{on-the-fly} algorithm to perform MITL
model-checking, which has as input the TA $\Bb$ and the \TATA
$\Aa_{\neg\varphi}$ (whose size is linear in the size of
$\varphi$). It consists in exploring symbolically the state space of
the timed transition system $\Ss_{\Bb,\neg\varphi}$ which is obtained
by first taking the synchronous product of
$\tts{\Aa_{\neg\varphi},\appfunc{\neg\varphi}}$ and the transition
system\footnote{See appendix~\ref{sec:buchi-timed-automata} for a
  formal definition} $\tts{\Bb}$ of $\Bb$ \cite{AD94}, and then
associating Miyano-Hayashi markers with its states, by adapting the
construction of $\mhts{\Aa,f}$ given above to cope with the
configurations of $\Bb$. Namely, we associate a Miyano-Hayashi marker with the
configurations of $\Bb$ too, and a state of $\Ss_{\Bb,\neg\varphi}$ is
accepting iff all markers (including the one on the $\Bb$
configuration) are $\top$.  Obviously, $L(\Bb)\subseteq \sem{\varphi}$
iff $\Ss_{\Bb,\neg\varphi}$ has no accepting run (i.e., no run
visiting accepting states infinitely often). Symmetrically, we can
solve the \emph{satisfiability problem} by looking for accepting run
in $\mhts{\Aa_\varphi,\appfunc{\varphi}}$ (since the techniques are
similar for model-checking and satisfiability, we will only detail the
former in this section).\\
Formally, we define the transition system $\Ss_{\Bb,\neg\varphi} = ( \Sigma, S, s_{0}, \rightsquigarrow, \rightarrow, \alpha)$ where: $S$ is the set of elements of the form $\lbrace (\ell_k, I_k, m_k)_{k \in K} \rbrace \cup \lbrace (b, v, m_\Bb) \rbrace$ where $\lbrace(\ell_k, I_k)_{k \in K}\rbrace$ is a configuration of $\Aa_{\neg \varphi}$, $(b, v)$ is a state of $\Bb$, $m_\Bb \in \{\top,\bot\}$ and $\forall k \in K, m_k \in \lbrace \top,\bot\rbrace$, $s_0 = \lbrace (\ell_0, [0,0], \bot), (b_0, v_0, m) \rbrace$, where $m = \top$ iff $b_0 \in F^\Bb$ ; $\alpha$ contains all the elements of $S$ of the form $\lbrace (\ell_k, I_k, \top)_{k \in K} \rbrace \cup \lbrace (b, v, \top) \rbrace$.\\
For $t \in \R$ and $s, s' \in S$, supposing $s = \lbrace (\ell_k, I_k, m_k)_{k \in K} \rbrace \cup \lbrace (b, v, m_\Bb) \rbrace$, we have $s \overset{t}{\rightsquigarrow} s'$ iff  $s' = \lbrace (\ell_k, I_k + t, m_k)_{k \in K} \rbrace \cup \lbrace (b, v + t, m_\Bb) \rbrace$. $\rightsquigarrow = \underset{t \in \R}{\bigcup} \overset{t}{\rightsquigarrow}$.\\
"$\rightarrow$" is defined in way that (1) a transition between two states of $\Ss_{\Bb,\neg\varphi}$ corresponds to a transition between the configurations of $\Aa_{\neg \varphi}$ they contain, (2) a transition between the states of $\Bb$ they contain and (3) the markers of the third component of states of $S$ are kept updated.
Formally, $\rightarrow \; = \; \underset{\sigma \in \Sigma}{\bigcup} \overset{\sigma}{\rightarrow}$, where:
\begin{enumerate}
\item[(i)] For $s \in S \setminus \alpha$ and $s' \in S$, supposing $s = \lbrace (\ell_k, I_k, m_k)_{k \in K} \rbrace \cup \lbrace (b, v, m_\Bb) \rbrace$ and $s' = \lbrace (\ell_{k'}, I_{k'}, m_{k'})_{k' \in K'} \rbrace \cup \lbrace (b', v', m'_\Bb) \rbrace$, $s \overset{\sigma}{\rightarrow} s'$ iff
\begin{enumerate}
\item $\lbrace (\ell_k, I_k)_{k \in K} \rbrace \overset{\sigma}{\longrightarrow}_{\appfunc{\neg \varphi}} \lbrace (\ell_{k'}, I_{k'})_{k' \in K'} \rbrace$ in $\Aa_{\neg \varphi}$, i.e 
$\break \lbrace (\ell_{k'}, I_{k'})_{k' \in K'} \rbrace \in \appfunc{\neg\varphi}( \succ(\lbrace (\ell_k, I_k)_{k \in K} \rbrace, \sigma))$, and $(b, v) \overset{\sigma}{\longrightarrow} (b', v')$ in $\Bb$;
\item $\forall k' \in K'$: $\left( \ell_{k'} \in F \; \Rightarrow \; m_{k'} = \top \right)$;
\item $\forall k' \in K'$ with $\ell_{k'} \notin F$: if $\exists k \in K$ s.t. $(\ell_k,I_k,\bot)\in s$ and  $(\ell_{k'},I_{k'}) \in\dest(\conf{s},\conf{s'},(\ell_k,I_k,\bot))$, we have $m_{k'}=\bot$; otherwise, $m_{k'} = \top$;
\item $m'_{\Bb} = \top \;$ iff $\; m_\Bb = \top$ or $b' \in F^\Bb$.
\end{enumerate}
\item[(ii)] For $s \in \alpha$ and $s' \in S$, supposing $s = \lbrace (\ell_k, I_k, \top)_{k \in K} \rbrace \cup \lbrace (b, v, \top) \rbrace$, $s \overset{\sigma}{\rightarrow} s'$ iff $\lbrace (\ell_k, I_k, \bot)_{k \in K} \rbrace \cup \lbrace (b, v, \bot) \rbrace \overset{\sigma}{\rightarrow} s'$ according to the rules in (i).
\end{enumerate}

Defining $\appfunc{\neg \varphi}$ on configurations of $\Bb \times \Aa_{\neg \varphi}$ in way $\appfunc{\neg\varphi}( \lbrace (\ell_k, I_k)_{k \in K} \rbrace \cup \lbrace (b, v) \rbrace ) = F^M(\lbrace (\ell_k, I_k)_{k \in K} \rbrace) \cup \lbrace (b, v) \rbrace$, with $M =\max\{2\times |L|,M(\varphi)\}$, the following proposition can be proven similarly as Proposition \ref{Prop3}.

\begin{proposition}
\label{Prop3Bis}
For every MITL formula $\varphi$, the associated $\Aa_{\neg \varphi}$ and $\appfunc{\neg \varphi}$, and for every Büchi timed automaton $\Bb$: $L^\omega (\Ss_{\Bb,\neg\varphi}) = L^\omega_{\appfunc{\neg \varphi}}(\Bb \times \Aa_{\neg \varphi})$.
\end{proposition}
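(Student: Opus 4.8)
The plan is to reproduce, almost verbatim, the structure of the proof of Proposition~\ref{Prop3}, carrying along the extra $\Bb$-component and its Miyano-Hayashi marker $m_\Bb$. As there, I would establish the equality by two inclusions, each proved by an induction that builds a run of one transition system from a run of the other while maintaining two invariants at every even index $2j$: a \emph{structural} invariant $(\star 2j)$ stating that, once markers are forgotten, the $\Aa_{\neg\varphi}$-part of the state $s_{2j}$ of $\Ss_{\Bb,\neg\varphi}$ coincides with the configuration $C_{2j}$ of the $\appfunc{\neg\varphi}$-run of $\Bb\times\Aa_{\neg\varphi}$, and a \emph{marking} invariant $(\ast 2j)$ relating membership of $s_{2j}$ in $\alpha$ to the fact that, since the last $\alpha$-state, (i) a location of $F$ has occurred on \emph{all} branches of the $\Aa_{\neg\varphi}$-part \emph{and} (ii) a location of $F^\Bb$ has been visited by the $\Bb$-part.

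First I would observe that the $\Bb$-component evolves in exact lockstep in both systems: by its very definition $\appfunc{\neg\varphi}(\{(\ell_k,I_k)_{k\in K}\}\cup\{(b,v)\})=F^M(\{(\ell_k,I_k)_{k\in K}\})\cup\{(b,v)\}$, the approximation function leaves $(b,v)$ untouched, so both the timed transition $\rightsquigarrow$ and the discrete transition $\rightarrow$ project onto the transitions of $\Bb$ unchanged. Consequently the handling of the $\Aa_{\neg\varphi}$-markers $m_k$, the grouping performed by $\appfunc{\neg\varphi}$, and the use of \dest\ to propagate them is \emph{identical} to the proof of Proposition~\ref{Prop3}, since rules (i)(a)--(i)(c) of the definition of $\rightarrow$ for $\Ss_{\Bb,\neg\varphi}$ are literal copies of the corresponding rules for $\mhts{\Aa,f}$. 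The only genuinely new ingredient is the $\Bb$-marker $m_\Bb$, governed by rule (i)(d): $m'_\Bb=\top$ iff $m_\Bb=\top$ or $b'\in F^\Bb$. This marker simply \emph{latches} to $\top$ as soon as $F^\Bb$ is seen, and is reset to $\bot$ together with all the $m_k$ whenever an $\alpha$-state is reached (rule (ii)); hence $m_\Bb=\top$ at step $2j$ iff $\Bb$ has visited $F^\Bb$ since the last $\alpha$-state, which is exactly clause (ii) of $(\ast 2j)$.

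The crux, as in Proposition~\ref{Prop3}, is to convert these invariants into the acceptance equivalence. Since a state lies in $\alpha$ iff \emph{every} marker---all the $m_k$ \emph{and} $m_\Bb$---equals $\top$, visiting $\alpha$ infinitely often along a run of $\Ss_{\Bb,\neg\varphi}$ is equivalent, through $(\ast)$, to the \emph{conjunction} of two conditions: every infinite branch of the $\Aa_{\neg\varphi}$-part visits $F$ infinitely often, and the $\Bb$-part visits $F^\Bb$ infinitely often. This conjunction is precisely the accepting condition of the product $\Bb\times\Aa_{\neg\varphi}$ under the $\appfunc{\neg\varphi}$-semantics. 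For the direction going from an accepting product run to an accepting run of $\Ss_{\Bb,\neg\varphi}$, I would argue exactly as in Proposition~\ref{Prop3} that, after each reset of the markers, only finitely many branches lead to any configuration and $F$ is seen on all of them eventually, so all $m_k$ become $\top$; and since the accepting $\Bb$-run visits $F^\Bb$ infinitely often, $m_\Bb$ also becomes $\top$, so an $\alpha$-state is reached---this happening infinitely often. The converse direction reads $(\ast)$ backwards: infinitely many visits to $\alpha$ force both $F$ on all branches and $F^\Bb$ infinitely often between consecutive visits.

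I expect the main obstacle to be purely careful bookkeeping rather than a new idea: one must check that resetting $m_\Bb$ \emph{simultaneously} with the $m_k$ at $\alpha$-states does not desynchronise the two Büchi-type conditions, i.e. that a single family of reset points correctly serves both the ``all branches saw $F$'' requirement and the ``$\Bb$ saw $F^\Bb$'' requirement. Because all markers are reset at the same instants and $\alpha$ demands that all of them be $\top$, the conjunction is encoded faithfully; this is exactly the standard degeneralisation of a generalized Büchi condition into a single Büchi condition, and it goes through verbatim once the $\Bb$-component is recognised as a passive passenger under $\appfunc{\neg\varphi}$.
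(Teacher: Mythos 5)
Your proposal is correct and takes exactly the route the paper intends: the paper's entire justification for this proposition is the remark that, once $\appfunc{\neg\varphi}$ is extended so as to leave the $\Bb$-component untouched, the statement ``can be proven similarly as Proposition~\ref{Prop3}''. Your write-up---carrying $(b,v,m_\Bb)$ through the two inductions of Proposition~\ref{Prop3} with the extended invariants $(\star)$ and $(\ast)$, treating rule (i)(d) as a latch that is reset at $\alpha$-states, and observing that $\alpha$-acceptance then degeneralises the conjunction of the two B\"uchi conditions of the product---is precisely that adaptation, spelled out in more detail than the paper itself provides.
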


\paragraph{Region-based algorithms} Since $\Ss_{\Bb,\varphi}$, we adapt
the region abstraction of \cite{OW07} in order to cope with:
\begin{inparaenum}
\item the valuations of the clocks that are now \emph{intervals}; and
\item the Miyano-Hayashi markers.
\end{inparaenum}
Following the approach of \cite{OW07} we represent each region by a
unique word. In the sequel, we note $\cmax$ the
maximal constant of automata $\Bb$ and $\Aa_{\neg \varphi}$.

\begin{definition}
We define the equivalence relation $\sim$ on $\R^{+}$ by: $u \sim v \:$ iff $\:$either $u$ and $v > c_{\max}$, or $u$ and $v \leq c_{\max}$, $\lceil u \rceil = \lceil v \rceil$ and $\lfloor u \rfloor = \lfloor v \rfloor$.  The set of classes of $\sim$, called regions, is $REG = \lbrace \lbrace 0 \rbrace, (0,1), \lbrace 1 \rbrace, (1,2), \dots, (c_{\max}-1,c_{\max}), \lbrace c_{\max} \rbrace, (c_{\max}, + \infty) \rbrace$.  We will note $Reg(v)$ the class of $v \in \R^{+}$.
\end{definition}
In the sequel, $\forall v \in \R^{+}$, we note $frac(v)$ the fractionnal part of $v$. Moreover, we suppose that $\forall v > c_{\max}, frac(v) = 0$.\\
We can now define an equivalence relation $\equiv$ on $S$.

\begin{definition}
Let $s$ and $s'$ be two states of $S$ such that the configurations of $\Aa_{\neg \varphi}$ they contain have the same cardinality. We suppose that $s = \lbrace (\ell_k, I_k, m_k)_{k \in K} \rbrace \cup \lbrace (b, v, m_\Bb)\rbrace$ and (without loss of generality) $s' = \lbrace (\ell'_{k}, I'_{k}, m'_{k})_{k \in K} \rbrace \cup \lbrace (b', v', m'_\Bb)\rbrace$.  Then, we define $s \equiv s'$ iff:
\begin{enumerate}
\item $b = b'$ and $\forall k \in K : \ell_{k} = \ell'_{k}$ ; $m_\Bb = m'_\Bb$ and $\forall k \in K : m_{k} = m'_{k}$,
\item $\forall 1 \leq p \leq n : v(x_{p}) \sim v'(x_{p}) \text{ and } \forall k \in K : ( \inf(I_{k}) \sim \inf(I'_{k}) \wedge \sup(I_{k}) \sim \sup(I'_{k}) )$,
\item $\forall 1 \leq p, q \leq n : frac(v(x_{p})) \bowtie frac(v(x_{q}))$ iff $frac(v'(x_{p})) \bowtie frac(v'(x_{q}))$,
\item $\forall k, k' \in K : frac(\inf(I_{k})) \bowtie frac(\inf(I_{k'}))$ iff $frac(\inf(I'_{k'})) \bowtie frac(\inf(I'_{k'}))$,
\item $\forall k, k' \in K : frac(\sup(I_{k})) \bowtie frac(\sup(I_{k'}))$ iff $frac(\sup(I'_{k})) \bowtie frac(\sup(I'_{k'}))$,
\item $\forall k, k' \in K : frac(\inf(I_{k})) \bowtie frac(\sup(I_{k'}))$ iff $frac(\inf(I'_{k})) \bowtie frac(\sup(I'_{k'}))$,
\item $\forall k \in K, \forall 1 \leq p \leq n : frac(\inf(I_{k})) \bowtie frac(v(x_{p}))$ iff $frac(\inf(I'_{k})) \bowtie frac(v'(x_{p}))$,
\item $\forall k \in K, \forall 1 \leq p \leq n : frac(\sup(I_{k})) \bowtie frac(v(x_{p}))$ iff $frac(\sup(I'_{k})) \bowtie frac(v'(x_{p}))$,
\end{enumerate}
where ${\bowtie} \in \lbrace <, =, > \rbrace$.
\end{definition} 

In this definition, the same indices have been chosen for the two configurations of $\Aa_{\neg \varphi}$ in aim to make "correspond" $(\ell_{k},I_{k})$ with $(\ell'_{k},I'_{k})$.  Condition 1. forces corresponding $(\ell_{k},I_{k}, m_k)$ and $(\ell'_{k},I'_{k}, m'_k)$, as well as the two states of $\Bb$, to have (resp.) the same locations and markers.  Condition 2. forces intervals of corresponding $(\ell_{k},I_{k})$ and $(\ell'_{k},I'_{k})$ to have their infima and suprema in the same regions, and values of the same clocks of the two states of $\Bb$ to be in the same regions. The other conditions forces all the clock values present in $s$ (clock values, values of infimum and supremum of intervals) to present the same fractional part order than the corresponding clock values present in $s'$.

\begin{proposition}[Time-abstract bisimulation]
\label{Bisim}
Let $s_1, s_2 \in S$ such that $s_1 \equiv s_2$.  Then, for each transition $s_1 \overset{t}{\rightsquigarrow} z_1$ with $t \in \R^{+}$ (resp. $s_1 \overset{\sigma}{\longrightarrow}_{\appfunc{\neg \varphi}} z_1$, with $\sigma \in \Sigma$) and $z_1 \in S$, there exists $t' \in \R^{+}$ and $z_2 \in S$ such that $s_2 \overset{t'}{\rightsquigarrow} z_2$ (resp. there exists $z_2 \in S$ such that $z_1 \overset{\sigma}{\longrightarrow}_{\appfunc{\neg \varphi}} z_2$) and $z_1 \equiv z_2$.
\end{proposition}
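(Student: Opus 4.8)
The plan is to show that $\equiv$ is precisely the Alur--Dill region equivalence, applied simultaneously to the enlarged set of ``clocks'' consisting of the clocks $x_1,\dots,x_n$ of $\Bb$ together with the two endpoints $\inf(I_k),\sup(I_k)$ of every interval of the $\Aa_{\neg\varphi}$-component, and further refined by the requirement that all the discrete data (the locations $b$, $\ell_k$, and all the markers $m_\Bb$, $m_k$) coincide. Indeed, condition~2 of the definition records, for each such value, its region (integer part, and whether it exceeds \cmax), conditions~3--8 record the complete pre-order of the fractional parts of all these values, and condition~1 records the discrete data. Since time elapse shifts \emph{every} endpoint and every clock by the same amount $t$ (by definition of $C+t$ and of $v+t$) and leaves the discrete data untouched, while discrete transitions only compare these values with integer constants and reset some of them to the integer $0$, the whole statement reduces to the classical region-bisimulation argument, extended to cope with three new ingredients: the interval endpoints, the approximation function $\appfunc{\neg\varphi}$, and the Miyano--Hayashi markers.

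For the \textbf{time case}, given $s_1\overset{t}{\rightsquigarrow}z_1$ with $z_1=s_1+t$, I would invoke the standard fact that, from two region-equivalent valuations, the order in which fractional parts cross the next integer boundary is identical; hence there is a delay $t'$ such that $s_2\overset{t'}{\rightsquigarrow}z_2:=s_2+t'$ lands in the same region as $z_1$, i.e. satisfies conditions~2--8 against $z_1$. Condition~1 is immediate, since delays change neither locations nor markers. This is verbatim the Alur--Dill argument applied to the enlarged clock set.

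For the \textbf{discrete case}, given $s_1\overset{\sigma}{\longrightarrow}_{\appfunc{\neg\varphi}}z_1$, I would \emph{replay the very same choices} from $s_2$. Concretely: (i)~for the $\Bb$-component I take the same transition $(b,\sigma,g,r,b')$ --- by condition~2 the valuation $v'$ of $s_2$ is region-equivalent to $v$, and since $g$ is a conjunction of constraints $x\bowtie c$ with $c\in\N$, region-equivalent valuations satisfy exactly the same guards, so the transition is enabled, and resetting the same clocks of $r$ to $0$ preserves region-equivalence. (ii)~For the $\Aa_{\neg\varphi}$-component I choose, for each state $(\ell_k,I_k)$, the same arc $a$ as was chosen from $s_1$; one must check that the firability test --- namely that \emph{all} points of $I_k$ satisfy the clock constraint $x\bowtie c$ --- depends only on the regions of $\inf(I_k)$ and $\sup(I_k)$, so that the corresponding interval $I'_k$ of $s_2$ also makes $a$ firable and yields a region-equivalent successor interval ($[0,0]$ for a reset, a shifted copy otherwise). (iii)~The approximation $\appfunc{\neg\varphi}=F^M$ then merges a newly created singular interval $[0,0]$ with the next one exactly when that situation arises; since being equal to $[0,0]$ and the value of $\nbclocks{\cdot}$ are region-invariant, the same merges are performed in both runs, and \mername preserves region-equivalence because it returns $[\,0,\sup(I_1)\,]$, whose supremum is region-equivalent by hypothesis. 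Finally, (iv)~the marker updates (clauses (b)--(d) of the definition of $\rightarrow$) are driven solely by membership in $F$ and $F^\Bb$ and by the $\dest$ relation between surviving states, all of which are preserved by the above replay; hence $z_2$ can be indexed so that its markers match those of $z_1$, re-establishing condition~1, while conditions~2--8 hold because every surviving value keeps its region and relative fractional order, and every reset value equals $0$ in both.

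The main obstacle I anticipate is points (ii)--(iii): one must argue precisely that the predicate ``every point of $I$ satisfies $x\bowtie c$'' is invariant under $\equiv$ --- which requires care when an endpoint sits exactly on an integer and when the interval is open or closed there --- and then track, through the choice of arcs followed by the merging step of $\appfunc{\neg\varphi}$, that the new endpoints produced in $z_1$ and $z_2$ not only lie in the same regions but also preserve the entire fractional pre-order of conditions~3--8 (the newly reset endpoints contributing fractional part $0$, consistently in both). Maintaining throughout these manipulations the index correspondence underlying the definition of $\equiv$ is the bookkeeping that makes the argument slightly more delicate than the plain Alur--Dill case.
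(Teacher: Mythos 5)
Your proposal is correct and follows essentially the same route as the paper's proof: for delays, the paper makes the Alur--Dill successor argument explicit via a $next(\cdot)$ operator and induction, and for discrete steps it likewise replays the same arcs of $\Aa_{\neg\varphi}$ and the same transition of $\Bb$ (enabled by condition~2 of $\equiv$), then chooses the element of $\appfunc{\neg\varphi}(E')$ whose groupings correspond to those used for $z_1$, with markers and fractional-part order handled exactly as you describe. No substantive difference in approach or gap to report.
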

\begin{proof}
Let us note $s_1 = \lbrace (\ell_{k},I_{k}, m_k)_{k \in K} \rbrace \cup \lbrace (b, v, m_b) \rbrace$, as $s_1 \equiv s_2$, we can suppose that $s_2 = \lbrace (\ell_{k},I'_{k}, m_k)_{k \in K} \rbrace \cup \lbrace(b, v', m_b)\rbrace$.  In the following, we will use the following notation $C = \lbrace (\ell_{k},I_{k})_{k \in K} \rbrace$, $C' = \lbrace (\ell_{k},I'_{k})_{k \in K} \rbrace$, $s = \lbrace (b, v) \rbrace$ and $s' = \lbrace(b, v')\rbrace$.\\
First, suppose that $s_1 \overset{\sigma}{\longrightarrow}_{\appfunc{\neg \varphi}} z_1$, with $\sigma \in \Sigma$ and $z_1 =  \lbrace (\ell_{q},J_{q}, m_q)_{q \in Q} \rbrace \cup \lbrace (r, v^{\star}, m_r) \rbrace$.  Let us note $D = \lbrace (\ell_{q},J_{q})_{q \in Q} \rbrace$ and $z = (r, v^{\star})$. On the first hand, we have that $C \overset{\sigma}{\longrightarrow}_{\appfunc{\neg \varphi}} D$ and so $D \in \appfunc{\neg \varphi}(E)$, where $E = \underset{k \in K}{\bigcup} E_k$ and each $E_k$ is a minimal model of $\delta(\ell_{k},\sigma)$ wrt $I_{k}$.
We recall that $E = \underset{k \in K}{\bigcup} A_{k}[I_{k}]$ where $A_{k}$ is the set of terms of the disjunctive normal form of $\delta(\ell_{k},\sigma)$.  Let $E' = \underset{k \in K}{\bigcup} A_{k}[I'_{k}]$, i.e., we use the same arcs from $C$ to $E$ than from $C'$ to $E'$, for each $D' \in \appfunc{\neg \varphi}(E')$, we have that $C' \overset{\sigma}{\longrightarrow}_{\appfunc{\neg \varphi}} D'$ (because as $s_1 \equiv s_2$, thanks to condition 2. of the definition of "$\equiv$", $I_{k}$ and $I'_{k}$ satisfy the same clock constraints). 
On the other hand, $s_1 \overset{\sigma}{\longrightarrow}_{\appfunc{\neg \varphi}} z_1$ means that $s \overset{\sigma}{\longrightarrow} z$: there exists an arc $(b, \sigma, r, c, R) \in \Delta$ such that $v \models c$ and $\forall x_{p} \in R : v^{\star}(x_{p}) = 0$, while $\forall x_{p} \notin R : v^{\star}(x_{p}) = v(x_{p})$.  
Let us note $z' = (r, v'^{\star})$ where $v'^{\star}$ is such that $\forall x_{p} \in R : v'^{\star}(x_{p}) = 0$, while $\forall x _{p} \notin R : v'^{\star}(x_{p}) = v'(x_{p})$.  We have that $s' \overset{\sigma}{\longrightarrow} z'$ following the arc $(b, \sigma, r, c, R)$: as $s_1 \equiv s_2$, the condition 2. of the definition of "$\equiv$" enables $v(x_{1}), \dots, v(x_{n})$ and $v'(x_{1}), \dots, v'(x_{n})$ to satisfy the same clock constraints (so that $v' \models c$). Let us note $z_3 = \lbrace (\ell,I, m) \vert (\ell,I) \in E \rbrace \cup \lbrace (r, v^{\star}, m_r) \rbrace$ the \footnote{once the minimal models and the arc of $\Bb$ of the definition of $\overset{\sigma}{\rightarrow}$ are chosen, it is easy to see there exists a unique possible choice for the values of the markers of $z_3$.} element of $S$ such that $s \overset{\sigma}{\longrightarrow} z_3$, and $z_4 = \lbrace (\ell,I, m) \vert (\ell,I) \in E' \rbrace \cup \lbrace (r, v'^{\star}, m'_r) \rbrace$ the unique element of $S$ such that $s \overset{\sigma}{\longrightarrow} z_3$.  We will prove that $z_3 \equiv z_4$.  Condition 1. of the definition of $\equiv$ is satisfied because $C$ and $C'$ owned the same markers, as well as $s$ and $s'$ (because $s_1 \equiv s_2$) and we chose the same minimal models to go from $C$ to $E$ than from $C'$ to $E'$ and the same arc of $\Bb$ to go from $s$ to $z$ than from $s'$ to $z'$ (these are the unique parameters for the choice of the markers of $z_3$ and $z_4$).  We still must prove that conditions 2. to 8. are satisfied.  The clock values observed for verifying conditions 2. to 8. are included in $\lbrace v(x_{p}) \vert 1 \leq p \leq n \rbrace \cup \lbrace v'(x_{p}) \vert 1 \leq p \leq n \rbrace \cup \lbrace inf(I_{k}) \vert k \in K \rbrace \cup \lbrace sup(I_{k}) \vert k \in K \rbrace \cup \lbrace inf(I'_{k}) \vert k \in K \rbrace \cup \lbrace sup(I'_{k}) \vert k \in K \rbrace \cup \lbrace 0 \rbrace$ (as the discrete transitions either let the clocks values unchanged or replace them by 0). However, conditions 2. to 8. were verified on $\lbrace v(x_{p}) \vert 1 \leq p \leq n \rbrace \cup \lbrace v'(x_{p}) \vert 1 \leq p \leq n \rbrace \cup \lbrace inf(I_{k}) \vert k \in K \rbrace \cup \lbrace sup(I_{k}) \vert k \in K \rbrace \cup \lbrace inf(I'_{k}) \vert k \in K \rbrace \cup \lbrace sup(I'_{k}) \vert k \in K \rbrace$ thanks to the fact that $s_1 \equiv s_2$. Moreover if a clock value is replaced by 0 in $z$ or $E$, the corresponding clock value in $z'$ or $E'$ is also replaced by 0 (which is the smallest possible clock value, so that its comparisons with all other clocks values will enable to verify 3. to 8.). So, $z_3 \equiv z_4$.\\
Now, as $D \in \appfunc{\neg \varphi}(E)$, we choose $D' \in \appfunc{\neg \varphi}(E')$ such that the intervals of $E'$ grouped to obtain $D'$ correspond to those grouped in $E$ to obtain $D$.  Let us recall that $D = \lbrace(\ell_{q},J_{q})_{q \in Q} \rbrace$ and let us note $D' = \lbrace (\ell_{q},J'_{q})_{q \in Q} \rbrace$
Formally, we want $D'$ to be such that: $\forall(\ell_{k},I_{k}),(\ell_{\tilde{k}},I_{\tilde{k}}) \in C : \big( (\ell_q,J_q) \in \dest(C,D,(\ell_k,I_k)) \wedge (\ell_q,J_q) \in \dest(C,D,(\ell_{\tilde{k}},I_{\tilde{k}})) \big) \Rightarrow \big( (\ell_q,J'_q) \in \dest(C',D',(\ell_k,I'_k)) \wedge (\ell_q,J'_q) \in \dest(C',D',(\ell_{\tilde{k}},I'_{\tilde{k}})) \big)$.
We conclude that $z_1 \equiv z_2$ (the argument is the same as that why $z_3 \equiv z_4$).\\
Secondly, suppose that $s_1 \overset{t}{\rightsquigarrow} z_1$ for a certain $t \in \R^{+}$.  We must prove there exists $t' \in \R^{+}$ and a configuration $z_2$ such that $s_2 \overset{t'}{\rightsquigarrow} z_2$ \textbf{(*)}.  To do that, we first define the "time successor" of an element $s_1$ of $S$, noted $next(s_1)$: it is an element of the first equivalence class of $\equiv$ reachable from the class of $s_1$ (and different from it) letting time elapsing.  Then, we prove that $next(s_1) \equiv next(s_2)$.  We finally deduce \textbf{(*)} from this last result.\\
Let V = $\lbrace v(x_{i}) \vert 1 \leq p \leq n \rbrace \cup \lbrace v  \vert (\ell_{k},I_{k}) \in C \wedge ( v = \inf(I_{k}) \vee v = sup(I_{k}) ) \rbrace$ be the set of clock values present in $s$ and $C$.  We note $\mu = \max \lbrace frac(v) \vert v \in V \rbrace$.  We define $d$ as $\frac{1-\mu}{2}$ if V contains an integer smaller or equal to $c_{max}$, and $1-\mu$ otherwise.  We define the time successor of $s_1$, noted $next(s_1)$, to be $\lbrace (\ell_{k},I_{k}+d, m_k)_{k \in K} \rbrace \cup \lbrace (b, v+d, m_b) \rbrace$.\\
We will prove that: as $s_1 \equiv s_2$, we have that $next(s_1) \equiv next(s_2)$ \textbf{(**)}.  We have that $next(s_1) = \lbrace (\ell_{k},I_{k}+d, m_k)_{k \in K} \rbrace \cup \lbrace (b, v+d, m_b) \rbrace$ for a certain $d > 0$, and $next(s_2) = \lbrace (\ell_{k},I'_{k}+d', m_k)_{k \in K} \rbrace \cup \lbrace (b, v'+d', m_b) \rbrace$ for a certain $d' > 0$.  The effect on $s_1$ is either, if an integer is present among the clocks values, to keep the order of their fractional parts unchanged, either, otherwise, to permute the order of the fractional parts of the clocks values with the largest fractional parts such that they now have a zero fractional part (and so are the clock values with the smallest fractional parts).  The effect on $s_2$ being the same, in the same cases, and the conditions of $s_1 \equiv s_2$ certifying that an integer is present among the clocks values of $s_1$ iff there is an integer between the clocks values of $s_2$, conditions 1. to 8. are still verified on $next(s_1)$ and $next(s_2)$: $next(s_1) \equiv next(s_2)$.\\
Now, as $s_1 \overset{t}{\longrightarrow} z_1$: it means there exists an $n \geq 0$ such that $z_1 \equiv next^{n}(s_1)$.  As $s_1 \equiv s_2$, we deduce from \textbf{(**)} that: $next^{n}(s_1) \equiv next^{n}(s_2)$ and so, taking $z_2 = next^{n}(s_2)$, we verify \textbf{(*)} ($d'$ is the sum of the $n$ $d$'s used to recursively compute $next(s_2), next^{2}(s_2), \dots, next^{n}(s_2)$). \qed
\end{proof}

Remark that the size of the configurations of $\Aa_{\neg \varphi}$ we can encounter in $\Ss_{\Bb,\neg\varphi}$ is bounded by $M(\neg \varphi)$, thanks to the use of $\appfunc{\neg \varphi}$: there is only a finite number of such configurations, and so the number of configurations of $\Aa_{\neg \varphi}$ we can encounter in $\Ss_{\Bb,\neg\varphi}$ is finite.  Therefore, as the number of regions is also finite, the quotient of $\Ss_{\Bb,\neg\varphi}$ by $\equiv$ is finite and we can elaborate a model-checking algorithm using it.\\
Following the approach of \cite{OW07} we will represent each of these regions by a unique word.  Once again, the definition of these words must be adapted.  On one hand, they must maintain an additional component corresponding to markers due to the Miyano Hayashi construction.  On the other hand, each interval must be split in two parts: one representing the infimum of the interval (its region and the relative value of its fractional part) and the other its supremum.  Yet, we must use natural numbers to associate each infimum of interval to the suitable supremum.  We encode
regions of $\Ss_{\Bb, \neg \varphi}$ by finite words whose letters are finite sets of tuples of the form $(\ell, r, m, k)$, where $\ell\in L\cup L^\Bb$, $r\in REG$, $m\in\{\top,\bot\}$ and $0\leq k\leq
M(\varphi)/2$. Here is the definition of the function $H$ that associate to each $s \in S$ the region it is in.\\
For $s = \lbrace (\ell_{k},I_{k}, m_k)_{k \in K} \rbrace \cup \lbrace (b, v, m) \rbrace$, $H(s) = H_1H_2\cdots H_m$ is defined as follows:
\begin{enumerate}
\item For each location $\ell$, let $C(\ell)=\{(\ell',I,m)\in C\mid
  \ell'=\ell\}$. Assume $C(\ell)=\{(\ell_1, I_1, m_1),\ldots,
  (\ell_k,I_k, m_k)\}$, with $I_1\leq\cdots\leq I_k$. Then, we first
  build $E_\ell=\{(\ell_i, \inf(I_i),m_i, i),(\ell_i, \sup(I_i),m_i,
  i)\mid 1\leq i\leq k\}$.
\item We treat $(\ell^\Bb, v, m)$ symmetrically,
  and let $E^\Bb=\lbrace(\ell^\Bb, v(x_{1}), m, 1), \dots,
  \break (\ell^\Bb,
  v(x_{n}), m, n)\rbrace$. We let $\Ee=E^\Bb\cup_{\ell\in L} E_\ell$. That
  is, all elements in $\Ee$ are tuples $(\ell, v,m,i)$, where $\ell$
  is a location (of $\Aa_\varphi$ or $\Bb$), $v$ is a real value
  (interval endpoint or clock value), $m$ is a Miyano-Hayashi marker
  and $i$ is bookkeeping information that links $v$ to an interval (if
  $\ell$ is a location of $\Aa_\varphi$), or to a clock ($\ell$ is a
  location of $\Bb$).
\item We partition $\Ee$ into $\Ee_1,\ldots, \Ee_m$ s.t. each $\Ee_i$
  contains all elements from $\Ee$ with the same fractional part to
  their second component (assuming $frac(u) = 0$ for all
  $u>\cmax$). We assume the ordering $\Ee_1$, $\Ee_2$,\ldots, $\Ee_m$
  reflects the increasing ordering of the fractional parts.
\item For all $1\leq i\leq m$, we obtain $H_i$ from $\Ee_i$ by
  replacing the second component of all elements in $\Ee_i$ by the
  region from $REG$ they belong to.
\end{enumerate}
Thus, noting $\max_{\ell}$ the maximal number of interval that can be present in location $\ell \in L$ of $\Aa_{\neg \varphi}$ (given by Theorem \ref{ThmBorne}), $H(s)$ will be a finite word over the alphabet $\Lambda = (B \cup L) \times REG \times \lbrace \top, \bot\rbrace \times \lbrace 1, 2, \dots, \max(\underset{\ell \in L}{\max}(\max_{\ell}), n) \rbrace$.  We will also view $H$ as a function $H : S \rightarrow \Lambda^{\star}$.

\begin{example}
  Consider a TA $\Bb$ with 1 clock, let $\cmax=2$, and let $s =\break
  \big\{ \{(\ell_{1}, [0,1.3], \bot), (\ell_{1}, [1.8,2.7], \top)\},
  (\ell^\Bb, 1.3, \bot)\big\}$. The first step of the construction
  yields the set $\Ee=\{(\ell_{1}, 0, \bot, 1), (\ell_{1}, 1.3, \bot,
  1), (\ell_{1}, 1.8, \top, 2), (\ell_{1}, 2.7, \top, 2),\break
  (\ell^\Bb,1.3,\bot,1) \}$. Then, we have $H(s)= \big\{ (\ell_{1},
  \{0\}, \bot, 1), (\ell_{1}, (2,+\infty), \top, 2) \big\}\break\big\{
  (\ell_{1}, (1,2), \bot, 1), (\ell^\Bb, (0,1), \bot, 1) \big\}\big\{
  (\ell_{1}, (1,2), \top, 2) \big\}$.
\end{example}

\begin{proposition}
\label{equivH}
Let $s, s' \in S$.  We have: $s \equiv s'$ iff $H(s) = H(s')$.
\end{proposition}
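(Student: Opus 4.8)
The plan is to prove the two implications separately, the unifying idea being that $H(s)$ is a faithful encoding of exactly the information that the relation $\equiv$ abstracts away. Concretely, write $\mathcal{V}(s)$ for the finite collection of real numbers consisting of the clock values $v(x_p)$ of the $\Bb$-component together with the endpoints $\inf(I_k)$ and $\sup(I_k)$ of the intervals of the $\Aa_{\neg\varphi}$-component, each value carried with its \emph{tag}: its location ($\ell$ or $\ell^\Bb$), its Miyano--Hayashi marker, and the bookkeeping index that identifies the interval (its rank $i$ inside its location) or the clock. Conditions 1--2 of $\equiv$ say precisely that there is a tag-preserving correspondence under which matched values lie in the same location/marker (cond.\ 1) and in the same region (cond.\ 2), while conditions 3--8 --- which together range over \emph{all} six kinds of pairs (clock/clock, inf/inf, sup/sup, inf/sup, inf/clock, sup/clock) --- say that this correspondence preserves the total preorder of the fractional parts, i.e.\ both the relative order and the equalities of the $frac(\cdot)$. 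These are exactly the two data recorded by $H$: the region is stored as the second component of each tuple, whereas the partition of $\Ee$ into the letters $\Ee_1,\ldots,\Ee_m$ records which values share a fractional part, and the left-to-right order of the letters records the increasing order of fractional parts.

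For the direction $s\equiv s' \Rightarrow H(s)=H(s')$, fix the canonical correspondence that pairs the $i$-th interval of each location $\ell$ in $s$ with the $i$-th interval of $\ell$ in $s'$ (legitimate since within a location intervals are listed in increasing order, and cond.\ 1 forces the same locations, hence the same number of intervals per location and equal ranks), and pairs equal clocks of $\Bb$. By conditions 1 and 2 the tags (location, marker, index) and the regions of paired values coincide, so paired values produce identical tuples after the region-substitution of step 4. By conditions 3--8, two values of $s$ have equal (resp.\ smaller) fractional part iff their partners in $s'$ do, so the partition into fractional-part classes and its ordering are identical for $\Ee$ and $\Ee'$; one must only check that the degenerate class $frac=0$ is handled consistently, which follows because cond.\ 2 guarantees a value exceeds $\cmax$ in $s$ iff its partner does in $s'$. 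Hence $\Ee_i$ and $\Ee'_i$ coincide for every $i$ and $H(s)=H(s')$.

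For the converse, assume $H(s)=H(s')$. The words have the same length $m$ and equal letters, so the multiset of tuples $(\ell,r,m,i)$ is identical. For each location $\ell$ the indices occurring are $1,\ldots,k_\ell$, so $k_\ell=k'_\ell$ and we recover the canonical pairing of intervals (and of clocks via $E^\Bb$), giving cond.\ 1 for the locations and, from the $m$-components of the tuples, cond.\ 1 for the markers. For each paired interval, its two endpoints contribute either two tuples with common $(\ell,m,i)$ or a single one; since $\inf(I)\le\sup(I)$ we recover which tuple is the infimum as the one carrying the smaller region (when a single tuple occurs, i.e.\ endpoints in the same region with the same conventional fractional part, both endpoints share that region, so no distinction is needed). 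This yields cond.\ 2, and reading off the letter that contains each endpoint/clock value yields the fractional-part preorder, hence conditions 3--8. Therefore $s\equiv s'$.

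The main obstacle is the bookkeeping surrounding the index $i$ and the inf/sup role: one must verify that $H$ never loses the identity of an interval nor the ability to tell its infimum from its supremum, handling cleanly the degenerate situations in which the two endpoints collapse to a single tuple --- singular intervals, endpoints in a common open region, or whole intervals lying above $\cmax$. Making explicit that in each such case the inf/sup distinction is either recoverable from the region ordering or irrelevant (because both endpoints then share region and fractional class) is the delicate point that glues the tuple-level description of $H$ to conditions 2 and 6--8 of $\equiv$.
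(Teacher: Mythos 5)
Your proof is correct and takes essentially the same route as the paper's: both implications rest on the observation that $H$ records exactly the data that $\equiv$ constrains, namely the location/marker/index tags (condition 1), the regions of all interval endpoints and clock values (condition 2), and the fractional-part preorder through the ordered partition into letters (conditions 3--8). Your write-up is in fact more detailed than the paper's brief argument, in particular on recovering the inf/sup roles from the region ordering and on the degenerate collapses (singular intervals, shared regions, values above $c_{\max}$), but the underlying argument is the same.
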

\begin{proof}
($\Rightarrow$)  Suppose that $s \equiv s'$, then the order of the fractional parts of all the clock values that $s$ contains is the same than those of all the corresponding clock values that $s'$ contains (see $\equiv$ conditions 3. to 8.).  Moreover, their corresponding states have their infima (resp. their suprema, resp. clocks values) in the same region (see $\equiv$ condition 2.) and their locations are the same.  The way $H(s)$ and $H(s')$ are constructed, their will be no difference between these two words.\\
($\Leftarrow$)  Suppose $H(s) = H(s')$, then we associate each interval of the configuration of $\Aa_{\neg \varphi}$ $s$ contains, clearly defined by two 4-tuples in $H(s)$, to the interval of $s'$ that is represented by the two corresponding 4-tuples of $H(s')$.  Moreover we associate each value $v_{i}$ of the clock $x_i$ of $\Bb$, clearly defined by a certain 4-tuple in $H(s)$, with the value of $v'_i$ of the clock $x_{i}$ of $\Bb$ represented in the corresponding 4-tuple of $H(s')$.  As $H(s) = H(s')$, conditions 1. and 2. of $\equiv$ are of course verified.  The other conditions are also respected thanks to the groupings executed on the elements of $H(s)$ and $H(s')$ to have an increasing order of the fractional parts of the second components (i.e. clock) values.\qed
\end{proof}

\begin{definition}
\label{DefH}
We define:
\begin{center}
$\mathcal{H} \; = \; \Ss_{\Bb,\neg\varphi} \big/ \equiv \; \; = \; \lbrace H(s) \; \vert \; s \in S \rbrace$.
\end{center}
For all $W^{1}, W^{2} \in \mathcal{H}$ and $\sigma \in \Sigma$ we define $\; W^{1} \overset{\sigma}{\longrightarrow} W^{2} \;$ iff $\; \forall s^{1} \in H^{-1}(W^{1}),$ $\exists s^{2} \in H^{-1}(W^{2}) \; : \; s^{1} \overset{\sigma}{\longrightarrow} s^{2}$.\\
For all $W^{1}, W^{2} \in \mathcal{H}$, we define $W^{1} \longrightarrow_{T} W^{2}$ iff $\; \forall s^{1} \in H^{-1}(W^{1}),$ $\exists t \in \R$ and $\exists s^{2} \in H^{-1}(W^{2}) \; : \; s^{1} \overset{t}{\rightsquigarrow} s^{2}$.
\end{definition}

\begin{proposition}
\label{configChoice}
Let $W^{1}, W^{2} \in \mathcal{H}$, $\sigma \in \Sigma$ and $t \in \R^{+}$.\\
$W^{1} \overset{\sigma}{\longrightarrow} W^{2}$ iff $\exists s^{1} \in H^{-1}(W^{1})$ and $s^{2} \in H^{-1}(W^{2}) : s^{1} \overset{\sigma}{\longrightarrow} s^{2}$.
\end{proposition}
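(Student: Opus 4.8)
The plan is to prove the two implications separately, with the forward one being immediate from the definition and the backward one resting entirely on the time-abstract bisimulation of Proposition~\ref{Bisim} together with the characterisation of $\equiv$ by $H$ from Proposition~\ref{equivH}.

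For $(\Rightarrow)$ I would simply unfold Definition~\ref{DefH}: assuming $W^{1} \overset{\sigma}{\longrightarrow} W^{2}$ means that for \emph{all} $s^{1} \in H^{-1}(W^{1})$ there is $s^{2} \in H^{-1}(W^{2})$ with $s^{1} \overset{\sigma}{\longrightarrow} s^{2}$. Since $W^{1}\in\mathcal{H}$ guarantees $H^{-1}(W^{1})\neq\emptyset$, choosing any such $s^{1}$ and its witness $s^{2}$ yields the existential statement.

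The substantive direction is $(\Leftarrow)$, and I would prove it as follows. Suppose there are $s^{1}\in H^{-1}(W^{1})$ and $s^{2}\in H^{-1}(W^{2})$ with $s^{1}\overset{\sigma}{\longrightarrow}s^{2}$, and let $\tilde{s}^{1}\in H^{-1}(W^{1})$ be arbitrary; the goal is to exhibit a $\sigma$-successor of $\tilde{s}^{1}$ lying in $H^{-1}(W^{2})$. Since $H(s^{1})=W^{1}=H(\tilde{s}^{1})$, Proposition~\ref{equivH} gives $s^{1}\equiv\tilde{s}^{1}$. I then apply the discrete case of Proposition~\ref{Bisim}, taking $s^{1}$ as the state whose transition $s^{1}\overset{\sigma}{\longrightarrow}s^{2}$ is known and $\tilde{s}^{1}$ as the equivalent state: this produces some $\tilde{s}^{2}\in S$ with $\tilde{s}^{1}\overset{\sigma}{\longrightarrow}\tilde{s}^{2}$ and $s^{2}\equiv\tilde{s}^{2}$. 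Applying Proposition~\ref{equivH} once more gives $H(\tilde{s}^{2})=H(s^{2})=W^{2}$, so $\tilde{s}^{2}\in H^{-1}(W^{2})$. As $\tilde{s}^{1}$ was arbitrary, the universal condition of Definition~\ref{DefH} is met, i.e. $W^{1}\overset{\sigma}{\longrightarrow}W^{2}$.

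I do not expect a genuine obstacle here: the result is exactly the statement that the quotient map $H$ turns the universally-quantified region transition of Definition~\ref{DefH} into an existentially-checkable one, which is the standard payoff of having a (time-abstract) bisimulation whose classes are the fibres of $H$. The two points requiring a little care are purely bookkeeping: first, to invoke Proposition~\ref{Bisim} in the correct direction (with the known transition issuing from $s^{1}$ and the sought transition from the equivalent state $\tilde{s}^{1}$), which is legitimate since $\equiv$ is symmetric; and second, to observe that the relation $\overset{\sigma}{\longrightarrow}$ on $\Ss_{\Bb,\neg\varphi}$ used in Definition~\ref{DefH} is precisely the $\overset{\sigma}{\longrightarrow}_{\appfunc{\neg\varphi}}$ for which Proposition~\ref{Bisim} is stated. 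The analogous equivalence for the delay relation $\longrightarrow_{T}$ would follow identically from the time part of Proposition~\ref{Bisim}.
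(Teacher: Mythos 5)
Your proof is correct and follows essentially the same route as the paper's: the forward direction is read off Definition~\ref{DefH}, and the backward direction combines Proposition~\ref{equivH} (to get $s^{1}\equiv\tilde{s}^{1}$ from equal $H$-images), Proposition~\ref{Bisim} (to transfer the transition), and Proposition~\ref{equivH} again (to place the resulting successor in $H^{-1}(W^{2})$). The paper's version is merely terser, leaving the second application of Proposition~\ref{equivH} implicit.
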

\begin{proof}
($\Rightarrow$) Follows directly from Definition \ref{DefH}.\\
($\Leftarrow$) Suppose that $s^{1} \in H^{-1}(W^{1}), s^{2} \in H^{-1}(W^{2})$, and that $s^{1} \overset{\sigma}{\longrightarrow}$ $s^{2}$.  Let $s^{3} \in H^{-1}(W^{1})$, we must prove that $\exists s^{4} \in H^{-1}(W^{2}) : s^{3} \overset{\sigma}{\longrightarrow} s^{4}$.  As $s^{3} \in H^{-1}(W^{1})$ and $s^{1} \in H^{-1}(W^{1})$, by Proposition \ref{equivH}, $s^{3} \equiv s^{1}$. As $s^{1} \overset{\sigma}{\longrightarrow} s^{2}$, Proposition \ref{Bisim} ensures that $\exists s^{4} \in H^{-1}(W^{2}) : s^{3} \overset{\sigma}{\longrightarrow} s^{4}$.
\end{proof}

\begin{definition}
\label{DefPostH}
Let $\mathcal{W} \subseteq \mathcal{H}$, we define:
\begin{center}
$Post(\mathcal{W}) := \lbrace W' \in \mathcal{H} \; \vert \; \exists \sigma \in \Sigma$, $W \in \mathcal{W}$ and $W'' \in \mathcal{H} : W \longrightarrow_{T} W'' \overset{\sigma}{\longrightarrow} W' \rbrace$.
\end{center}
\end{definition}

We claim that, for any word $W \in \mathcal{H}$, $Post(W)$ is finite and effectively computable.
Let $W \in \mathcal{H}$.  The set of all $W''$ such that $W \longrightarrow_{T} W''$ is a finite set of words with the same number of 4-tuples than W.  We form this set accumulating the words computed recursively as follows, using at each time the last $W_{next}$ obtained (and starting from $W$):
\begin{itemize}
\item if the first letter of $W_{next}$ contains 4-tuples whose second component is $\lbrace 0 \rbrace$ or $\lbrace 1 \rbrace$ or ... or $\lbrace c_{\max} \rbrace$, 
\begin{enumerate}
\item the 4-tupples of this first letter whose second component is $\lbrace c_{\max} \rbrace$ are replaced by the same 4-tuples in which $\lbrace c_{\max} \rbrace$ is replaced by $(c_{\max},+\infty)$,
\item the other 4-tuples of this first letter are deleted from it (if it then becomes empty, it is omitted). A new set of 4-tuples is created as a new second letter: it will contain these same 4-tuples in which the second component is replaced by the immediately following region ($(0,1)$ instead of $\lbrace 0 \rbrace$, $(1,2)$ instead of $\lbrace 1 \rbrace$, ..., $(c_{\max} -1, c_{\max})$ instead of $\lbrace c_{\max} -1 \rbrace$).  The end of the word does not change. 
\end{enumerate}
The following $W_{next}$ is the word created like this ;
\item else, the last letter of $W_{next}$ is deleted.  Its 4-tuples are modified to create a new set that will contain these same 4-tuples in which the second component is replaced by the immediately following region ($\lbrace 1 \rbrace$ instead of $(0,1)$, $\lbrace 2 \rbrace$ instead of $(1,2), \dots , \lbrace c_{\max} \rbrace$ instead of $(c_{\max}-1,c_{\max})$).  This new set is either joined with the first letter of the modified $W_{next}$, if it contains 4-tuples having $(c_{\max}, + \infty)$ as second components, either added as a new first letter of the modified $W_{next}$ otherwise.  The rest of the word does not change. The following $W_{next}$ is the word created like this ;
\end{itemize}
We stop when we encounter a $W_{next}$ that has a unique letter whose 4-tuples have $(c_{\max}, + \infty)$ as second components.
\\
Then, for each possible $W''$ we easily find a $s \in S$ such that $H(s) = W''$ (note that the choice of $s$ does not matter thanks to Proposition \ref{configChoice}).  For all $\sigma \in \Sigma$, it is easy to compute the set of elements $s'$ of $S$ such that $s \overset{\sigma}{\longrightarrow} s'$.  This set is finite and, from each of its elements $s'$, we can get back $H(s')$.\\
Once we have examined each letter $\sigma \in \Sigma$, for each possible $W''$, the (finite !) set of all the $H(s')$ found form $Post(W)$.

\begin{definition}
\label{DefPostPlusH}
Let $\mathcal{W} \subseteq \mathcal{H}$ and $n \in \N_0$, we define: $Post^{n}(\mathcal{W}) = Post^{n-1}(\mathcal{W})$, with $Post^0(\mathcal{W}) = \mathcal{W}$ and $Post^1(\mathcal{W}) = Post(\mathcal{W})$.\\
We define $Post^*(\mathcal{W}) = \underset{n \in \N}{\bigcup} Post^n(\mathcal{W})$ and $Post^+(\mathcal{W}) = \underset{n \in \N_0}{\bigcup} Post^n(\mathcal{W})$.
\end{definition}
Remark that, as $\mathcal{H}$ is finite, $\exists m \in \N : Post^*(\mathcal{W}) = Post^m(\mathcal{W})$.\\

Let us note $H_0 := H(s_0)$ the word of $\mathcal{H}$ corresponding to the initial state of $\Ss_{\Bb,\neg\varphi}$.  We say that a word $W$ of $\mathcal{H}$ is accepting iff the third components of all the 4-tuples it contains are "1" (such words correspond to accepting states of $\Ss_{\Bb,\neg\varphi}$).  We note $\mathcal{F} \subseteq \mathcal{H}$ the set of accepting words of $\mathcal{H}$.\\
Here is an (classical) algorithm for the model-checking of MITL in which the reachable words of $\mathcal{H} = \Ss_{\Bb,\neg\varphi} \big/ \equiv$ are computed "on the fly".\\

\begin{algorithm}
\caption{ModelCheckingMITL $\qquad \qquad \qquad \qquad \qquad \qquad \qquad \qquad \qquad \qquad \qquad$
Input: A TA $\mathcal{B}$ and the ATA $\mathcal{A}_{\neg \varphi}$, for $\varphi \in MITL$. $\qquad \qquad \qquad \qquad \qquad$ \blan{.}
Output: "true" iff $\mathcal{B}$ $\models$ $\varphi$.}
\begin{algorithmic}[1]
\State $C \; \leftarrow \; \emptyset$
\State $D \; \leftarrow \; Post^\star(H_0) \cap \mathcal{F}$

\While{$C \neq D$}
\State $C \; \leftarrow D$
\State $D \; \leftarrow \; Post^{+}(D) \cap \mathcal{F}$
\EndWhile

\If{$D = \emptyset$}
\State \textbf{return} true
\Else
\State \textbf{return} false
\EndIf
\end{algorithmic}
\label{AlgoBase}
\end{algorithm}

We use the following theorem to prove that this algorithm is correct.

\begin{theorem}[\cite{M11} - Theorem 2.3.20]
Let $\Aa = < Q, \Sigma, I, \rightarrow, F >$ be a non-deterministic Büchi automaton.
\begin{center}
$L( \Aa ) = \emptyset \quad$ iff $\quad GFP(\lambda X. Post^{+}(X) \cap F \cap Post^\star(I) ) \; = \; \emptyset$.
\end{center}
\end{theorem}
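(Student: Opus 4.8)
The plan is to reduce this emptiness characterisation to the classical \emph{lasso} criterion for B\"uchi acceptance over a finite state space, and then to identify the greatest fixed point of $f := \lambda X.\, Post^{+}(X) \cap F \cap Post^{\star}(I)$ with the set of accepting states sitting on a reachable accepting cycle. First I would observe that $Q$ is finite and that $f$ is monotone on the complete lattice $(2^{Q},\subseteq)$, so by Knaster--Tarski the greatest fixed point exists and equals the union $\bigcup\{X\subseteq Q \mid X\subseteq f(X)\}$ of all its post-fixed points. Consequently $GFP(f)\neq\emptyset$ holds iff there is a \emph{non-empty} set $X$ with $X\subseteq Post^{+}(X)\cap F\cap Post^{\star}(I)$, i.e.\ a non-empty set of reachable accepting states, each of which is reachable in at least one step from $X$ itself.

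For the implication $GFP(f)\neq\emptyset \Rightarrow L(\Aa)\neq\emptyset$, I would start from such a non-empty post-fixed point $X$ and extract an accepting lasso. Picking any $q_0\in X$, the inclusion $X\subseteq Post^{+}(X)$ provides some $q_1\in X$ with $q_1\rightarrow^{+}q_0$; iterating produces an infinite backward chain $\cdots\rightarrow^{+}q_2\rightarrow^{+}q_1\rightarrow^{+}q_0$ living inside the finite set $X$, so some state $q$ repeats and therefore lies on a cycle of length at least $1$ whose anchor states all belong to $X\subseteq F$. Since moreover $q\in Post^{\star}(I)$, it is reachable from an initial state; concatenating a finite access path with infinitely many traversals of this cycle yields a run visiting $F$ infinitely often, hence $L(\Aa)\neq\emptyset$.

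For the converse $L(\Aa)\neq\emptyset \Rightarrow GFP(f)\neq\emptyset$, I would invoke the standard fact that a non-empty B\"uchi language recognised by a \emph{finite} automaton admits an accepting lasso, namely a finite path from $I$ to an accepting state $q\in F$ followed by a cycle $q\rightarrow^{+}q$. The singleton $\{q\}$ is then a non-empty post-fixed point of $f$: indeed $q\in F$, $q\in Post^{\star}(I)$, and $q\in Post^{+}(\{q\})$ by virtue of the cycle, so $GFP(f)\supseteq\{q\}\neq\emptyset$. Taking the contrapositive of both implications gives exactly $L(\Aa)=\emptyset$ iff $GFP(f)=\emptyset$, as required.

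The point that needs the most care is the distinction between $Post^{+}$ and $Post^{\star}$: the ``at least one step'' built into $Post^{+}$ is precisely what forces the backward chain of the first direction to close up into a genuine cycle rather than collapsing to a stationary state, thereby guaranteeing that the resulting run \emph{visits} an accepting state infinitely often instead of merely reaching one. By contrast, the monotonicity of $f$ and the fact that intersecting with $F\cap Post^{\star}(I)$ at every iteration automatically confines every fixed point to reachable accepting states are routine and require no delicate argument.
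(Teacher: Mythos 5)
Your proof is correct. Note first that the paper does not prove this statement at all: it is imported verbatim from Maquet's thesis (\cite{M11}, Theorem~2.3.20) and used as a black box to establish the correctness of Algorithm~1, so there is no in-paper proof to compare against. Your argument is the standard and sound one: Knaster--Tarski gives $GFP(f)=\bigcup\{X \mid X\subseteq f(X)\}$ for the monotone $f=\lambda X.\,Post^{+}(X)\cap F\cap Post^{\star}(I)$ on the finite lattice $2^{Q}$, so non-emptiness of the fixed point is equivalent to the existence of a non-empty post-fixed point; from such an $X$ the backward chain $\cdots\rightarrow^{+}q_{1}\rightarrow^{+}q_{0}$ inside the finite set $X$ must close into a cycle through a reachable accepting state, yielding an accepting lasso, and conversely any accepting run yields a state $q\in F$ visited infinitely often, for which $\{q\}$ is a post-fixed point. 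Both directions are complete, and you correctly isolate the one point where care is needed: the $Post^{+}$ (as opposed to $Post^{\star}$) in the functional is exactly what forces a genuine cycle rather than a trivially stationary post-fixed point, which is what makes the B\"uchi condition (infinitely many visits to $F$) rather than mere reachability of $F$ come out of the characterisation.
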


Remark that, noting $E_0 = Post^\star(H_0) \cap \mathcal{F}$ and $E_i = Post^{+}(E_{i-1}) \cap \mathcal{F}$, at the end of the i-st passage in the while loop of Algorithm \ref{AlgoBase}, $C = E_i$. The following lemma proves the correctness of Algorithm \ref{AlgoBase}.

\begin{lemma}
\label{Lem1}
Let $i \geq 1$. $E_i = E_{i-1}$ iff $E_{i-1} = GFP(\lambda X. Post^{+}(X) \cap \mathcal{F} \cap Post^\star(H_0) )$.
\end{lemma}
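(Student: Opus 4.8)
The plan is to recognise that the recurrence computing the $E_i$ is precisely the fixed-point iteration of the monotone operator $\Psi(X) = Post^+(X) \cap \mathcal{F} \cap Post^\star(H_0)$ whose greatest fixed point $GFP(\Psi)$ appears in the cited Theorem. Writing $\Phi(X) = Post^+(X) \cap \mathcal{F}$, the algorithm's recurrence reads $E_i = \Phi(E_{i-1})$ for $i \geq 1$, with $E_0 = Post^\star(H_0) \cap \mathcal{F}$. The apparent discrepancy between $\Phi$ (what the algorithm iterates) and $\Psi$ (whose GFP we must hit) is the missing conjunct $\cap\, Post^\star(H_0)$, and eliminating this discrepancy is the first thing I would do.

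First I would prove, by induction on $i$, that $E_i \subseteq Post^\star(H_0)$ for every $i \geq 0$. The base case is immediate since $E_0 \subseteq Post^\star(H_0)$; the inductive step uses the elementary closure property $Post^+(Post^\star(H_0)) \subseteq Post^\star(H_0)$ (anything reachable in at least one step from a word reachable from $H_0$ is itself reachable from $H_0$) together with monotonicity of $Post^+$, giving $Post^+(E_{i-1}) \subseteq Post^+(Post^\star(H_0)) \subseteq Post^\star(H_0)$. Consequently the intersection with $Post^\star(H_0)$ is redundant along the iteration, so $E_i = \Phi(E_{i-1}) = \Psi(E_{i-1})$ for all $i \geq 1$. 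I would also record that $\Psi$ is monotone, since $Post^+$ is.

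The \emph{if} direction is then immediate: if $E_{i-1} = GFP(\Psi)$, then, since the greatest fixed point satisfies $\Psi(GFP(\Psi)) = GFP(\Psi)$, we get $E_i = \Psi(E_{i-1}) = \Psi(GFP(\Psi)) = GFP(\Psi) = E_{i-1}$. For the \emph{only if} direction, assume $E_i = E_{i-1}$. Then $E_{i-1} = \Psi(E_{i-1})$ is a fixed point of $\Psi$, whence $E_{i-1} \subseteq GFP(\Psi)$ by maximality. The reverse inclusion is obtained by showing $GFP(\Psi) \subseteq E_j$ for all $j \geq 0$: the base case follows from $\Psi(X) \subseteq \mathcal{F} \cap Post^\star(H_0) = E_0$ applied to $X = GFP(\Psi)$, and the inductive step from $GFP(\Psi) = Post^+(GFP(\Psi)) \cap \mathcal{F} \cap Post^\star(H_0) \subseteq Post^+(E_{j-1}) \cap \mathcal{F} = E_j$, using monotonicity of $Post^+$ and the induction hypothesis. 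Taking $j = i-1$ yields $GFP(\Psi) \subseteq E_{i-1}$, hence equality.

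I expect the main obstacle to be the careful bookkeeping of Step 1: one must verify that the operator actually iterated by the algorithm coincides with the functional $\Psi$ whose GFP is characterised by the Theorem. This rests entirely on the invariant $E_i \subseteq Post^\star(H_0)$, which renders the $Post^\star(H_0)$ conjunct of $\Psi$ invisible along the run; once this identification is secured, the remainder is the standard argument that a decreasing iteration of a monotone operator on the finite lattice $2^{\mathcal{H}}$ converges to its greatest fixed point, here adapted so that the iteration is seeded at $E_0$ rather than at the top of the lattice.
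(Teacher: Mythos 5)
Your proof is correct and takes essentially the same route as the paper's: both arguments make the $\cap\, Post^\star(H_0)$ conjunct vanish along the iteration (you via the invariant $E_j \subseteq Post^\star(H_0)$, the paper via the decreasing chain $E_j \subseteq E_{j-1}$, which rests on exactly the same two facts, monotonicity of $Post^+$ and $Post^{+}(Post^\star(H_0)) \subseteq Post^\star(H_0)$), then verify that $E_{i-1}$ is a fixed point and prove $GFP \subseteq E_j$ for all $j$ by the same monotonicity induction. The only cosmetic difference is that you invoke the Knaster--Tarski universal property to get $E_{i-1} \subseteq GFP$, whereas the paper argues directly that every fixed point is contained in each $E_k$.
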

\begin{proof}
$(\Leftarrow)$  Suppose that $E_{i-1} = GFP(\lambda X. Post^{+}(X) \cap \mathcal{F} \cap Post^\star(H_0) )$.  In particular, $E_{i-1} = Post^{+}(E_{i-1}) \cap \mathcal{F} \cap Post^\star(H_0)$, i.e.:
\begin{center}
$E_{i-1} = E_{i} \cap Post^\star(H_0)$ $\quad (\ast)$.
\end{center}  
We will show that, $\forall j \geq 1$, $E_j \subseteq E_{j-1}$: in particular it means that $E_i \subseteq E_0 = Post^\star(H_0) \cap \mathcal{F} \subseteq Post^\star(H_0)$.  It enables to conclude from $(\ast)$ that $E_{i-1} = E_{i}$.\\
\underline{Basis:} We prove that $E_1 \subseteq E_0$.  We know that $Post^\star(H_0) \cap \mathcal{F} \subseteq Post^\star(H_0)$.  As function $Post^{+}$ is monotonic: $Post^{+}(Post^\star(H_0) \cap \mathcal{F}) \subseteq Post^{+}(Post^\star(H_0)) \subseteq Post^\star(H_0)$.  So, $E_1 = Post^{+} \left( Post^\star(H_0) \cap \mathcal{F} \right) \cap \mathcal{F} \subseteq Post^\star(H_0) \cap \mathcal{F} = E_0$.\\
\underline{Induction:} Suppose that $\forall 0 < k < i$, $E_k \subseteq E_{k-1}$. We must prove that $E_i \subseteq E_{i-1}$.  By induction hypothesis, $E_{i-1} \subseteq E_{i-2}$, and as function $Post^{+}$ is monotonic: $Post^{+}(E_{i-1}) \subseteq Post^{+}(E_{i-2})$.  Hence, $E_i = Post^{+}(E_{i-1}) \cap \mathcal{F} \subseteq Post^{+}(E_{i-2}) \cap \mathcal{F} = E_{i-1}$.\\
$(\Rightarrow)$ Suppose that $E_i = E_{i-1}$.  Let us first prove that $E_{i-1}$ is a fixed point of $\lambda(X) = Post^{+}(X) \cap \mathcal{F} \cap Post^\star(H_0)$.  $\lambda(E_{i-1}) = Post^{+}(E_{i-1}) \cap \mathcal{F} \cap Post^\star(H_0) = E_i \cap Post^\star(H_0) = E_i = E_{i-1}$.\\
Now, let us prove that $E_{i-1}$ is a greatest fixed point of $\lambda$.  Let $Y \supseteq E_{i-1}$ such that $Y = \lambda(Y)$.  We must prove that $Y = E_{i-1}$.  As we already know that $E_{i-1} \subseteq Y$, it stays to prove that $Y \subseteq E_{i-1}$.  As $Y = \lambda(Y)$, we have $Y = Post^{+}(Y) \cap \mathcal{F} \cap Post^\star(H_0) \subseteq \mathcal{F} \cap Post^\star(H_0) = E_0$.  As $Y \subseteq E_0$ and function $Post^{+}$ is monotonic, $Post^{+}(Y) \subseteq Post^{+}(E_0)$.  So, $Y = Post^{+}(Y) \cap \mathcal{F} \cap Post^\star(H_0) \subseteq Post^{+}(E_0) \cap \mathcal{F} \cap Post^\star(H_0) = E_1 \cap Post^\star(H_0) = E_1$.  Applying inductively the same reasoning, we obtain that, $\forall k \geq 0$, $Y \subseteq E_k$.  Hence, $Y \subseteq E_{i-1}$.\qed
\end{proof}

Here is a theorem giving a rough size of the number of words that Algorithm \ref{AlgoBase} must explore in the worst case.

\begin{theorem}
\label{thmNbConfig}
For every MITL formula $\varphi$, the associated \ATA $\Aa_{\neg \varphi} = (\Sigma, L, \ell_{0}, F, \delta)$, and all timed automaton $\Bb =(\Sigma, B, b_0, X, F^\Bb,
\delta^\Bb)$ with $n$ clocks, $\mathcal{H}$ (constructed thanks to $\Ss_{\Bb,\neg\varphi}$) contains $O\left( 2^{m} \right)$ elements, where $m = \left( \vert B \vert + \vert L \vert \right) . \left( 2.c_{\max}+2 \right) 
\break . 2 . \max \left(\overset{n}{\underset{i=1}{\max}}(\frac{\max_{i}}{2}), n \right) . \left( M(\neg \varphi) + n\right)$.
\end{theorem}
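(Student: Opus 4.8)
The plan is to view every element of $\mathcal{H}$ as a word of bounded length over a finite alphabet, and then simply to count such words. By Proposition~\ref{equivH} we have $\mathcal{H}=\{H(s)\mid s\in S\}$, so it suffices to bound the number of distinct words $H(s)$. Recall from the definition of $H$ that each $H(s)=H_1H_2\cdots H_{m'}$ is a sequence of letters, where every letter $H_i$ is a finite set of $4$-tuples drawn from $\Lambda=(B\cup L)\times REG\times\{\top,\bot\}\times\{1,\dots,\max(\max_{\ell\in L}(\max_\ell/2),n)\}$. Thus $H(s)$ is a word over the finite alphabet $2^{\Lambda}$, and I would first estimate $|\Lambda|$, then bound the length $m'$, and finally count.

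For the alphabet, I would multiply the cardinalities of the four components of $\Lambda$: there are at most $|B|+|L|$ locations, exactly $2\cmax+2$ regions by the definition of $REG$, two markers, and a bookkeeping index whose range has size $\max(\max_{\ell\in L}(\max_\ell/2),n)$ --- one index per interval of $\Aa_{\neg\varphi}$ (the two endpoints of an interval share a single index, whence the factor $\max_\ell/2$) and one index per clock of $\Bb$. This gives exactly $|\Lambda|=(|B|+|L|)\cdot(2\cmax+2)\cdot 2\cdot\max(\max_{\ell\in L}(\max_\ell/2),n)$, so that $m=|\Lambda|\cdot(M(\neg\varphi)+n)$.

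For the length, I would observe that $m'$ equals the number of \emph{distinct fractional parts} occurring among the real values abstracted by $H(s)$, i.e. the endpoints of the intervals of the $\Aa_{\neg\varphi}$-configuration contained in $s$ together with the $n$ clock values of $\Bb$. Since $\Ss_{\Bb,\neg\varphi}$ is built using $\appfunc{\neg\varphi}$, Theorem~\ref{ThmBorne} guarantees that every reachable configuration of $\Aa_{\neg\varphi}$ has at most $M(\neg\varphi)$ clock copies, and each clock copy corresponds to exactly one tracked endpoint; hence there are at most $M(\neg\varphi)$ endpoint values, plus $n$ values for $\Bb$. Therefore $m'\le M(\neg\varphi)+n=:N$.

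Finally I would conclude by a direct count: the number of words of length at most $N$ over an alphabet of size $|2^{\Lambda}|=2^{|\Lambda|}$ is $\sum_{j=0}^{N}2^{|\Lambda|\,j}$. As the common ratio $2^{|\Lambda|}$ is at least $2$, this geometric sum is at most twice its largest term $2^{|\Lambda|\,N}=2^{m}$, so $|\mathcal{H}|=O(2^{m})$. The only genuinely delicate points, to which I would pay particular attention, are matching $|\Lambda|$ exactly to the factors of $m$ (in particular the $\max_\ell/2$ coming from the shared index of the two endpoints of an interval) and justifying that the word length is controlled by the number of clock copies $M(\neg\varphi)+n$ rather than by the number of intervals; the remaining counting is routine.
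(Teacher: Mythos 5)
Your proof is correct and takes essentially the same route as the paper: view each element of $\mathcal{H}$ as a word over the alphabet of sets of tuples from $\Lambda$, note that $|\Lambda|$ equals the product of the four factors of $m$ divided by $M(\neg\varphi)+n$, bound the word length by $M(\neg\varphi)+n$, and bound the resulting geometric sum $\sum_{j=0}^{M(\neg\varphi)+n}2^{|\Lambda|\,j}$ by a constant times its largest term $2^{m}$. If anything, you are slightly more careful than the paper, which loosely says ``there are $2^{m'}$ elements in $\Lambda$'' (when it is the letter alphabet $2^{\Lambda}$ that has $2^{m'}$ elements) and asserts the length bound without your justification that the number of letters is the number of distinct fractional parts, itself bounded by the number of tracked values $M(\neg\varphi)+n$.
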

\begin{proof}
Let us note $m' = \left( \vert B \vert + \vert L \vert \right) . \left( 2.c_{\max}+2 \right) . 2 . \max \left(\overset{n}{\underset{i=1}{\max}}(\frac{\max_{i}}{2}), n \right)$. There is $2^{m'}$ elements in $\Lambda = (B \cup L) \times REG \times \lbrace \top, \bot\rbrace \times \lbrace 1, 2, \dots, \max(\overset{n}{\underset{i=1}{\max}}(\frac{\max_{i}}{2}), n) \rbrace$.  $\mathcal{H}$ is a set of words of $\Lambda^\star$ having at most $M(\neg \varphi) + n$ letters.  There are:
\begin{itemize}
\item 1 word of 0 letters on $\Lambda^\star$ ;
\item $2^{m'}$ word of 1 letters on $\Lambda^\star$ ;
\item $2^{2.m'}$ word of 2 letters on $\Lambda^\star$ ;
\item $2^{3.m'}$ word of 3 letters on $\Lambda^\star$ ;
\item ... ;
\item $2^{ \left(M(\neg \varphi) + n \right) . m'}$ word of $M(\neg \varphi) + n$ letters on $\Lambda^\star$ ;
\end{itemize}
Globally, there are $\Sigma_{j = 0}^{M(\neg \varphi) + n} \; 2^{j.m'} = O\left(2^{ \left(M(\neg \varphi) + n \right) . m'}\right) = O\left( 2^{m} \right)$ words in $\mathcal{H}$.
\end{proof}

\paragraph{Zone-based algorithms} In the case of TAs, zones have been
advocated as a data structure which is more efficient in practice than
regions \cite{AD94}. Let us close this section by showing how zones
for \ATA \cite{ADOQW08} can be adapted to represent set of states of
$\Ss_{\Bb, \neg\varphi}$. Intuitively, a zone is a constraint on the
values of the clock copies, with additional information ($loc_\Aa$ and
$loc_\Bb$) to associate clock copies of $\Aa_{\neg\varphi}$ and clocks
of $\Bb$ respectively to locations and Miyano-Hayashi markers.

Let us note $x$ the unique clock of $\Aa_{\neg \varphi}$ and $x^\Bb_1, x^\Bb_2, \dots, x^\Bb_n$ the clocks of $\Bb$. We will note $Copies(x)$ the set of $M(\neg \varphi)$ copies of $x$ denoted $x_1, x_2, \dots, x_{M(\neg \varphi)/2}, y_1, \break
y_2, \dots, y_{M(\neg \varphi)/2}$.  Intuitively, each pair of clock copies $(x_i, y_i)$ will represent an interval.  We also note $Copies_{begin}(x) = \lbrace x_1, x_2, \dots, x_{M(\neg \varphi)/2} \rbrace$, $Copies_{end}(x) = \lbrace y_1, y_2, \dots, y_{M(\neg \varphi)/2} \rbrace$, $Copies^m(x) = \lbrace x_1, x_2, \dots, x_{m}, y_1, y_2, \dots, y_{m} \rbrace$,
$\break Copies^m_{begin}(x) = \lbrace x_1, x_2, \dots, x_{m} \rbrace$ for $1 \leq m \leq M(\neg \varphi)/2$, and $Copies^0(x) = Copies^0_{begin}(x) = \emptyset$.\\
Our definition of zone uses a supplementary clock $x_0$ whose value is always 0.
Zones are also defined thanks to extended clock constraints on a set of clocks $C$, which are of the form $c \bowtie k$ and $c_1 - c_2 \bowtie k$ for $c, c_1, c_2 \in C$, $k \in \N$ and $\bowtie\in \lbrace <, \leq, >, \geq \rbrace$.  When $k = 0$, we will shorten $c_1 - c_2 \bowtie k$ by $c_1 \bowtie c_2$, and we will also use $c_1 = c_2$ as shorthand for $c_1 - c_2 \geq 0 \wedge c_1 - c_2 \leq 0$.

\begin{definition}
A zone $\Zz_m$ is a 3-tuple $(loc_\Aa, loc_\Bb, Z)$ where (1) $loc_\Aa : 
\break Copies^m_{begin}(x) \rightarrow L \times \{ \top, \bot \}$, (2) $loc_\Bb$ is a pair composed of a location of $B$ and a marker in $\{ \top, \bot \}$ and (3) $Z$ is a set of extended clock constraints on $Copies^m(x) \cup \lbrace 
x_{0}, 
x^\Bb_1, x^\Bb_2, \dots, x^\Bb_n \rbrace$,
  interpreted as a conjunction (it is a `classical zone' on this set of clocks
  \cite{AD94}).
\end{definition}
A zone $\Zz_m = (loc_\Aa, loc_\Bb, Z)$, with $loc_\Aa(t) = (loc^t, mark^t)$ and $loc_\Bb = (loc, mark)$, represents the set of states of $\Ss_{\Bb,\neg\varphi}$ of type $\lbrace (loc, vx^\Bb_1, vx^\Bb_2, \dots, vx^\Bb_n, mark), 
\break (loc^{x_1}, [vx_1, vy_1], mark^{x_1}), \dots, (loc^{x_m}, [vx_m, vy_m], mark^{x_m}) \rbrace$, where $vx^\Bb_1, vx^\Bb_2, 
\break \dots, vx^\Bb_n, vx_1, vy_1, \dots, vx_m, vy_m$ are values respectively satisfying the extended clock constraints on $x^\Bb_1, x^\Bb_2, \dots, x^\Bb_n, x_1, y_1, \dots, x_m, y_m$ present in $Z$.  By abuse of terminology, we will sometimes note "$s \in \Zz_m$" to refer to a state of $\Ss_{\Bb,\neg\varphi}$ represented by $\Zz_m$.

\begin{definition}
The initial zone is $\Zz^{init}_1 = (loc_\Aa, loc_\Bb, Z)$ with $Z = \{x^\Bb_1 = 0, \dots x^\Bb_n = 0, x_1 = 0, y_1 = 0 \}$, $loc_\Aa(x_1) = (\ell_0, \bot)$ and $loc_\Bb = (b_0, mark)$ where $mark = \top$ iff $b_0$ is accepting.\\
A zone $Z$ is accepting iff $\forall 1 \leq i \leq m$, $\exists \ell_i \in L$ s.t. $loc_\Aa(x_i) = (\ell_i,\top)$ and $\exists b \in B$ s.t. $loc_\Bb = (b,\top)$.
\end{definition}

\begin{definition}
Let $\Zz_m$ be a zone. $Post_T(\Zz_m)$ denotes the set of configurations that are timed successors of a configuration in $\Zz_m$.
\end{definition}
We can easily compute $Post_T(\Zz_m)$ from $\Zz_m$ deleting all clock constraints of the form $z - x_{0} < k$ or $z - x_{0} \leq k$, for $z \in Copies^m(x) \cup \lbrace x^\Bb_1, x^\Bb_2, \dots, x^\Bb_n \rbrace$, of $\Zz_m$. 

\begin{definition}
Let $\Zz_m$ be a zone. $Post_D(\Zz_m)$ denotes the set of configurations that are discrete successors of a configuration in $\Zz_m$.
\end{definition}
We compute $Post_D(\Zz_m)$, where $\Zz_m = (loc_\Aa, loc_\Bb, Z)$, as follows:
\begin{itemize}
\item for each label $\sigma \in \Sigma$,
\item for each possible transition $t_{\Bb}$ labelled by $\sigma$ of $\Bb$, starting from location $loc_\Bb$,
\item for each possible combination of transitions of $\Aa_{\neg \varphi}$ $(t_{1}, \dots , t_{m})$, all labelled by $\sigma$, such that $\forall 1 \leq i \leq m$, $t_i$ starts from $loc_\Aa(x_i)$\footnote{they associate to each interval $(x_i, y_i)$ a transition $t_i$ to take},
\end{itemize}
we are looking for possible successors as follow.\\
The possible successor zones are found following the arc $t_{\Bb}$ from the location of $loc_\Bb$, $t_{1}$ from the location of $loc_\Aa(x_1)$, ..., and $t_{m}$ from the location of $loc_\Aa(x_{m})$.  First, to take $t_{\Bb}$ implies to satisfy the clock constraint it carries: these clock constraints must be added to those of $Z$ on $\lbrace x^\Bb_1, x^\Bb_2, \dots, x^\Bb_n \rbrace$ (it can be easily done on $Z$ using a well-known algorithm on classical zones \cite{BY03}\todo{réf ok ?}).  Location $loc_\Bb$ must also be modified in the location $t_{\Bb}$ goes to and, potentially, certain clocks among $\lbrace x^\Bb_1, x^\Bb_2, \dots, x^\Bb_n \rbrace$ need to be reset (once again, it can be easily done using a well-known algorithm on classical zones \cite{BY03}\todo{réf ok ?}). Second, to take transitions $t_{i}$ (for $1 \leq i \leq m$) implies that $(x_i, y_i)$ must satisfy the clock constraints it carries: these clock constraints must be added to those of $\Zz_m$ on $Copies(x)$. Moreover, to take transitions $t_{1}, \dots , t_{m}$ can create new clock copies of value 0 in certain locations or/and letting clock copies with de same zone constraints in the same locations.  The new copies with value 0 that have just been created in certain locations can either:
\begin{itemize}
\item be grouped with the previous smallest interval associated to this location

\begin{itemize}
\item[\oran{$\leadsto$}] let us call $\ell$ this location: it corresponds to reset the clock $x_i$ such that, in $\Zz_m$, $loc_\Aa(x_i) = (\ell, mark)$ for a certain $mark \in \{0,1\}$ and $t_i$ loops on $\ell$ for a clock copy $x_i$ with a minimum value.\\
(We can only do this if there were such an interval associated to this location !)
\end{itemize}

\item create a new interval [0,0] associated to this location in the zone.

\begin{itemize}
\item[\oran{$\leadsto$}] let us call $\ell$ this location: it corresponds to use two new unused clock copies of $x$, $x_{m+1}$ and $y_{m+1}$, and extend function $loc$ in way $loc_\Aa(x_{m+1}) = \ell$.  Clocks $x_{m+1}$ and $y_{m+1}$ must be reset.\\
(We can only do this if the new number of intervals associated to the considered location does not exceed the maximal number of intervals we are allowed to associate to this location.)
\end{itemize}

\end{itemize}
Moreover, the markers of $loc_\Aa$ and $loc_\Bb$ must be kept updated.\\

Formally, let us note:
\begin{itemize}
\item $\forall \sigma \in \Sigma : E(c,\sigma) = \lbrace \text{arcs } t \text{ labeled by } \sigma \text{ and s.t. } Start(t) \text{ is the location of } 
\break loc_\Aa(c) \rbrace$,
\item $\forall \sigma \in \Sigma : E(\Bb,\sigma) = \lbrace \text{arcs starting from the location of } loc_\Bb \text{ and labeled } \break
\text{by } \sigma \rbrace$,
\item $\forall \sigma \in \Sigma : Z \odot \sigma = \lbrace (t_{\Bb}, t_{1}, \dots , t_{m}) \: \vert \: t_{\Bb} \in E(\Bb, \sigma) \text{ and } \forall 1 \leq i \leq m : t_{i} \in E(x_{i},\sigma) \rbrace$,
\item for every arc t: $Constr(t) = \lbrace c \: \vert \: c \text{ constraint present on the arc } t \rbrace$,
\item for $t_\Bb = (b_{start}, \sigma, g, r, b_{arrival})$, $g_{t_{\Bb}, t_1, \dots, t_m} = g \quad \wedge \underset{c_i \in Constr(t_{i})}{\underset{1 \leq i \leq m}{\bigwedge}} \left( \left.c_i\right|_{x = x_i} \wedge \left.c_i\right|_{x = y_i} \right)$,
\item for $t_\Bb = (b_{start}, \sigma, g, r, b_{arrival})$, $LM(t_\Bb, t_1, \dots, t_m) = \{ (loc, mark) \; \vert \;
\break ( \exists 1 \leq i \leq m : loc \in Dest(t_i), loc \text{ is not equal to the location of } loc_\Aa(x_i) \text{ and } 
\break mark \text{ is the marker of } loc_\Aa(x_i) ) \text{ or } ( loc = b_{arrival} \text{ and } mark \text{ is the marker}
\break \text{of } loc_\Bb ) \}$,
\item $Loop(t_1, \dots, t_m) = \{ x_i \; \vert \; \text{the location of } loc_\Aa(x_i) \text{ is in } Dest(t_i) \} \cup \{ y_i \; \vert \; \text{the}
\break \text{location of } loc_\Aa(x_i) \text{ is in } Dest(t_i) \}$,
\item $minLoop(t_1, \dots, t_m) = \{ x_i \in Loop(t_1, \dots, t_m) \cap Copies_{begin}(x) \; \vert \; \forall x'_i \in Loop(t_1, \dots, t_m), x'_i \geq x_i \text{ results from the constraints of Z} \}$,
\item $Reset^\ell(t_1, \dots, t_m) = \{ x^\ell, y^\ell \}$ for a certain $x^\ell \in Copies_{begin}(x) \setminus Loop(t_1, \dots, t_m)$ and $y^\ell \in Copies_{end}(x) \setminus Loop(t_1, \dots, t_m) \}$ if these sets are not empty and $\exists mark \in \{0,1\} \text{ s.t. } (\ell, mark) \in LM(t_\Bb,t_1, \dots, t_m)$ ; otherwise, 
$\break Reset^\ell(t_1, \dots, t_m) = \emptyset$,
\item $R^\ell = \{ Reset^\ell(t_1, \dots, t_m) \} \cup \{ \{x_i\} \; \vert \; Reset^\ell(t_1, \dots, t_m) \neq \emptyset, 
\break x_i \in minLoop(t_1, \dots, t_m) \text{ and } loc_\Aa(x_i) = \ell \}$.
\end{itemize}

$\Zz'_{m'} = (loc'_\Aa, loc'_\Bb, Z') \in Post_D(\Zz_m) \;$ iff (1) $\Zz_m$ is not final, $g_{t_{\Bb}, t_1, \dots, t_m} \cap Z$ is satisfiable and $\; \exists \sigma \in \Sigma, (t_\Bb,t_1, \dots, t_m) \in Z \odot \sigma$ and $(r_{\ell_1}, \dots, r_{\ell_p})$, with $\forall 1 \leq i \leq p, r_{\ell_i} \in R_{\ell_i}$, such that :
\begin{itemize}
\item $Z' = [\left( r \cup \bigcup_{j = 1}^p r_{\ell_j} \right) := 0] \left( g_{t_{\Bb}, t_1, \dots, t_m} \cap Z \right)$,
\item $\forall x_i \in Loop(t_1, \dots, t_m)$, $loc'_\Aa(x_i)$ is composed of the location of $loc_\Aa(x_i)$ and of:
\begin{itemize}
\item[$\bullet$] marker "1" if the location of $loc_\Aa(x_i)$ is in $F$,
\item[$\bullet$] the marker of $loc_\Aa(x_i)$ otherwise ;
\end{itemize}
$\forall x_\ell \in Reset^\ell(t_1, \dots, t_m) \cap Copies_{begin}(x)$, $loc'_\Aa(x_\ell)$ is composed of the location $\ell$ and of marker:
\begin{itemize}
\item[$\bullet$] "1" if $\ell \in F$ or $(\ell,0) \notin LM(t_\Bb, t_1, \dots, t_m)$,
\item[$\bullet$] "0" otherwise ;
\end{itemize}
\item $loc'_\Bb$ is composed of $Dest(t_\Bb)$ and of marker:
\begin{itemize}
\item[$\bullet$] "1" if $Dest(t_\Bb) \in F^\Bb$ or $(Dest(t_\Bb),0) \notin LM(t_\Bb, t_1, \dots, t_m)$,
\item[$\bullet$] "0" otherwise ;
\end{itemize}
\end{itemize}
or (2) $\Zz_m$ is final and, defining $loc^\star_\Aa$ such that $\forall 1 \leq i \leq m$, if $loc_\Aa(x_i) = (\ell, 1)$, then $loc^\star_\Aa (x_i) = (\ell, 0)$, $\Zz'_{m'}$ satisfies the conditions of case (1) in which $loc^\star_\Aa$ plays the role of $loc_\Aa$.

\begin{proposition}
Let $\Zz_m$ be a zone.  $Post_D(\Zz_m) = \{ s' \; \vert \; \exists s \in \Zz_m \text{ s.t. } s \rightarrow s' \text{ in } \Ss_{\Bb,\neg\varphi} \}$.
\end{proposition}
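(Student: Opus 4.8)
The plan is to prove the two inclusions separately, after observing that a single discrete step $s \rightarrow s'$ in $\Ss_{\Bb,\neg\varphi}$ decomposes into exactly the choices over which the definition of $Post_D$ ranges. Indeed, by the definition of $\rightarrow$ (rules (i)--(ii)) together with that of $\longrightarrow_{\appfunc{\neg\varphi}}$, a step from a state $s=\{(b,v,m_\Bb)\}\cup\{(\ell^{x_i},[vx_i,vy_i],mark^{x_i})\}_{1\le i\le m}$ amounts to: fixing a letter $\sigma$; choosing one arc $t_\Bb\in E(\Bb,\sigma)$; choosing, for each interval $i$, one arc $t_i\in E(x_i,\sigma)$ (this selects a minimal model of $\delta(\ell^{x_i},\sigma)$ wrt $[vx_i,vy_i]$, i.e. an element of $\succ$); applying $\appfunc{\neg\varphi}=F^M$, which for each location decides whether to group a freshly reset $[0,0]$ interval with the previous smallest one via $\mername$; and finally updating the Miyano--Hayashi markers. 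Since $Post_D(\Zz_m)$ is built by ranging over precisely $\sigma$, $t_\Bb$, the tuple $(t_1,\dots,t_m)$ and the reset choices $(r_{\ell_1},\dots,r_{\ell_p})$ (the latter encoding the merge-or-create alternative), I identify $Post_D(\Zz_m)$ with $\bigcup_{\Zz'_{m'}\in Post_D(\Zz_m)}\{s'\mid s'\in\Zz'_{m'}\}$ and establish a state-level correspondence for each fixed choice, then take the union over all choices.

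For the inclusion $\supseteq$ (completeness), I would start from $s\in\Zz_m$ and $s'$ with $s\rightarrow s'$, read off the arcs $t_\Bb,t_1,\dots,t_m$ and the merge decisions witnessing this step, and check that they form an admissible choice for $Post_D$. The crucial point is guard satisfaction: the arc $t_i$ is firable from $(\ell^{x_i},[vx_i,vy_i])$ iff the \emph{whole} interval satisfies $Constr(t_i)$, and since each clock constraint defines a convex subset of $\R^+$, this holds iff \emph{both} endpoints $vx_i$ and $vy_i$ satisfy it, i.e. iff $v\models g_{t_\Bb,t_1,\dots,t_m}$ (the same convexity argument already exploited in the construction of $\Bb_\varphi$). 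Hence $g_{t_\Bb,t_1,\dots,t_m}\cap Z$ is satisfiable, and $Z'=[(r\cup\bigcup_j r_{\ell_j}):=0](g_{t_\Bb,t_1,\dots,t_m}\cap Z)$ contains the valuation of $s'$. The grouping performed by $F^M$ when a new $[0,0]$ is absorbed into the smallest interval of a location $\ell$ corresponds exactly to resetting the begin-clock of the minimal looping copy ($minLoop$), whereas creation of a genuinely new interval corresponds to allocating a fresh pair; I check that these match the two members of $R^\ell$. Finally I verify that the marker assignment of $Post_D$ reproduces rules (i)(b)--(i)(d) (and rule (ii) via the $loc^\star_\Aa$ reset when $\Zz_m$ is accepting), so that $s'\in\Zz'_{m'}$.

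For the inclusion $\subseteq$ (soundness), I proceed symmetrically: fix $\Zz'_{m'}\in Post_D(\Zz_m)$ produced by a choice $(\sigma,t_\Bb,t_1,\dots,t_m,(r_{\ell_j})_j)$ and an arbitrary $s'\in\Zz'_{m'}$. Since $g_{t_\Bb,t_1,\dots,t_m}\cap Z$ is satisfiable and $Z'$ is, by the standard algorithms on classical zones, exactly the reset-image of $g_{t_\Bb,t_1,\dots,t_m}\cap Z$, the valuation of $s'$ has a preimage $v$ in $g_{t_\Bb,t_1,\dots,t_m}\cap Z$; this defines a source state $s\in\Zz_m$. By the same convexity argument the arcs are firable from $s$, the selected minimal models together with the recorded merge decisions yield a configuration of $\Aa_{\neg\varphi}$ obtained through $\appfunc{\neg\varphi}$, $\Bb$ moves along $t_\Bb$, and the marker bookkeeping agrees by construction, so $s\rightarrow s'$ and $s'$ lies in the right-hand set.

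The main obstacle is the faithful matching between the interval-level approximation $\appfunc{\neg\varphi}=F^M$ and the zone-level manipulation of clock copies. Concretely, I must make fully explicit the identification of each pair $(x_i,y_i)$ with the interval it encodes, and then argue case by case that (a) an arc looping on a location preserves the corresponding pair ($Loop$), (b) a $\mername$-grouping of a fresh $[0,0]$ with the previous smallest interval is realised by resetting the begin-clock of $minLoop$, and (c) a non-grouped fresh interval is realised by a fresh pair from $Reset^\ell$, all while keeping the $\dest$-based marker rule (i)(c) synchronised with the zone markers driven by $LM(t_\Bb,t_1,\dots,t_m)$. Ensuring that the unique marker forced by the Miyano--Hayashi rules coincides with the one assigned by the $Post_D$ clauses, across the merge/no-merge dichotomy, is the delicate point; the remaining bookkeeping on the $\Bb$-clocks is routine classical-zone manipulation.
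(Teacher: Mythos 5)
Your proposal is correct and takes essentially the same route as the paper's proof: both directions set up the state-level correspondence for a fixed choice of $(\sigma, t_\Bb, t_1,\dots,t_m, (r_{\ell_j})_j)$, use convexity of clock constraints to reduce interval-guard satisfaction to endpoint satisfaction, match $\mername$-grouping with resetting the begin-clock from $minLoop(t_1,\dots,t_m)$ (and genuinely new $[0,0]$ intervals with fresh pairs from $Reset^\ell(t_1,\dots,t_m)$), and check that the zone-level marker clauses reproduce rules (i)(b)--(d) and (ii). The only cosmetic difference is in the $\subseteq$ direction, where you obtain the source state $s$ by invoking exactness of the zone reset-image of $g_{t_\Bb,t_1,\dots,t_m}\cap Z$, while the paper constructs $s$ explicitly clock-by-clock (its cases 1.--5.); the two arguments are interchangeable.
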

\begin{proof}
$(\subseteq)$ Suppose that $\Zz_m = (loc_\Aa, loc_\Bb, Z)$, where $loc_\Bb = (b, mark_\Bb)$ and $\forall1 \leq k \leq m$, $loc_\Aa(x_k) = (\ell_k, mark_k)$. Let $s' \in Post_D(\Zz)$.  There exists a certain $s' \in \Zz'_{m'}$ for a certain $\Zz'_{m'} \in Post_D$ constructed thanks to $\sigma, t_{\Bb}, t_{1}, \dots , t_{m}, r_{\ell_1}, \dots,
\break r_{\ell_p}$, we suppose that $t_\Bb = (b_{start}, \sigma, g, r, b_{arrival})$.  Let us suppose that $s' = 
\break \lbrace (\ell'_{k'}, I'_{k'}, mark'_{k'})_{k' =1}^{m'} \rbrace \cup \lbrace (b', v', mark'_\Bb) \rbrace$, with ($\forall 1 \leq k' \leq m'$) $I'_{k'} = [v'x_{k'},v'y_{k'}]$.  We construct $s = \lbrace (\ell_{k}, I_{k}, mark_{k})_{k =1}^{m} \rbrace \cup \lbrace (b, v, mark_\Bb) \rbrace$, with ($\forall 1 \leq k \leq m$) $I_{k} = [vx_{k},vy_{k}]$, where:
\begin{itemize}
\item[1.] \underline{Arc of $\Bb$ without reset:} $\forall 1 \leq i \leq n$: if $x^\Bb_i \notin r$, we define $v(x_i^\Bb) = v'(x_i^\Bb)$,
\item[2.] \underline{New complete interval:} $\forall 1 \leq k \leq m$: if there exists $1 \leq j \leq p$ such that $x_k \in r_{\ell_j} \cap Reset^{\ell_j}(t_1, \dots, t_m)$, then $x^{\ell_k}$ and $y^{\ell_k}$ was not used in $\Zz$ and their values must not be defined,
\item[3.] \underline{Loop without merge:} $\forall 1 \leq k \leq m$: if $\forall 1 \leq j \leq p$, $x_k \notin r_{\ell_j}$ but that $loc'_\Aa$ is defined on $x_k$, we define $vx_k = v'x_k$ and $vy_k = v'y_k$,
\item[4.] \underline{Loop with merge:} $\forall 1 \leq k \leq m'$: if there exists $1 \leq j \leq p$ such that $x_k \in r_{\ell_j} \setminus Reset^{\ell_j}(t_1, \dots, t_m)$, then we define $vy_k = v'y_k$,
\item[5.] \underline{Arc going out or arc of $\Bb$ with reset:} the values $vx_k$, $vy_k$ and $v(x^\Bb_i)$ that we still must define are arbitrarily chosen in way they satisfy the extended clock constraints of $g_{t_\Bb, t_1, \dots, t_m} \cap Z$ (which is possible because $g_{t_\Bb, t_1, \dots, t_m} \cap Z$ is satisfiable and we only chose values of clock/clock copies that have the same in $\Zz'_{m'}$, so that they can not prevent $g_{t_\Bb, t_1, \dots, t_m} \cap Z$ from being satisfiable).
\end{itemize}
We suppose that $\Zz_m$ is not accepting, so that $s$ is not accepting, the proof is very similar in the other case.  We must prove that $s \overset{\sigma}{\rightarrow} s'$ in $\Ss_{\Bb,\neg\varphi}$ (see conditions (i)-(a),(b),(c) and (d) of the definition of $\rightarrow$ in $\Ss_{\Bb,\neg\varphi}$).  Condition (d) (resp. (b) and (c)) is (are) trivially verified by definition of the function $loc'_\Bb$ (resp. $loc'_\Aa$) of $\Zz'_{m'}$.  Its stays to prove (a), i.e.: $(b,v) \overset{\sigma}{\rightarrow} (b',v')$ in $\Bb$ and $\{ (\ell_{k}, I_{k})_{k =1}^{m} \} \overset{\sigma}{\rightarrow}_{\appfunc{\neg \varphi}} \{ (\ell'_{k'}, I'_{k'})_{k' =1}^{m'} \}$ in $\Aa_{\neg \varphi, \appfunc{\neg \varphi}}$.\\
We first show that $(b,v) \overset{\sigma}{\rightarrow} (b',v')$ in $\Bb$ thanks to $t_\Bb$.  Indeed, $v \models g$ because $g$ is contained in $g_{t_\Bb, t_1, \dots, t_m} \cap Z$ (satisfied) ; $\forall x \in r$, $v'(x) = 0$ because it is reset by construction of $\Zz'_{m'}$ and $\forall x \in \{ x_1^\Bb, \dots, x_2^\Bb \} \setminus r$, $v'(x) = v(x)$ by 1..\\
Now, let us show that $\{ (\ell_{k}, I_{k})_{k =1}^{m} \} \overset{\sigma}{\rightarrow}_{\appfunc{\neg \varphi}} \{ (\ell'_{k'}, I'_{k'})_{k' =1}^{m'} \}$ in $\Aa_{\neg \varphi, \appfunc{\neg \varphi}}$.  We must prove there exists minimal models $M_k$, for $1 \leq k \leq m$ of $\delta(\ell_k, \sigma)$ wrt $I_k$ such that $\{ (\ell'_{k'}, I'_{k'})_{k' =1}^{m'} \} \in \appfunc{\neg \varphi} \left( \bigcup_{k=1}^{m} M_k \right)$.  For each $1 \leq k \leq m$, we take $M_k$ to be the minimal model of $\delta(\ell_k, \sigma)$ wrt $I_k$ obtained following the arc $t_k$: it can be taken because its constraints are verified on $vx_k$ and $vy_k$ (present in $g_{t_\Bb, t_1, \dots, t_m} \cap Z$) and as this constraint can only be an interval (convex), it is also verified on each $i \in I_k$.  We then take the element $E$ of $\appfunc{\neg \varphi}\left( \bigcup_{k=1}^{m} M_k \right)$ that merges the two more little intervals present in $\ell_j$ (for $1 \leq j \leq p$) iff $r_{\ell_j}$ is a singleton.  It stays to prove that the obtained configuration is exactly $\{ (\ell'_{k'}, I'_{k'})_{k' =1}^{m'} \}$.  It is based on the following facts:
\begin{itemize}
\item[a.] each $(\ell_k, I_k)$ that does not loop disappear from $\{ (\ell_{k}, I_{k})_{k =1}^{m} \}$ to $E$ and is not present in $\{ (\ell'_{k'}, I'_{k'})_{k' =1}^{m'} \}$ : the clock copies representing such intervals disappear from $\Zz_m$ to $\Zz'_{m'}$ (i.e. $loc_\Aa$ is not defined on them anymore),
\item[b.] for each location $\ell_j$ ($1 \leq j \leq p$) destination of at least one $t_k$ ($1\leq k \leq m$), $(\ell_j, [0,0])$ is present in $\bigcup_{k=1}^{m} M_k$ and
\begin{itemize}
\item \underline{if $r_{\ell_j} = Reset^{\ell_j}(t_1, \dots, t_m)$:} two new clock copies, say $x_{\ell_j}$ and $y_{\ell_j}$ are used by $\Zz'_{m'}$.  They are defined and reset in way $(\ell_j, [0,0])$ is also present in $\{ (\ell'_{k'}, I'_{k'})_{k' =1}^{m'} \}$ ;
\item \underline{else:} by definition of $r_{\ell_j}$, $R^{\ell_j}$ and $minLoop(t_1, \dots, t_m)$, the clock copy $x_j$ representing the beginning of the more little interval in $\{ (\ell_{k}, I_{k})_{k =1}^{m} \}$ that loops on $\ell_j$, say $I_j = [vx_j, vy_j]$, is reset in $\Zz'_{m'}$: $\{ (\ell'_{k'}, I'_{k'})_{k' =1}^{m'} \}$ contains $(\ell_j, [0,vy_j])$ ;
\end{itemize}
\item[c.] each $(\ell_k, I_k)$ that loops is still present in $\bigcup_{k=1}^{m} M_k$ : the clock copies representing $I_k$ in $\Zz_m$ are still present in $\Zz'_{m'}$ but the clock copy representing its beginning could have been reset, so that $\{ (\ell'_{k'}, I'_{k'})_{k' =1}^{m'} \}$ contains either $(\ell_k, [vx_k,vy_k])$ or $(\ell_k, [0,vy_k])$,
\item[d.] when computing $E$ from $\bigcup_{k=1}^{m} M_k$, we know the two more little intervals present in $\ell_j$ (for $1 \leq j \leq p$) are merged iff $r_{\ell_j}$ is non empty and equal to $Reset^{\ell_j}(t_1, \dots, t_m)$.  So, $(\ell_j,[0,0])$ and $(\ell_j,I_j)$ are merged in $(\ell_j,[0,vy_j])$ iff $I_j$ is the more little interval that loops on $\ell_j$, $r_{\ell_j}$ is a singleton and so must contain the clock copy representing its beginning : $x_j$.  In this case and only in this case, $x_j$ is then reset constructing $\Zz'_{m'}$ so that $E$ contains $(\ell_j,[0,vy_j])$ iff $\{ (\ell'_{k'}, I'_{k'})_{k' =1}^{m'} \}$ also contains $(\ell_j, [0, vy_j])$.
\end{itemize}

$(\supseteq)$ Let $\Zz_m = (loc_\Aa, loc_\Bb, Z)$ be a zone and $s'$ be such that $\exists s \in \Zz_m$ s.t. $s \rightarrow s'$ in $\Ss_{\Bb,\neg\varphi}$.  Let us show that $s' \in Post_D(\Zz_m)$.  Let us suppose that $s = \lbrace (\ell_{k}, I_{k}, mark_{k})_{k =1}^{m} \rbrace \cup \lbrace (b, v, mark_\Bb) \rbrace$, with ($\forall 1 \leq k \leq m$) $I_{k} = [vx_{k},vy_{k}]$ and $s' = \lbrace (\ell'_{k'}, I'_{k'}, mark'_{k'})_{k' =1}^{m'} \rbrace \cup \lbrace (b', v', mark'_\Bb) \rbrace$, with ($\forall 1 \leq k' \leq m'$) $I'_{k'} = [v'x_{k'},v'y_{k'}]$.  As $s \rightarrow s'$, there exists $\sigma \in \Sigma$ such that :
\begin{itemize}
\item $(b,v) \overset{\sigma}{\rightarrow} (b',v')$ in $\Bb$, i.e.: there exists an arc $t_\Bb = (b, \sigma, g, r, b')$ s.t. $v \models g$, $\forall x \in r$, $v'(x) = 0$ and $\forall x \in \{ x_1^\Bb, \dots, x_n^\Bb \} \setminus r$, $v'(x) = v(x)$;
\item $\{ (\ell_{k}, I_{k})_{k =1}^{m} \} \overset{\sigma}{\rightarrow}_{\appfunc{\neg \varphi}} \{ (\ell'_{k'}, I'_{k'})_{k' =1}^{m'} \}$ in $\Aa_{\neg \varphi, \appfunc{\neg \varphi}}$, i.e.: $\{ (\ell'_{k'}, I'_{k'})_{k' =1}^{m'} \} = E \in \appfunc{\neg <Phi}(\bigcup_{k=1}^{m} M_k)$ for certain minimal models $M_k$ of $\delta(\ell_k,\sigma)$ wrt $I_k$, which are themselves obtained by taking certain arcs $t_k$ (for $1\leq k \leq m$) from $\ell_k$.
\end{itemize}
Let us define $r_{\ell_j}$, for $1 \leq j \leq p$ by:
\begin{itemize}
\item $r_{\ell_j} = Reset^{\ell_j}(t_1, \dots, t_m)$ contains two new clock copies \emph{iff} $\exists 1\leq k\leq m$ s.t. $t_k$ goes to $\ell_j$ with a reset and no merge is applied by $\appfunc{\neg \varphi}$ on $\ell_j$,
\item $r_{\ell_j}$ contains the clock representing the beginning of the more little interval present in $\ell_j$ that loops on $\ell_j$ taking one of the $t_k$ (for $1\leq k \leq m$) \emph{iff} $\exists 1\leq k\leq m$ s.t. $t_k$ goes to $\ell_j$ with a reset and a merge is applied by $\appfunc{\neg \varphi}$ on $\ell_j$,
\item $r_{\ell_j} = \emptyset$ in the other case, i.e. when none of the $t_k$ (for $1\leq k \leq m$) goes to $\ell_j$ with a reset.
\end{itemize}
It is easy to prove that the $\Zz'_{m'} \in Post_D(\Zz_m)$ induced by $\sigma, t_\Bb, t_1, \dots, t_m, r_{\ell_1}, \dots, r_{\ell_p}$ contains $s'$ thanks to the following facts:
\begin{itemize}
\item as $s \in \Zz_m$, the constraints of $Z$ are satisfied by its clock values and having take arcs $t_\Bb, t_1, \dots, t_m$ ensure, the bounds of the intervals and the clock values of $s$ satisfy $g_{t_\Bb, t_1, \dots, t_m} \cap Z$ (in particular $g_{t_\Bb, t_1, \dots, t_m} \cap Z$ is satisfiable and $\Zz'_{m'}$ really exists),
\item a clock $x_i^\Bb$ (for $1 \leq i \leq n$) is reset in the construction of $\Zz'_{m'}$ iff $v'(x_i^\Bb) = 0$,
\item the facts b., c. and d. of the proof of inclusion $\subseteq$ are still true here.
\item the markers are treated in a similar manner from $\Zz_m$ to $\Zz'_{m'}$ as from $s$ to $s'$, observing the definition of the $\Zz'_{m'} \in Post_D(\Zz_m)$ induced by $\sigma, t_\Bb, t_1, \dots, t_m,
\break r_{\ell_1}, \dots, r_{\ell_p}$ and conditions (i)-(a),(b),(c),(d) and (ii) of the definition of $\rightarrow$ in $\Ss_{\Bb,\neg\varphi}$.
\end{itemize}\qed
\end{proof}


Algorithm \ref{AlgoBase} is still correct replacing $H_0$ by $\Zz^{init}_1$, the function $Post(\mathcal{W})$ by $Approx_\beta( Post(\mathcal{Z}) )$\footnote{see \cite{B02} for details on the definition of $Approx_\beta$} and if $\mathcal{F}$ designates the set of accepting zones.  The proof relies on a very simple adaptation of Theorem 2 of \cite{B02} to our definition of zone.

\paragraph{Eliminating useless clock copies} In many practical
examples, MITL formulas contain modalities of the form
$U_{[0,+\infty)}$ or $\tilde{U}_{[0,+\infty)}$ that do not impose any
real-time constraints (in some sense, they are LTL modalities). For
instance, consider the $\qed$ modality in $\varphi=\qed ( a
\Rightarrow \lozenge_{[1,2]} b)$. When this occurs in a formula
$\varphi$, we can simplify the representation of configurations of
$\Aa_\varphi$, by dropping the values of the clocks associated to
those modalities (these clocks can be regarded as \emph{inactive} in
the sense of \cite{DY96}).  We call those configurations \emph{reduced
  configurations}. In the example of Fig.~\ref{ExGilles}, this amounts
to skipping the clocks associated to $\ell_{\Box}$, and the
configuration $\{(\ell_{\Box},0.1)(\ell_\lozenge,0)\}$ of $\Aa_\varphi$
in Fig.~\ref{ExGilles} can be represented by a pair
$(\{\ell_{\Box}\},\{(\ell_\lozenge,0)\})$. As we will see in the next
section, maintaining reduced configurations, when possible, usually
improves the performance of the algorithms.\\
One can also verify that the values of the clock copies present in the initial location $(\neg \varphi)_{init}$ of $\Aa_{\neg \varphi}$ are not relevant. Let us note $\SubLTL{\neg \varphi} = \lbrace \varphi \in Sub(\neg \varphi) \; \vert \; \varphi = \varphi_{init} \text{ or the outermost operator of } \varphi$ is $U_{[0,+\infty)}$ or $\tilde{U}_{[0,+\infty)}\rbrace$.  Then, the states of $\Aa$ can be a couple $(\ell,I)$ where $\ell \in L \setminus \SubLTL{\neg \varphi}$ and $I \in \mathcal{I}(\R^{+})$ as well as a singleton $\ell \in \SubLTL{\neg \varphi}$.  A reduced configuration is then a pair $(S,C)$ where $S \subseteq \SubLTL{\neg \varphi}$ and $C$ is a configuration of $\mathcal{A}_{\neg \varphi}$ such that $\forall (\ell,v) \in C$, $\ell \notin \SubLTL{\neg \varphi}$.  It is clear that all the previous results still hold on reduced configurations and we also implemented Algorithm \ref{AlgoBase} with them : the region or zone abstraction is only used on $L \setminus \SubLTL{\neg \varphi\varphi}$.

\section{Experimental results\label{sec:experimental-results}}

\begin{table}[t]
  \caption{Benchmark for satisfiability (top) and model-checking (bottom). Reported values are execution time in ms / number of visited states.}
\begin{center}
{\scriptsize
 \begin{tabular}{|c|c|c|c|c|c|c|}
   \hline
   Sat ? & Formula & Size & Regions & Reduced regions & Zones &  Reduced zones \\
  \hline
  Sat & $E(5,[0,+\infty))$ & 5 & 74 / 61  & 16 / 31 & 58 / 36 & 39 / 31\\
  Sat & $E(10,[0,+\infty))$ & 10 & 3296 / 2045  & 369 / 1023 & 1374 / 1033 & 2515 / 1023\\
  Sat & $E(5,[5,8))$ & 5 & 382 / 228 & 394 / 228 & 83 / 33 & 86 / 33\\
  Sat & $E(10,[5,8))$ & 10 & 70129 / 7172 & 79889 / 7172 & 1982 / 1025 & 2490 / 1025\\\hline
  Sat & $A(10,[0,+\infty))$ & 10 & 1 / 1  & 1 / 1 & 4 / 1 & 5 / 1\\
  Sat & $A(10,[5,8))$ & 10 & 1926 / 7 & 2036 / 5 & 3036 / 2 & 3153 / 2\\
  Sat & $U(10,[0,+\infty))$ & 9 & 231 / 7  & 5 / 4 & 16 / 1 & 6 / 1\\
  Unsat & $U(2,[5,8])$ & 2 & 13 / 6 & 15 / 8 & 4 / 2 & 4 / 2\\
  Unsat & $U(3,[5,8])$ & 3 & OOM & OOM & OOM & OOM\\\hline
  Sat & $T(10,[0,+\infty[)$ & 9 & $>5$min  & 3 / 2 & 33 / 3 & 7 / 2\\
  Sat & $T(10,[5,8))$ & 9 & 52 / 2  & 40 / 2 & 11 / 2 & 11 / 2\\\hline
  Sat & $R(5,[0,+\infty))$ & 20 & $>5$min  & 301 / 270 & 4307 / 1321 & 145 / 81 \\
  Sat & $R(10,[0,+\infty))$ & 40 &  $>5$min  & OOM & OOM & $>5$min \\
  Sat & $R(5,[5,8))$ & 20 & OOM  & 6996 / 117 & 1299 / 36  & 1518 / 36\\
  Sat & $R(10,[5,8))$ & 40 &  $>5$min  & $>5$min & $>5$min & $>5$min\\\hline
  Sat & $Q(5,[0,+\infty))$ & 10 & 44 / 39  & 11 / 20 & 43 / 29 & 22 / 20\\
  Sat & $Q(10,[0,+\infty))$ & 20 &  1209 / 1041  & 286 / 521 & 841 / 540 & 933 / 521\\
  Sat & $Q(5,[5,8))$ & 10 & 497 / 98  & 378 / 57 & 167 / 32  & 181 / 32 \\
  Sat & $Q(10,[5,8))$ & 20 & 35776 / 2646  & 20324 / 2912 & 81774 / 782 & 86228 / 782 \\
  \hline
\end{tabular}

\medskip
\begin{tabular}{|c|c|c|c|c|c|c|c|}
  
  \hline
  Floors    &  Formula & Form./ TA size  & OK ? & Regions & Zones & Red. zones\\
  \hline
  2 & $\Box \bigwedge_{i = 1,2} \left( o_i \Rightarrow \lozenge_{]1,2]} c_i \right)$ & 3 / 10 & $\times$ & 166 / 89 & 57 / 35 & 56 / 32 \\
  2 & $\Box \bigwedge_{i = 1,2} \left( b_i \Rightarrow \lozenge_{[0,4]} o_i \right)$ & 3 / 10 & $\checkmark$ & 302 / 225 & 31 / 31 & 26 / 25 \\
  2 & $\Box \bigwedge_{i = 1,2} \left( l_i \Rightarrow \lozenge_{[0,6]} o_i \right)$ & 3 / 10 & $\checkmark$ & 820 / 690 & 68 / 51 & 60 / 40 \\
  
  3 & $\Box \bigwedge_{i = 1,\dots,3} \left( o_i \Rightarrow \lozenge_{]1,2]} c_i \right)$ & 4 / 37 & $\times$ & 681 / 480 & 463 / 154 & 337 / 140 \\
  3 & $\Box \bigwedge_{i = 1,\dots,3} \left( b_i \Rightarrow \lozenge_{[0,12]} o_i \right)$ & 4 / 37 & $\checkmark$ & $>$5min & 1148 / 541 & 1008 / 376 \\
  3 & $\Box \bigwedge_{i = 1,\dots,3} \left( l_i \Rightarrow \lozenge_{[0,14]} o_i \right)$ & 4 / 37 & $\checkmark$ & $>$5min & 1321 / 774 & 1387 / 540 \\
  
  4 & $\Box \bigwedge_{i = 1,\dots,4} \left( o_i \Rightarrow \lozenge_{]1,2]} c_i \right)$ & 5 / 114 & $\times$ & 5570 / 1638 & 1381 / 498 & 1565 / 461 \\
  4 & $\Box \bigwedge_{i = 1,\dots,4} \left( b_i \Rightarrow \lozenge_{[0,20]} o_i \right)$ & 5 / 114 & $\checkmark$ & $>$5min & 26146 / 5757 & 22776 / 3156 \\
  4 & $\Box \bigwedge_{i = 1,\dots,4} \left( l_i \Rightarrow \lozenge_{[0,22]} o_i \right)$ & 5 / 114 & $\checkmark$ & $>$5min & 52167 / 7577 & 48754 / 4337 \\
  
  5 & $\Box \bigwedge_{i = 1,\dots,5} \left( o_i \Rightarrow \lozenge_{]1,2]} c_i \right)$ & 6 / 311 & $\times$ & 61937 / 4692 & 3216 / 1402 & 3838 / 1310 \\
  5 & $\Box \bigwedge_{i = 1,\dots,5} \left( b_i \Rightarrow \lozenge_{[0,28]} o_i \right)$ & 6 / 311 & $\checkmark$ & $>$5min & $>$5min & OOM \\
  5 & $\Box \bigwedge_{i = 1,\dots,5} \left( l_i \Rightarrow \lozenge_{[0,30]} o_i \right)$ & 6 / 311 & $\checkmark$ & OOM & $>$5min & $>$5min \\
  \hline
\end{tabular}
}
\end{center}
\label{Table1}\label{Table2}
\vspace*{-.6cm}
\end{table}

To evaluate the practical feasibility of our approach, we have
implemented the region and zone-based algorithms for model-checking
and satisfiability in a prototype tool. To the best of our knowledge,
this is the first implementation to perform MITL model-checking and
satisfiability using an automata-based approach.  We first consider a
benchmark for the \emph{satisfiability problem}, adapted from the
literature on LTL \cite{GKLMR10} and consisting of six parametric
formulas (with $k \in \N$ and $I \in \mathcal{I}(\ninf)$):
%
$$
{\small
  \begin{array}{ll}
  E(k,I) = \bigwedge_{i = 1, \dots, k} \lozenge_{I} \; p_i & U(k,I) = ( \dots (p_1 U_{I} p_2) U_{I} \dots ) U_{I}
  p_k\\ 
  A(k,I) = \bigwedge_{i = 1, \dots, k} \Box_{I} \; p_i & T(k,I) = p_1 \tilde{U}_{I} (p_2 \tilde{U}_{I} ( p_3
  \dots p_{k-1} \tilde{U}_{I} p_k) \dots )\\ 
Q(k,I) = \bigwedge_{i =
    1, \dots, k} \left( \lozenge_{I} p_i \vee \; \Box_{I} p_{i+1}
  \right) &
 R(k,I) = \bigwedge_{i =
    1, \dots, k} \left( \Box_{I} \;( \lozenge_{I} p_i) \vee \;
  \lozenge_{I} ( \Box_{I} p_{i+1} ) \right)\\
  \end{array}
}
$$
Table~\ref{Table1} (top) reports on the running time, number of
visited regions and returned answer (column `Sat ?')  of the prototype
on several instances of those formulas, for the different data
structures. A time out was set after 5 minutes, and OOM stands for `out
of memory'.
Our second benchmark evaluates the performance of our
\emph{model-checking tool}. We consider a family of timed automata
$\Bb^{lift}_k$ that model a \emph{lift}, parametrised by the number
$k$ of floors. A button can be pushed at each floor to call it. 
The alphabet contains a letter $l_i$ for each floor $i$ saying that lift has been
called at this floor.  A button to send the lift at each floor is
present in it: the alphabet contains a letter $b_i$ for each floor $i$
saying that button $i$ has just been pushed. The lift takes 1 second
to go from a floor to the next one and to open/close its doors : it
stays 1 second open between these opening/closure.  Letter $o_i$
(resp. $c_i$) signifies the lift open (resp. close) its doors at floor
$i$ ; letter $p_i$ means the lift pass floor $i$ without stopping. The
lift goes up (resp. down) as long as it is called upper (resp. lower).
When it is not called anywhere, it goes to the medium floor and stays
opened there.\\
Fig.~\ref{FigLift} gives a representation of
$\Bb^{lift}_2$ as example.  We represent two similar edges (same starting and
ending locations, same reset) carrying two different letters by a
unique edge carrying these two letters.  A location of this automaton
is a 4-tuple $(n, direction, go, open?)$ where $n$ is the number of
the floor the cabin is present in ($0 \leq n < 2$, 0 representing the
ground floor), $direction \in \{u,d,h\}$ is $u$ if the lift is going
up, $d$ if it is going down and $h$ if it stays at the floor it is
present in, $go$ is the set of floors to which the cabin must go
(because the lift has been called at this floor or because the button
present in the cabin has been pushed to send the lift at this floor),
and $open? \in \{ \top, \bot\}$ is $\top$ iff the doors of the cabin
are opened.

\begin{figure}[t]
\centering
\begin{tikzpicture}[scale=.9]

\begin{scope}
\draw (0,0) node [rectangle,rounded corners, double, draw, inner sep=3pt] (A)
      {$\left(0,h,\emptyset,\top\right)$};

\draw (4,0) node [rectangle,rounded corners, double, draw, inner sep=3pt] (B)
      {$\left(0,u,\{1\},\top\right)$};

\draw (-2,2) node [rectangle,rounded corners, double, draw, inner sep=3pt] (C)
      {$\left(0,u,\{1\},\bot\right)$};

\draw (6,2) node [rectangle,rounded corners, double, draw, inner sep=3pt] (D)
      {$\left(0,u,\{0,1\},\bot\right)$};

\draw (0,-2) node [rectangle,rounded corners, double, draw, inner sep=3pt] (E)
      {$\left(1,d,\emptyset,\bot\right)$};

\draw (4,-2) node [rectangle,rounded corners, double, draw, inner sep=3pt] (F)
      {$\left(1,d,\{0\},\bot\right)$};

\draw (-.5,-4) node [rectangle,rounded corners, double, draw, inner sep=3pt] (G)
      {$\left(1,d,\{1\},\bot\right)$};

\draw (4.5,-4) node [rectangle,rounded corners, double, draw, inner sep=3pt] (H)
      {$\left(1,d,\{0,1\},\bot\right)$};

\draw (-1,-6) node [rectangle,rounded corners, double, draw, inner sep=3pt] (I)
      {$\left(1,d,\emptyset,\top\right)$};

\draw (5,-6) node [rectangle,rounded corners, double, draw, inner sep=3pt] (J)
      {$\left(1,d,\{0\},\top\right)$};

\draw [-latex'] (-2,0)--(A);

\draw [-latex'] (A) -- (B) node [pos=0.5,above] {$l_1; \ b_1$} node
      [pos=0.5,below] {$x:=0$};

\draw [-latex'] (B) -- (C) node [sloped, pos=0.5,above] {$c_0,x=2$} node
      [sloped,pos=0.5,below] {$x:=0$};

\draw [-latex'] (C) -- (D) node [pos=0.5,above] {$l_0$};

\draw [-latex'] (C) .. controls +(180:3cm) and +(180:3cm) .. (I) node
      [pos=0.2,left] {$o_1,x=2$} node [pos=0.3,left] {$x:=0$};

\draw [-latex'] (D) .. controls +(0:3cm) and +(0:3cm) .. (J) node
      [pos=0.2,right] {$o_1,x=2$} node [pos=0.3,right] {$x:=0$};

\draw [-latex'] (E) -- (A) node [pos=0.65,left] {$o_0,x=2$} node
      [pos=0.35,left] {$x:=0$};

\draw [-latex'] (E) -- (F) node [pos=0.5,above] {$l_0; \, b_0$};

\draw [-latex'] (E) -- (G) node [pos=0.5,left] {$l_1$};

\draw [-latex'] (F) -- (H) node [pos=0.5,right] {$l_1$};

\draw [-latex'] (F) -- (A) node [pos=0.4,above,sloped] {$o_0,x=2$} node
      [pos=0.4,below,sloped] {$x:=0$};

\draw [-latex'] (G) -- (H) node [pos=0.5,above] {$l_0 ; \, b_0$};

\draw [-latex'] (G) .. controls +(180:4cm) and +(220:3cm) .. (C) node
      [pos=0.4,right] {$p_0,x=1$} node [pos=0.3,right] {$x:=0$};

\draw [-latex'] (H) .. controls +(0:4cm) and +(330:3cm) .. (B) node
      [pos=0.7,right] {$o_0,x=1$} node [pos=0.6,right] {$x:=0$};

\draw [-latex'] (I) -- (J) node [pos=0.5,above] {$l_0 ; \, b_0$};

\draw [-latex'] (I) .. controls +(10:2.75cm)  .. (E) node
      [pos=0.6,left] {$c_1,x=2$} node [pos=0.5,left] {$x:=0$};

\draw [-latex'] (J) .. controls +(170:2.75cm)  .. (F) node
      [pos=0.6,right] {$c_1,x=2$} node [pos=0.5,right] {$x:=0$};

\end{scope}
\end{tikzpicture}
\caption{$\Bb^{lift}_2$}
\label{FigLift}
\end{figure}

\end{document}